\newtheorem{theorem}{Theorem}[section]
\newtheorem{corollary}[theorem]{Corollary}
\newtheorem{lemma}[theorem]{Lemma}
\newtheorem{proposition}[theorem]{Proposition}
\newtheorem{definition}[theorem]{Definition}
\newtheorem{remark}[theorem]{Remark}
\newtheorem{example}[theorem]{Example}
\newtheorem{alg}[theorem]{Algorithm}
\numberwithin{equation}{section}
\newproof{proof}{Proof}
\def\de{{\delta}}
\def\F{{\mathcal{F}}}
\def\r{{\mathbf{r}}}
\def\s{{\mathbf{s}}}
\def\p{{\mathbf{p}}}
\def\q{{\mathbf{q}}}
\def\vv{{\mathbf{v}}}
\def\T{{\mathbf{T}}}
\def\al{\boldsymbol{\alpha}}
\def\be{\boldsymbol{\beta}}
\def\ga{\boldsymbol{\gamma}}
\def\ka{\kappa}
\def\PS{{\mathbb{P}}}
\def\PV{{\mathcal V}}
\def\PE{{\mathcal E}}
\begin{document}

\begin{frontmatter}

\title{Certified Approximation of Parametric Space Curves \\with Cubic B-spline Curves}%

\author[SLY]{Liyong Shen}
 \ead{shenly@amss.ac.cn}
 \address[SLY]{School of Mathematical Sciences, Graduate University of
 Chinese Academy of Sciences}

 \author[CMY]{Chun-Ming Yuan}
 \ead{cmyuan@mmrc.iss.ac.cn}
  \author[CMY]{Xiao-Shan Gao}
 \ead{xgao@mmrc.iss.ac.cn}
\address[CMY]{Key Laboratory of Mathematics Mechanization, AMSS,  Chinese Academy of Sciences}

\begin{abstract} Approximating complex curves with simple parametric
curves is widely used in CAGD, CG, and CNC. This paper presents an
algorithm to compute a certified approximation to a given parametric space curve
with cubic B-spline curves. By certified, we mean that the
approximation can approximate the given curve to any given precision
and preserve the geometric features of the given curve such as
the topology, singular points,  etc.
The approximated curve is divided into segments called quasi-cubic
B\'{e}zier curve segments which have properties similar to a cubic
rational B\'{e}zier curve. And the approximate curve is naturally
constructed as the associated cubic rational B\'{e}zier curve of the
control tetrahedron of a quasi-cubic curve. A novel
optimization method is proposed to select proper weights in the
cubic rational B\'{e}zier curve to approximate the given curve. The
error of the approximation is controlled by the size of its
tetrahedron, which converges to zero by subdividing the curve
segments. As an application, approximate implicit equations of the
approximated curves can be computed. Experiments show that the
method can approximate space curves of high degrees with high
precision and very few cubic B\'{e}zier curve segments.
\end{abstract}

\begin{keyword}
Space parametric curve, certified approximation, geometric
feature, cubic B\'{e}zier curve, cubic B-spline curve.
\end{keyword}

%\date{2010-03-03}

%\dedicatory{}%
%\commby{}%
% ----------------------------------------------------------------

\end{frontmatter}
% ----------------------------------------------------------------

\section{Introduction}
Parametric curves are widely used in different fields such as
computer aided geometric design (CAGD), computer graphics (CG),
computed numerical control (CNC)
systems~\cite{Hoschek1993,Piegl1997}. One basic problem in the study
of parametric curves is to approximate the curve with lower degree
curve segments.
For a given digital curve, there exist  methods to find such
approximate curves efficiently
\cite{Pottmann2002,Renka2005,Aigner2007,koog2009}.
If the curve is given by explicit expressions, either parametric or
implicit, these  methods are still usable. However, some important
geometric features such as singular points cannot be preserved. In
this paper, we will focus on computing approximate curves which can
approximate the given curve to any precision and preserve the
topology and certain geometric features of the given space curve.
Such an approximate curve is called a certified approximation.
Here, the geometric features include cusps, self-intersected points, inflection points, torsion vanishing points, as well as the segmenting points and the left(right) Frenet frames of these points.

There are lots of papers tried to approximate a smooth parametric curve segment~\cite{Hoschek1993,Degen93,Degen05,Farin08,Hijllig95,Xu01,Pelosi05,Rababah2007,Chenxd10}.
Among them, Geometric Hermite Interpolation (GHI) is a typical method for the curve approximation. Degen~\cite{Degen05} presented an overview over the developments of geometric Hermite approximation theory for planar curves. Several 2D interpolation schemes to produce curves close to circles were proposed in~\cite{Farin08}.
The certified approximation were considered by some authors and they focused on the case of planar curves~\cite{Gao2004,Yang2004,Liming2006,Ghosh2007}.

For space curves, Hijllig and Koch~\cite{Hijllig95} improved the standard cubic Hermite interpolation with approximation order five by interpolating a third point. Xu and Shi~\cite{Xu01} considered the GHI for space curves by parametric quartic
B\'{e}zier curve. Pelosi et al.~\cite{Pelosi05} discussed the problem of Hermite interpolation by using PH cubic segments. Chen et al.~\cite{Chenxd10} enhanced the GHI by adding an inner tangent
point and the approximation was then more accurate.
These methods were mainly designed for the local approximation of a parametric curve segment. The approximate curves obtained generally cannot preserve geometric features and
topologies for the global approximation. The algorithms had to be improved to meet certain
special conditions. For instance, Wu et al~\cite{Zhongke2003} presented an
algorithm to preserve the topology of voxelisation and
Chen et al~\cite{Chen2010} gave the formula of the intersection curve of two
ruled surfaces by the bracket method.
As a further development for certified approximation, more properties such as the topology and singularities of the curve need to be discussed in the approximation process. We would like to give the local approximation with certain restrictions. And the local approximation methods can then be used in the global certified approximation naturally.

The certified approximation is also based on the topology determination.
For implicit curves, the problem of topology determination
was studied in some papers such as~\cite{Alcazar2005,chenliang2008,Daouda2008,curve-ib}.
Efficient algorithms were proposed in \cite{wangh2009} and \cite{Rubio2009}
to compute the real singular points of a rational parametric space curve by the $\mu$-basis method
and the generalized $D$-resultant method respectively.
An algorithm was proposed to compute the topology  for a rational parametric space curve \cite{Alcazar2009}.
However, even we have the methods to determine the topology of space curves and the methods to approximate the space curves with free form curves, the combination of them is not straightforward. The topology may change while the line edges in topology graph are replaced by the approximate free form curve segments. For example, some knots may be brought in or lost such that the crossing number of the approximate curve is not equivalent to the approximated curve.

In this paper, we compute a certified approximation to a given parametric space curve
with a rational cubic B-spline curve based on the topology. The cubic rational
B\'{e}zier curve is taken as the approximate curve segment because it is
the simplest non-planar curve and has nice properties~\cite{Forrest1980,chen2002}.
The presented method consists of two major steps.

In the first step, the given space curve segment is divided into
sub-segments which have similar properties to a cubic rational B\'{e}zier curve.
Such curve segments are called quasi-cubic B\'{e}zier curves.
The preliminary work of our division procedure is to compute the singular points
and the topology graph of the given curve, which have already been studied in
\cite{Manocha1992,wangh2009,Rubio2009,Alcazar2009}.
Inflection points and torsion vanishing points of the curve are also
added as character points.
We further divide the curve segments to ensure that the subdivided
curve segments have similar properties to a cubic B\'{e}zier curve.
For instance, each curve segment has an associated control tetrahedron
whose four vertices consist of the two endpoints of the curve segment and the two
intersection points of the tangent lines and the osculating planes at
the different endpoints respectively. And the curve segment is inside its associated
control tetrahedron.
Furthermore, we need to ensure some monotone properties about
the associated control tetrahedron, which
are necessary for the convergence of the algorithm.
The tetrahedrons are then just the control polytope of the approximate cubic B\'{e}zier curves. In other words, the approximate curve is controlled by the sequence of the tetrahedrons. And this property ensure the topological isotopy for the approximated and approximate curves.
Some more careful discussions are proposed for both cubic B\'{e}zier and quasi-cubic curve segments.

In the second step of the algorithm, we use a cubic rational B\'{e}zier spline
to approximate a quasi-cubic B\'{e}zier curve obtained in the first step.
Some different approximation methods can be used here such as GHI with inner tangent points~\cite{Chenxd10}.
However, as we mentioned, a quasi-cubic B\'{e}zier curve has an associated control tetrahedron.
The associated cubic rational B\'{e}zier curve of this tetrahedron is naturally used
as the approximate curve. So, each curve segment and its approximated cubic curve
segment share the same control tetrahedron.
A novel method, called shoulder point approximation,
is proposed to select parameters in the cubic B\'{e}zier curve so that it
can optimally approximate the given curve segment. If the distance between the two
curve segments is larger than the given precision, we further subdivide the given curve segment
and approximate each sub-segment similarly. The error of the approximation is
controlled by the size of the associated tetrahedrons, which are proved to converge to zero.
In the subdivision process, there is one important difference between our algorithm with the others. We only need to check the collision of the sub-tetrahedrons subdivided from which are the intersected before the subdivision, since the sub-tetrahedrons are included in its father tetrahedrons. In general algorithms, one has to check the collision of all pair of the approximate curve segments or their control polytopes after a subdivision.
Finally, the rational cubic B\'{e}zier curves are converted to a $C^1$
rational B-spline with a proper knot selection and used as the final approximate curve.
After a cubic parametric approximate segment is computed, we can compute its algebraic variety using the $\mu$-basis method~\cite{Cox1998}, which can be used as the approximate implicit equations for the
given parametric curve.

The proposed method is implemented and experimental results show that the method
can be used to compute certified approximate curves to high degree space curves efficiently.
The computed rational B-spline has very few pieces and can approximate the given curves with high precision.

The rest of this paper is organized as follows.
In Section 2, some notations and preliminary results are given.
In Section 3, we give the algorithm to compute
the dividing points such that each divided segment is a quasi-cubic curve.
In Section 4, the method of parameter selection for the cubic rational B\'{e}zier segments is proposed and then an
algorithm based on shoulder point approximation is given. We also prove
that the termination of the algorithm. The final
algorithm is given in Section 5, and some examples are used to illustrate the algorithm.
In section 6, the paper is concluded.

% ----------------------------------------------------------------
\section{Preliminaries}

Basic notations and preliminary results about rational parametric
curves and cubic B\'{e}zier curves are presented in this section.

\subsection{Basic notations}
 A parametric
space curve is defined as
\begin{equation}\label{curve}
\r(t)=(x(t),y(t),z(t)),
\end{equation}
where $x(t),y(t),z(t)\in \mathbb{Q}(t)$ and $\mathbb{Q}$ is the field of
rational numbers.
%The rational curve also has a associate homogenous form as
%$$\R(t)=(X(t),Y(t),Z(t),W(t)),$$
%where $x(t)=X/W, y(t)=Y/W$ and $z(t)=Z/W$.
In the univariate case, L\"{u}roth's theorem provides a proper
reparametrization algorithm and some improved algorithms which can also be
found such as~\cite{Manocha1992}. So we assume that~\eqref{curve} is
a proper parametric curve in an interval $[0,1]$£¬ since any
interval $[a,b]$ can be transformed to $[0,1]$ by a parametric
transformation $t\leftarrow\frac{t-a}{b-a}$. Further, the
denominators of~\eqref{curve} are assumed to have no real roots in
$[0,1]$.

The \emph{tangent vector} of $\r(t)$ is $\r'(t)=(x'(t),y'(t),z'(t))$
and the \emph{tangent line} of $\r(t)$ at a point $\r(t_0)$ is
$\T(t_0)=\r(t_0)+\lambda\r'(t_0),\lambda\in \mathbb{Q}$. A point
$\r(t_0)$ is called \emph{a singular point} if it corresponds to more
than one parameters with multiplicities counted. A singular point is
called a \emph{cusp} if $\r'(t_0)$ is the vector of zeros, which
means that $t_0$ is a multiple parameter; otherwise, it is an
\emph{ordinary singular} point~\cite{Rubio2009}. The curvature and
torsion of the curve are
$$\kappa(t)=\frac{\|\r'(t)\times\r''(t)\|}{\|\r'(t)\|^3},\
\tau(t)=\frac{(\r',\r'',\r''')}{\|\r'\times\r''\|}.$$
A point is called an \emph{inflection} if its curvature is zero and
called \emph{torsion vanishing point} if its torsion is zero. All
these points are called \emph{character points} of the curve, and
$\r(t)$ is a \emph{normal curve} if it has a finite number of
character points. A rational space curve is always a normal curve. In this
paper, we assume that $\ka(t)\not\equiv 0$ and $\tau(t)\not\equiv
0$, which means that the curve is not a planar curve.

If $\r(t_0)$ is not a character point, then the Frenet frame at
$\r(t_0)$ can be defined as
$\F(t_0):=\{\r(t_0);\al(t_0),\be(t_0),\ga(t_0)\}$ where
$\al(t_0)=\frac{\r'(t_0)}{\|\r'(t_0)\|}$,
$\be(t_0)=\ga(t_0)\times\al(t_0)$,
$\ga(t_0)=\frac{\r'(t_0)\times\r''(t_0)}{\|\r'(t_0)\times\r''(t_0)\|}$
are the unit tangent vector, unit principal normal vector, and unit
bi-normal vector, respectively. And the osculating plane is
$O(t_0):=((x,y,z)-\r(t_0))\cdot \ga(t_0)=0$.

For a point with $\ka(t_0)=0$, the bi-normal vector is not defined,
neither is the osculating plane. Here, we define them using limit.
Consider the limit $\lim_{t\to t_0} \ga(t)$ of the bi-normal vector
at~$t_0$. Since the left limit and the right limit are generally
different, we define the {\em left bi-normal vector} and the {\em
right bi-normal vector} as $\ga^{-}(t_0):=\lim_{t\to t_0-0} \ga(t)$
and $\ga^{+}(t_0):=\lim_{t\to t_0+0} \ga(t)$ respectively. The
limitations always exist if $\r(t)$ is a rational space curve of
form \eqref{curve}.
As a consequence, the left and right osculating planes at~$t_0$ are
$O^{-}(t_0):=((x,y,z)-\r(t_0))\cdot \ga^{-}=0$ and
$O^{+}(t_0):=((x,y,z)-\r(t_0))\cdot \ga^{+}=0.$ If the $\ka(t_0)\neq
0$, one can find that $\ga^{+}(t_0)=\ga^{-}(t_0)$ and
$O^{+}(t_0)=O^{-}(t_0)$.

Similarly, if $t_0$ is at a cusp, we define the left and right
tangent vectors as $\al^{-}(t_0):=\lim_{t\to t_0-0}\al(t)$ and
$\al^{+}(t_0):=\lim_{t\to t_0+0}\al(t)$, respectively. Hence, the
corresponding left and right principal vectors are
$\be^{-}(t_0):=\ga^{-}(t_0)\times\al^{-}(t_0)$ and
$\be^{+}(t_0):=\ga^{+}(t_0)\times\al^{+}(t_0)$. We also denote the
left and right tangent lines as
$\T^{-}(t_0)=\r(t_0)+\lambda\al^{-}(t_0)$ and
$\T^{+}(t_0)=\r(t_0)+\lambda\al^{+}(t_0)$ where $\lambda$ is the
real number parameter. Then, a rational parametric curve $\r(t)$
always has left and right Frenet frames.

% ----------------------------------------------------------------
\subsection{Rational cubic B\'{e}zier curve }
A rational B\'{e}zier curve with degree $n$ has the following form
$$\p(t)=\frac{\sum_{i=0}^n\omega_i\p_iB_i^n(t)}{\sum_{i=0}^n\omega_iB_i^n(t)},\ t\in [0,1],
$$
where $\omega_i\ge 0$ are associated weights of the control points
$\p_i\in \mathbb{R}^3$ and $B_i^n(t)=\binom{n}{i}(1-t)^{n-i}t^i$.
When $n=3$, it defines a cubic rational B\'{e}zier curve where
$\lozenge\p_0\p_1\p_2\p_3$ is called the control tetrahedron of
$\p(t)$. One can set the weight $\omega_0=\omega_3=1$ up to a
parametric transformation. We now consider the cubic curve and omit
superscript $3$ from $B_i^3(t)$
\begin{equation}\label{Bezier}
\p(t)=\frac{\p_0B_0(t)+\omega_1\p_1B_1(t)+\omega_2\p_2B_2(t)
+\p_3B_3(t)}{B_0(t)+\omega_1B_1(t)+\omega_2B_2(t)+B_3(t)},\
t\in [0,1].
\end{equation}

The rational cubic B\'{e}zier curve~\eqref{Bezier} has the following
properties.

\begin{lemma}\label{Bezier-lm} Let $\p(t)$ be a non-planar cubic rational curve of
the form~\eqref{Bezier}. Then
\begin{description}
   \item[1)] $\p(t)$ passes through the endpoints $\p_0,\p_3$ with the
  corresponding tangent directions $\p'(0)$ and $\p'(1)$ parallel to
  $\p_0\p_1$ and $\p_2\p_3$ respectively.
  \item[2)] $\p_0\p_1\p_2$ and $\p_1\p_2\p_3$ are
  the osculating planes of $\p(t)$ at the endpoints $\p_0$ and $\p_3$, respectively.
  \item[3)] $\p(t)$ lies inside its control tetrahedron
  $\lozenge\p_0\p_1\p_2\p_3$.
  \item[4)] $\p(t)$ has no singular points and $\ka(t)\neq 0,
  \tau(t)\neq0$ in $[0,1]$.
  \item[5)] For any $t^{\star}_1<t^{\star}_2\in[0,1]$, the control tetrahedron of
  $\p^{\star}(t)=\p(t),t\in[t^{\star}_1,t^{\star}_2]$ is inside the control
  tetrahedron of $\p(t)$ .
  \item[6)] $\|\p_0\p_{01}\|, \|\p_1\p_{12}\|$, and $\|\p_2\p_{23}\|$ are strictly
  monotone for $t^{\star}\in (0,1)$ where $\p_{01},\p_{12}$, and $\p_{23}$ are the
  intersection points of the osculating plane $O(t^{\star})$ with $\p_0\p_1,\p_1\p_2,$
  and $\p_2\p_3$ respectively.
   \item[7)]  $\|\p_0\p_{03}\|$ and $\|\p_1\p_{12}\|$ are strictly monotone
   for
   $t^{\star}\in (0,1)$ where $\p_{03}=\p_1\p_2\p(t^{\star})\bigcap
   \p_0\p_3$ and $\p_{12}=\p_0\p_3\p(t^{\star})\bigcap \p_1\p_2$.
  \end{description}
\end{lemma}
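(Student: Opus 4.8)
The plan is to handle the seven items in four groups: the endpoint/interpolation data 1)--2), the containment facts 3) and 5), the Frenet non-degeneracy 4), and the two monotonicity statements 6)--7), which I expect to carry the real weight. Throughout, write $\p(t)=N(t)/D(t)$ with $N(t)=\p_0B_0+\omega_1\p_1B_1+\omega_2\p_2B_2+\p_3B_3$ and $D(t)=B_0+\omega_1B_1+\omega_2B_2+B_3$, and assume (as the word ``non-planar'' forces) $\omega_1,\omega_2>0$ and that $\p_0,\p_1,\p_2,\p_3$ span a non-degenerate tetrahedron.

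For 1) and 2) I would compute at the endpoints directly. Since $\omega_0=\omega_3=1$ and $B_0(0)=B_3(1)=1$ while the remaining $B_i$ vanish at the respective endpoint, $\p(0)=\p_0$ and $\p(1)=\p_3$. Differentiating the identity $\p D=N$ once and using $D(0)=1$ gives $\p'(0)=N'(0)-\p_0D'(0)=3\omega_1(\p_1-\p_0)$, so $\p'(0)\parallel\p_0\p_1$, and symmetrically $\p'(1)\parallel\p_2\p_3$. Differentiating twice and solving for $\p''(0)$ yields $\p''(0)=(6\omega_1-18\omega_1^2)(\p_1-\p_0)+6\omega_2(\p_2-\p_0)$, a combination of $\p_1-\p_0$ and $\p_2-\p_0$; hence $\p'(0),\p''(0)$ are independent (the $\p_2-\p_0$ component survives because $\omega_2>0$ and $\p_0,\p_1,\p_2$ are not collinear) and span the plane $\p_0\p_1\p_2$, which is therefore the osculating plane at $\p_0$; the point $\p_3$ is symmetric. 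For 3) note that $\omega_i\ge 0$ and $B_i(t)\ge 0$ on $[0,1]$ make $b_i(t):=\omega_iB_i(t)/D(t)$ nonnegative with $\sum_i b_i(t)=1$, so $\p(t)=\sum_i b_i(t)\p_i$ is a convex combination, i.e. $\p(t)\in\lozenge\p_0\p_1\p_2\p_3$; moreover the $b_i(t)$ are exactly the barycentric coordinates of $\p(t)$, a fact I will reuse for 7). For 5), the restriction $\p^{\star}=\p|_{[t^{\star}_1,t^{\star}_2]}$ is again a cubic rational B\'{e}zier curve whose control points arise from de Casteljau subdivision; in homogeneous form these are convex combinations of $\p_0,\dots,\p_3$, hence lie in $\lozenge\p_0\p_1\p_2\p_3$, and by 1)--2) they are precisely the vertices of the control tetrahedron of $\p^{\star}$, so convexity gives the containment.

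For 4) I would lift to the homogeneous curve $\hat\p(t)=\sum_i(\omega_i\p_i,\omega_i)B_i(t)\in\Real^4$, a degree-$3$ polynomial curve projecting to $\p$. The Wronskian-type determinant $\det[\hat\p,\hat\p',\hat\p'',\hat\p''']$ is constant in $t$ (its derivative is a sum of determinants each with a repeated column or with $\hat\p''''=0$), and by a finite-difference computation it is proportional to $\omega_0\omega_1\omega_2\omega_3\det[(\p_0,1),(\p_1,1),(\p_2,1),(\p_3,1)]$, which is nonzero since the tetrahedron is non-degenerate. Thus $\hat\p,\hat\p',\hat\p'',\hat\p'''$ are everywhere linearly independent, and translating through the central projection: $\hat\p'$ is never a scalar multiple of $\hat\p$, so $\p'\neq 0$ (no singular points); $\hat\p,\hat\p',\hat\p''$ are independent, so $\p'\times\p''\neq 0$ and $\ka(t)\neq 0$; and $(\p',\p'',\p''')$ equals $\det[\hat\p,\hat\p',\hat\p'',\hat\p''']$ divided by a positive power of $D(t)$, hence is a nonzero constant-sign quantity, giving $\tau(t)\neq 0$. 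This item could alternatively be attributed to the classical analysis of Forrest and Chen.

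The crux is 6)--7): once the geometric intersection points are identified with explicit convex-combination parameters, monotonicity is a one-line calculus check. For 6), subdividing at $t^{\star}$ and applying 2) to the sub-curves on $[0,t^{\star}]$ and $[t^{\star},1]$ shows that the osculating plane $O(t^{\star})$ passes through the first-level de Casteljau points on $\p_0\p_1$ and $\p_2\p_3$; a short homogeneous-coordinate argument (the first-level point on $\p_1\p_2$ is an affine combination, in $\Real^4$, of two points already lying in $O(t^{\star})$) places the middle point in $O(t^{\star})$ too. Hence $\p_{01},\p_{12},\p_{23}$ are exactly these points, and $\norm{\p_0\p_{01}}/\norm{\p_0\p_1}=\omega_1 t^{\star}/\big((1-t^{\star})+\omega_1 t^{\star}\big)$, with analogous ratios on the other edges, each of the form $\lambda(t^{\star})=a t^{\star}/\big(b(1-t^{\star})+c t^{\star}\big)$ with $a,b,c>0$, whose derivative $ab/(\cdot)^2>0$ is of constant sign. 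For 7) I would work in the barycentric coordinates $b_i(t)$ from 3): the plane $\p_1\p_2\p(t^{\star})$ has barycentric equation $b_3^{\star}\beta_0-b_0^{\star}\beta_3=0$, so it meets $\p_0\p_3$ where $\beta_0:\beta_3=b_0^{\star}:b_3^{\star}$, giving
\[
\norm{\p_0\p_{03}}/\norm{\p_0\p_3}=t^{\star 3}/\big((1-t^{\star})^3+t^{\star 3}\big),
\]
whose derivative equals $3t^{\star 2}(1-t^{\star})^2/(\cdot)^2>0$ on $(0,1)$; likewise the plane $\p_0\p_3\p(t^{\star})$ meets $\p_1\p_2$ where $\beta_1:\beta_2=b_1^{\star}:b_2^{\star}$, yielding the ratio $\omega_2 t^{\star}/\big(\omega_1(1-t^{\star})+\omega_2 t^{\star}\big)$, again strictly monotone. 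The main obstacle is precisely this identification step --- certifying that the osculating plane and the two auxiliary planes cut the edges exactly at the de Casteljau / barycentric points --- after which the monotonicity of $\lambda(t^{\star})$ is immediate.
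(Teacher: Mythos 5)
Your proposal is correct in substance and takes a genuinely different, more computational route than the paper on the three items that carry weight. The paper treats 1)--3) as standard B\'{e}zier facts, cites Li--Cripps and Wang et al.\ for 4), proves 5) by de Casteljau subdivision exactly as you do, but then handles 6) and 7) quite differently: 6) is not proved inside the lemma at all --- it is deferred to the later, general quasi-cubic result (Theorem~\ref{sub-thm}), whose proof is a long synthetic argument --- and 7) is proved by contradiction, showing the planes $\p_1\p_2\p(t^{\star})$ and $\p_0\p_3\p(t^{\star})$ cannot be tangent to the curve (a tangency would force either four intersections of a plane with a cubic, or the coincidence of the intersection points of $\T(t^{\star})$ with $O(0)$ and $O(1)$). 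Your replacement of both arguments by explicit identification of the intersection points --- the first-level de Casteljau points for 6), via 2) applied to the two subdivided curves plus the homogeneous-line observation for the middle edge; the barycentric ratios $b_0^{\star}:b_3^{\star}$ and $b_1^{\star}:b_2^{\star}$ for 7) --- followed by differentiating the resulting rational functions is sound (I checked the formulas: the edge ratios $\omega_1t^{\star}/\bigl((1-t^{\star})+\omega_1t^{\star}\bigr)$ and $t^{\star 3}/\bigl((1-t^{\star})^3+t^{\star 3}\bigr)$ are correct and have positive derivative). It is also more informative and self-contained: it gives quantitative ratios and avoids the forward dependence on Theorem~\ref{sub-thm}. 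Your Wronskian lift for 4) similarly replaces a citation with a short computation.

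There is one genuine gap, in 4). In this paper a \emph{singular point} is a point corresponding to more than one parameter (with multiplicity), so 4) asserts the absence of self-intersections as well as cusps. Your step ``$\hat{\p}'$ is never a scalar multiple of $\hat{\p}$, so $\p'\neq 0$ (no singular points)'' only excludes cusps: nonvanishing of $\p'$ is local and says nothing about $\p(t_1)=\p(t_2)$ with $t_1\neq t_2$. Your own lift closes the gap with one more line: writing $\hat{\p}(t)=A\,(1,t,t^2,t^3)^{T}$ with $A$ the $4\times 4$ coefficient matrix, your constant Wronskian equals $\det A$ times the constant nonzero Wronskian of $(1,t,t^2,t^3)$, so $\det A\neq 0$; hence for distinct parameters
\begin{equation*}
\det\left[\hat{\p}(t_1),\hat{\p}(t_2),\hat{\p}(t_3),\hat{\p}(t_4)\right]=\det A\cdot\prod_{i<j}(t_j-t_i)\neq 0,
\end{equation*}
so $\hat{\p}(t_1)$ and $\hat{\p}(t_2)$ are never parallel, i.e.\ $\p(t_1)\neq\p(t_2)$. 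Add this (or an equivalent plane-counting argument: a plane meets a non-planar cubic in at most three points) to make 4) complete.
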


\begin{proof}
Properties 1), 2) and 3) are basic properties of B\'{e}zier curves and
the proof can be founded in~\cite{Hoschek1993}. They also can be
checked directly.

%2) For simplification, without losing of generalities,
%we consider the curve with the control points and the weights as
% $$\{(0,0,0),1;(0,0,1),w_1;(x_2,y_2,z_2),w_2;(x_3,y_3,z_3),1\}.$$
%Then it can be checked
%straightforward by computing the osculating planes.

For 4), Li and Cripps shown that there is no cusps and inflection
points for a non-degenerate rational cubic space curves
in~\cite{Li1997}, and the torsion can be checked directly. Wang et
al. also proved that a cubic space curve has no singular points by
moving planes method in~\cite{wangh2009}.

5) can be proved by a successive Decasteljau
subdivision~\cite{Hoschek1993}. The control tetrahedron of
$\p^{\star}_1(t),t\in[t_1^{\star},1]$ is inside the control
tetrahedron of $\p(t)$. Successively, the control tetrahedron of
$\p^{\star}(t),t\in[t_1^{\star},t_2^{\star}]$ lies in the control
tetrahedron of $\p^{\star}_1(t)$.

Property 6) can be derived from the above five properties. Also this
property is a special case of the following Theorem~\ref{sub-thm} in
this paper.

For 7), it is sufficient to prove that the planes
$\p_1\p_2\p(t^{\star})$ and $\p_0\p_3\p(t^{\star})$ do not touch
$\p(t^{\star})$ with  $t^{\star}\in (0,1)$, respectively. Since
$\p_0\p_3\p(t^{\star})$ passes through $\p_0,\p_3$ and $\p(t)$ is
cubic, $\p_0\p_3\p(t^{\star})$ cannot have any tangent point
different from $\p_0,\p_3$. Supposing the plane
$\p_1\p_2\p(t^{\star})$ touches $\p(t^{\star})$ at  $t^{\star}\in
(0,1)$, the osculating plane $O(t^{\star})$ must intersects
$\p_1\p_2\p(t^{\star})$ with the tangent line $\T(t^{\star})$. By
6), $\T(t^{\star})$ must intersect $\p_1\p_2$ which is the
intersection line of $O(0), O(1)$. However, according to Decasteljau subdivision, the intersection point of $\T(t^{\star})$ and $O(0)$ is always different from that of $\T(t^{\star})$ and $O(1)$. Then there is a contradiction.
\hfill\qed\end{proof}

The shoulder point of a cubic B\'{e}zier curve will play an important
role~\cite{Forrest1980}. The definition is given below.

\begin{definition} Let
 $\p(t)$ be a curve of the form~\eqref{Bezier}. Its shoulder
 point $\s$ is defined as intersection point of $\p(t)$ and the plane $\p_1\p_2\p_M$ where
 $\p_M=(\p_0+\p_3)/2$ (Figure~\ref{fig1}).
\end{definition}
\begin{figure}[!h]
 \centering
 \includegraphics[width=0.5\textwidth]{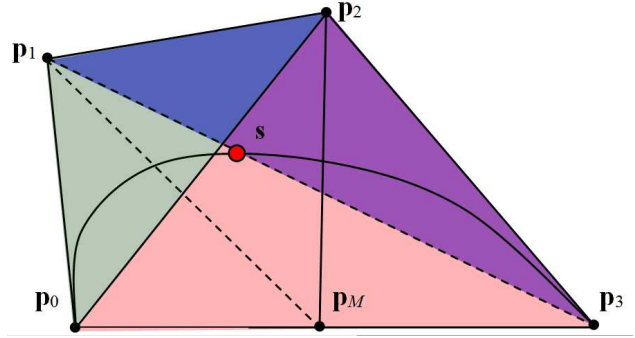}
 \caption{ \label{fig1} Shoulder point of a B\'{e}zier cubic curve}

\end{figure}
\begin{proposition}\label{shoulder} Let $\s$ be the shoulder point of
$\p(t)$. Then
$\s=\p(1/2)=\lambda_1\p_1+\lambda_2\p_2+(1-\lambda_1-\lambda_2)\p_M$
where $\lambda_1={\frac {3\omega_1}{2+3\,\omega_1+3\,\omega_2}}$,
$\lambda_2={\frac {3\omega_2}{2+3\,\omega_1+3\,\omega_2}}$.
\end{proposition}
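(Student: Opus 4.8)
The plan is to evaluate the curve directly at the parameter $t=1/2$ and to show that the resulting point both has the asserted barycentric form and lies in the plane $\p_1\p_2\p_M$; its membership on the curve is then automatic, so it will indeed be a shoulder point. First I would substitute $t=1/2$ into the cubic Bernstein polynomials of~\eqref{Bezier}. Using $B_0(1/2)=B_3(1/2)=1/8$ and $B_1(1/2)=B_2(1/2)=3/8$, the numerator of $\p(1/2)$ collapses to $\tfrac18(\p_0+\p_3)+\tfrac{3\omega_1}{8}\p_1+\tfrac{3\omega_2}{8}\p_2$ and the denominator to $\tfrac18(2+3\omega_1+3\omega_2)$. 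Writing $D=2+3\omega_1+3\omega_2$ and cancelling the common factor $1/8$, I obtain
$$\p(1/2)=\frac{\p_0+\p_3}{D}+\frac{3\omega_1}{D}\,\p_1+\frac{3\omega_2}{D}\,\p_2.$$

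The key observation is that $\p_0+\p_3=2\p_M$ by the definition of $\p_M$, so the first term is exactly $(2/D)\,\p_M$. Setting $\lambda_1=3\omega_1/D$ and $\lambda_2=3\omega_2/D$, a one-line check gives $1-\lambda_1-\lambda_2=(D-3\omega_1-3\omega_2)/D=2/D$, whence
$$\p(1/2)=\lambda_1\p_1+\lambda_2\p_2+(1-\lambda_1-\lambda_2)\p_M.$$
Since the three coefficients sum to $1$, this exhibits $\p(1/2)$ as an affine combination of $\p_1,\p_2,\p_M$, so $\p(1/2)$ lies in the plane $\p_1\p_2\p_M$. As it is trivially a point of $\p(t)$, it is an intersection point of the curve with that plane, and the stated formula for $\lambda_1,\lambda_2$ is established simultaneously with the evaluation.

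The only genuine subtlety is to confirm that this intersection is the shoulder point intended by the definition, i.e.\ that $t=1/2$ is the correct parameter. This is where the symmetry of the cubic Bernstein basis about $t=1/2$ is decisive: the plane is built from the midpoint $\p_M$ of the chord $\p_0\p_3$ together with the two middle control points $\p_1,\p_2$, precisely the symmetric configuration singled out by the midparameter. To rule out a spurious second interior intersection one may invoke Property~3) of Lemma~\ref{Bezier-lm}: the curve stays inside $\lozenge\p_0\p_1\p_2\p_3$, and the triangle $\p_1\p_2\p_M$ cuts this tetrahedron so that the arc $\p(t)$, $t\in(0,1)$, crosses it only once. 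I expect this uniqueness remark, rather than the algebra, to be the point requiring the most care; the substitution and simplification above are routine and present no real obstacle.
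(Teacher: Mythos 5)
Your evaluation of $\p(1/2)$ is correct and in fact more self-contained than the paper's own argument: the paper simply asserts that $\lambda_1,\lambda_2,1-\lambda_1-\lambda_2$ are the area coordinates of $\s$ in the triangle $\p_1\p_2\p_M$ and refers to Forrest's paper for details, whereas you derive the formula by direct substitution of the Bernstein values $B_0(1/2)=B_3(1/2)=1/8$, $B_1(1/2)=B_2(1/2)=3/8$. That half of your proof stands as written, and is a legitimate elementary replacement for the paper's citation.

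The gap is in the uniqueness step, which you correctly flag as the delicate point but then justify with the wrong tool. Property 3) of Lemma~\ref{Bezier-lm} (the curve lies inside $\lozenge\p_0\p_1\p_2\p_3$) does not imply that the arc crosses the plane $\p_1\p_2\p_M$ only once: since $\p_0$ and $\p_3$ lie strictly on opposite sides of that plane, containment forces only an odd number of crossings, and a space cubic can meet a plane in up to three points, all of which could lie inside the tetrahedron as far as property 3) is concerned. The paper instead invokes property 7) of Lemma~\ref{Bezier-lm} --- the strict monotonicity of $\|\p_0\p_{03}\|$, where $\p_{03}=\p_1\p_2\p(t^{\star})\cap\p_0\p_3$ --- which says precisely that the pencil of planes through the line $\p_1\p_2$ sweeps the curve monotonically, so there is exactly one parameter at which $\p_{03}=\p_M$, i.e.\ exactly one intersection with $\p_1\p_2\p_M$. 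Alternatively, you could close the gap inside your own computation: take an affine functional $L$ vanishing on $\p_1,\p_2,\p_M$; then $L(\p_M)=0$ forces $L(\p_3)=-L(\p_0)\ne 0$ (non-degeneracy of the tetrahedron), so the numerator of $L(\p(t))$ equals $L(\p_0)\bigl((1-t)^3-t^3\bigr)=L(\p_0)(1-2t)(t^2-t+1)$, whose only real root is $t=1/2$ --- uniqueness for free, with no appeal to Lemma~\ref{Bezier-lm} at all. As submitted, however, the argument you gave for the crossing being unique would not survive scrutiny.
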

\begin{proof}
By 7) of Lemma~\ref{Bezier-lm}, there exists a unique intersection
point of $\p(t)$ and the plane $\p_1\p_2\p_M$. And
$\lambda_1,\lambda_2$ and $1-\lambda_1-\lambda_2$ are just the area
coordinates of $\s$ in the triangle $\p_1\p_2\p_M$. More details can
be found in~\cite{Forrest1980}. \hfill\qed\end{proof}

It is known that the curve is closer to the control point when its
associated weight is greater. We now consider the point which has
the maximum distance to the planes $P_1=\p_0\p_2\p_3$ and
$P_2=\p_0\p_1\p_3$ respectively.

\begin{definition}
Let $\r(t),t\in[0,1]$ be a curve segment on the same side of a plane
$Q$ with the two endpoints on $Q$. For another plane $R$ parallel
to $Q$, a tangent point of $\r(t)$ with the plane $R$  is called a {\em
parallel point} of $\r(t)$ associated to the plane $Q$.
\end{definition}

According to the definition, a parallel point should satisfy
\begin{equation}
  \label{parallel}
  |\r'(t),\q_1-\q_0,\q_2-\q_0|=0,
  \end{equation}
  where $\q_0,\q_1$ and $\q_2$ are three non co-linear points on $Q$.
In general, there may be several parallel points for a curve
segment and a fixed plane. However, for the rational cubic curve
segment~\eqref{Bezier}, there is a unique parallel point associated
to $P_1=\p_0\p_2\p_3$, and similarly, there is a unique
parallel point associated to $P_2=\p_0\p_1\p_3$.

 \begin{proposition}\label{oneparall}
Let $\p(t)$ be a non-planar cubic rational curve of the
form~\eqref{Bezier}. Then there are unique parallel points
associated to the planes $P_1$ and $P_2$ respectively, and they are
points of $\p(t)$ having the maximal distance to $P_1$ and $P_2$
respectively.
 \end{proposition}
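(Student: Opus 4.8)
The plan is to reduce each statement to an analysis of the signed distance function from $\p(t)$ to the plane, and to show that this function has exactly one critical point in the open parameter interval. First I would fix a unit normal $\mathbf{n}_1$ of $P_1=\p_0\p_2\p_3$ and consider the signed distance $d_1(t)=(\p(t)-\p_0)\cdot\mathbf{n}_1$. Writing $\p(t)=N(t)/D(t)$ with $N(t)=\p_0B_0+\omega_1\p_1B_1+\omega_2\p_2B_2+\p_3B_3$ and $D(t)=B_0+\omega_1B_1+\omega_2B_2+B_3>0$, and using that $\p_0,\p_2,\p_3$ all lie on $P_1$ while only $\p_1$ is off it, every Bernstein term except the $B_1$ term cancels, giving the clean expression $d_1(t)=3\omega_1 h_1 (1-t)^2 t / D(t)$, where $h_1=(\p_1-\p_0)\cdot\mathbf{n}_1\neq 0$ since the curve is non-planar. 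The key remark is that a parallel point is exactly a solution of \eqref{parallel}, and because $\r'(t)\cdot\mathbf{n}_1=d_1'(t)$, the parallel points associated to $P_1$ coincide with the critical points of $d_1$.

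Next I would locate these critical points. By Property 3) of Lemma~\ref{Bezier-lm}, $\p(t)$ lies in the tetrahedron $\lozenge\p_0\p_1\p_2\p_3$, hence stays on the side of $P_1$ containing $\p_1$; so $d_1$ has constant sign on $(0,1)$ and vanishes at both endpoints, whence $|d_1|$ attains an interior maximum, necessarily at a critical point (a parallel point). It remains to prove there is only one. The hard part is that $d_1'(t)=0$ is a rational equation whose numerator is a quartic in $t$ that is not manifestly single-rooted. The device that removes this difficulty is the substitution $u=t/(1-t)$, which maps $(0,1)$ bijectively and monotonically onto $(0,\infty)$. Under it $D(t)=E(u)/(1+u)^3$ with $E(u)=u^3+3\omega_2 u^2+3\omega_1 u+1$, and the distance collapses to $d_1=3\omega_1 h_1\, u/E(u)$.

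Now the critical condition is transparent: $\tfrac{d}{du}(u/E)=(E-uE')/E^2$, and one computes $E-uE'=-(2u^3+3\omega_2u^2-1)$. I would set $\phi(u)=2u^3+3\omega_2u^2-1$ and note that $\phi(0)=-1<0$ while $\phi'(u)=6u(u+\omega_2)>0$ for $u>0$ (using $\omega_2\geq 0$), so $\phi$ is strictly increasing on $(0,\infty)$ and has exactly one positive root. Since the parameter change is a monotone bijection, this gives a unique interior critical point of $d_1$; combined with the constant sign and the vanishing at the two endpoints, that point is the unique parallel point and the point of maximal distance to $P_1$.

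The argument for $P_2=\p_0\p_1\p_3$ is identical in structure: here $\p_1,\p_3$ lie on $P_2$ and only $\p_2$ is off it, so $d_2(t)=3\omega_2 h_2 (1-t)t^2/D(t)$, which becomes $d_2=3\omega_2 h_2\, u^2/E(u)$. The critical condition reduces to $\psi(u):=u^3-3\omega_1u-2=0$; since $\psi(0)=-2<0$ and $\psi'(u)=3(u^2-\omega_1)$ shows $\psi$ first decreases (staying negative) and then increases to $+\infty$, $\psi$ has exactly one positive root, yielding the unique parallel point for $P_2$. I expect the only genuinely delicate point to be the uniqueness claim, and the substitution $u=t/(1-t)$ is precisely what turns it into an elementary monotonicity check on a cubic.
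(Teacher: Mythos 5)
Your proof is correct, and at its core it performs the same reduction as the paper: condition \eqref{parallel} for each plane becomes a scalar algebraic equation relating $t$ and one of the weights, and uniqueness follows from a monotonicity argument. The difference is in execution, and your version is the more complete one. The paper's proof states that the constraint equations $\frac{3t^3-6t^2+6t-2}{3t(t-1)^2}=\omega_1$ and $\frac{3t^3-3t^2+3t-1}{3t^2(t-1)}=\omega_2$ define monotone functions of $t\in(0,1)$ and stops there, with no verification of the monotonicity; your substitution $u=t/(1-t)$ turns exactly that claim into elementary sign checks ($\phi'(u)=6u(u+\omega_2)>0$, and the single sign change of $\psi$), so what the paper asserts you actually prove. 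Indeed your cubics are the paper's equations in disguise: clearing denominators in the paper's equations and dividing by $(1-t)^3$ yields precisely $\phi(u)=2u^3+3\omega_2u^2-1$ and $\psi(u)=u^3-3\omega_1u-2$. Your signed-distance formulation also makes the maximal-distance clause immediate ($d_1$ vanishes at both endpoints, keeps constant sign, and has a unique interior critical point), where the paper only remarks that the endpoints lie on $P_i$. A further bonus: your computation shows that the parallel-point condition for $P_1=\p_0\p_2\p_3$ involves $\omega_2$ and the one for $P_2=\p_0\p_1\p_3$ involves $\omega_1$; the paper attaches its two equations to $P_1$ and $P_2$ in the opposite order, a harmless mislabeling that does not affect its argument but which your derivation silently corrects. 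One small point you could add for completeness: non-planarity forces not only $h_1\neq 0$ but also $\omega_1>0$ (if $\omega_1=0$ the curve lies in $P_1$), so $d_1\not\equiv 0$; similarly $\omega_2>0$ for $d_2$.
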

 \begin{proof}
By equation~\eqref{parallel}, we can find that $\frac{3t^3-6t^2+6t
-2}{3t(t-1)^2}=\omega_1$ and
$\frac{3t^3-3t^2+3t-1}{3t^2(t-1)}=\omega_2$ are the constraint
equations for the parallel points associated to $P_1$ and $P_2$
respectively. They are two monotone functions for $t\in(0,1)$ with
two asymptotes $t=0, 1$. It means that for any weights there is only
one parallel point associated to $P_i$. Furthermore, the parallel
point has the maximal distance since the endpoints of the curve are
on $P_i$.
 \hfill\qed\end{proof}

% ----------------------------------------------------------------
\section{Quasi-cubic segments on space parametric curves}
In this section, we propose a method to divide a given curve $\r(t)$
into segments which have similar properties to cubic B\'{e}zier curves,
which are called quasi-cubic B\'{e}zier segments and can be approximated
by cubic rational B\'{e}zier curves nicely.

\subsection{Conditions for subdivision}

Let $t_0,t_1$ be the endpoints of a curve segment $\r(t)$. We will
define an associated tetrahedron for it. Let $O^{+}(t_0)$ and
$O^{-}(t_1)$ be the right and left osculating planes at the
endpoints respectively. We denote their intersection line as $L$, if
they are not parallel. Since $L$ and the right tangent line
$\T^{+}(t_0)$ are coplanar, they intersect at a point $\r_1$ if they
are not parallel. Similarly,
 $L$ and the right tangent line $\T^{-}(t_1)$ intersect at a point
 $\r_2$ if they are not parallel.
So we obtain an \emph{associated tetrahedron}
$\lozenge(t_0,t_1)=\lozenge\r_0\r_1\r_2\r_3$  where
$\r_0=\r(t_0)$ and $\r_3=\r(t_1)$ if $\r_1\neq \r_2$.

We have shown that a cubic B\'{e}zier curve segment has eight properties
in Lemma~\ref{Bezier-lm} and Proposition~\ref{oneparall}. In the
following, we will show how to divide any given rational curve
segment into sub-segments  having similar properties.
\begin{definition}
A curve segment is called a {\em quasi-cubic B\'{e}zier curve segment},
or simply a \emph{quasi-cubic segment}, if it has the eight properties in
Lemma~\ref{Bezier-lm} and Proposition~\ref{oneparall}.
\end{definition}

\begin{theorem}\label{quasi-cubic}
  Given $\r(t)$ and $t_0$, there always exists $t_1>t_0$ such that $\r(t),t\in[t_0,t_1]$ is a quasi-cubic B\'{e}zier curve segment.
\end{theorem}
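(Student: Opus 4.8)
The plan is to exhibit, for any fixed $t_0$, a threshold $t_1>t_0$ below which all eight defining properties of a quasi-cubic segment hold simultaneously, and to obtain $t_1$ by intersecting finitely many one-sided neighborhoods of $t_0$. The guiding principle is that each property of Lemma~\ref{Bezier-lm} and Proposition~\ref{oneparall} is really a statement that becomes true once the arc $\r(t),t\in[t_0,t_1]$ is short enough that it behaves like its associated cubic control tetrahedron $\lozenge(t_0,t_1)$; so I would first establish that $\lozenge(t_0,t_1)$ is well-defined for small $t_1-t_0$, then verify each property separately and take the smallest of the resulting thresholds.

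First I would show the associated tetrahedron exists. Since we assume $\ka(t)\not\equiv 0$ and $\tau(t)\not\equiv 0$, the set of character points is finite, so I may shrink to an interval on which $t_0$ is the only possible character point and the Frenet frame, the osculating planes $O^{+}(t_0),O^{-}(t_1)$ and the tangent lines $\T^{+}(t_0),\T^{-}(t_1)$ all vary continuously. As $t_1\to t_0^{+}$, the two osculating planes converge to $O^{+}(t_0)$ but (because $\tau(t_0)\neq 0$ generically, or by the one-sided limits defined in the preliminaries) they remain distinct for $t_1$ close to $t_0$, so their intersection line $L$ exists; a continuity and non-degeneracy argument then gives the intersection points $\r_1,\r_2$ with $\r_1\neq\r_2$, yielding the non-degenerate tetrahedron $\lozenge\r_0\r_1\r_2\r_3$. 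This is where the torsion hypothesis does the real work: it prevents the osculating planes from collapsing into one plane in the limit.

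Next I would treat the properties in two groups. Properties~1) and~2) are interpolation conditions built into the very construction of $\r_1,\r_2$ from the tangent lines and osculating planes, so they hold by definition once the tetrahedron exists. Properties~3), 4), 5), the monotonicity statements~6), 7), and the uniqueness of parallel points in Proposition~\ref{oneparall} are each controlled by a finite list of functions (curvature, torsion, the triple-product determinant of \eqref{parallel}, and the monotonicity derivatives) being nonzero; I would argue that each such quantity is nonzero at $t_0$ (using $\ka(t_0)\neq0$, $\tau(t_0)\neq0$, and the non-degeneracy of the Frenet frame) and hence, by continuity, nonzero throughout $[t_0,t_1]$ for $t_1$ sufficiently close to $t_0$, and that the curve lies on one side of each tetrahedron face by a Taylor-expansion/convexity argument near $t_0$. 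Taking $t_1$ to be the minimum of the finitely many admissible thresholds produced for the individual properties gives a single $t_1>t_0$ for which $\r(t),t\in[t_0,t_1]$ satisfies all eight, proving the claim.

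The main obstacle I expect is establishing the containment and the global monotonicity properties~3), 5), 6) and~7) rather than the local nonvanishing conditions: a short arc being pointwise close to a cubic does not automatically inherit the \emph{global} tetrahedron-containment or the \emph{strict monotonicity} of the sweeping-plane intersection lengths, since these are not expressed by a single pointwise inequality but by the behavior of intersection points as $t^{\star}$ ranges over the whole subinterval. I would handle this by reducing each such property to a sign condition on a determinant or a derivative that is forced to be strict at $t_0$ and then extended by continuity, exactly as Proposition~\ref{oneparall} reduces the parallel-point count to the monotonicity of two explicit functions; making the reduction clean, and ensuring the estimates are genuinely uniform on the shrunken interval, is the delicate part of the argument.
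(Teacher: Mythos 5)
Your high-level plan (shrink the interval, verify the eight properties separately, take the minimum threshold) matches the paper's strategy in outline, but the mechanism you propose for the key step fails. The conditions the paper imposes (its Conditions I)--IV)) are not pointwise nonvanishing statements at a single parameter; they are nonvanishing statements for functions of \emph{two or three} parameters ranging over the whole interval, e.g.\ $F_1(s_1,s_2)=\al^{+}(s_1)\cdot\ga^{-}(s_2)\neq 0$ for all $t_0\le s_1<s_2\le t_1$, and $|\al(s_1),\al(s_2),\al(s_3)|\neq 0$ for all triples. Every one of these functions vanishes \emph{identically on the diagonal}: $\al(s)\cdot\ga(s)\equiv 0$ because the tangent is orthogonal to the binormal, the coplanarity function of Condition II) vanishes at $s_1=s_2$, and the triple product vanishes whenever two arguments coincide. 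So there is no strict sign at $t_0$ to extend by continuity --- the quantity you would evaluate at the corner is exactly $0$ --- and ``nonzero at $t_0$, hence nonzero on a short interval'' is simply unavailable. The paper's substitute is a desingularization: writing $s_1=t_0+\de_1$, $s_2=s_1+\de_2$ and Taylor-expanding, $F_1$ factors as $\de_2^2\, f_1(\de_1,\de_2)$ (equation \eqref{taylor_exp}) with $f_1(\de_1,0)=\tau(t_0+\de_1)/\|\r'(t_0+\de_1)\|$, which is nonzero once character points have been excluded; Lemma~\ref{nosolution} then shows the residual zero set cannot accumulate at the corner unless the curve is planar, and the admissible threshold is extracted from the optimization problem \eqref{bound-opt}. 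That blow-up-and-factor step is the real content of the existence claim, and it is absent from your argument.

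The second gap is the passage from the nonvanishing conditions to the global properties 3), 5), 6), 7). You propose to reduce each one to ``a sign condition on a determinant or a derivative that is strict at $t_0$ and extended by continuity,'' but the paper does not prove them this way, and the reduction you assume is exactly what needs justification: containment and uniqueness of parallel points (Theorem~\ref{divide-thm}) are proved by contradiction --- a violation produces a parallel point, or a pair of parallel points, whose tangent vectors contradict Conditions I) or IV) --- and the strict monotonicity statements (Theorem~\ref{sub-thm}, Propositions~\ref{sub-inside}, \ref{mono-sub}) are proved by subdividing the segment and showing that a failure of monotonicity forces some sub-tetrahedron not to contain its sub-segment, or forces $\tau(t^\star)=0$ via Lemma~\ref{lemma-pre2}. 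These are interval-wide arguments about the whole family of osculating planes $O(t^\star)$, not consequences of a single nonvanishing quantity at the left endpoint. Your closing paragraph correctly identifies this as the delicate point, but the fix you sketch there is the same endpoint-plus-continuity argument that already fails in the first gap.
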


We leave the proof of this theorem at the end of the subsection~\ref{subsec-3.2}.

\begin{definition}Let $\r(t),t\in[t_0,t_1]$ be a quasi-cubic segment. Then its \emph{associated cubic B\'{e}zier curve segment} is defined by the associated tetrahedron of $\r(t)$, i.e., the control points are $\r_0,\r_1,\r_2$ and $\r_3$.
\end{definition}

In order to divide the curve segment into quasi-cubic segments, we
first add the inflection points and torsion vanishing points as the
dividing points, denoted by $\PS$. The parameters of these points
can be computed by solving the real roots of $\ka(t)\tau(t)=0$.
The left and right Frenet frames are also needed.  There are several
efficient methods to find the real roots of a univariate
polynomial~\cite{Rouillier2004,cheng2009} and one can use the
procedures \texttt{realroot} and \texttt{isolate} in \texttt{Maple}.

We need to find more dividing points. Fix a start point $t=t_0$, we
now try to determine $t_1$ such that $t_1-t_0$ is as big as possible
and the segment is included in its associated tetrahedron designed
above.
Several boundary parametric values to exclude some special points
with respect to $t_0$ are computed in the following cases:

\textbf{Condition I}). Let $t^\star_1>t_0$ be its nearest parametric
value from $\PS$. Find $t_1\in (t_0,t_1^{\star})$ such that
$F_1(s_1,s_2):=\al^{+}(s_1)\cdot \ga^{-}(s_2)\neq 0$ and
$F_2(s_1,s_2):=\al^{-}(s_2)\cdot \ga^{+}(s_1)\neq 0$
%\begin{equation}
%  F_1(s_1,s_2):=\al^{+}(s_1)\cdot \ga^{-}(s_2)\neq 0\ \mbox{and} \
%F_2(s_1,s_2):=\al^{-}(s_2)\cdot \ga^{+}(s_1)\neq 0
%  \label{equII}
%\end{equation}
for any $t_0 \leq s_1< s_2\leq t_1$, meaning that the right tangent
vector $\al^{+}(s_1)$ is not parallel to the left osculating plane
$O^{-}(s_2)$ and the left tangent vector $\al^{-}(s_2)$ is not
parallel to the left osculating plane $O^{+}(s_1)$.

Since the curve is non-planar, $F_i(s_1,s_2),i=1,2$ cannot be
identically zero. We take a further look at the inequalities $F_1\ne
0, F_2\ne0$. Since the derivative can be computed using limits,
$\r(t)$ is differentiable to any order although the left and right
derivative may be different. For conveniences, we omit the $+,-$
marks to distinguish between left and right derivatives. In what
below, we give detailed analysis for $F_1$ and the analysis of $F_2$
is similar.
$$F_1(s_1,s_2)=\al(s_1)\cdot
\ga(s_2)=\frac{|\r'(s_1),\r'(s_2),\r''(s_2)|}{\|\r'(s_1)\|\|\r'(s_2)\times\r''(s_2)\|}.$$
Assuming $s_1=t_0+\de_1,s_2=s_1+\de_2,\de_1\ge 0,\de_2>0$,
$F_1(s_1,s_2)$ is re-parameterized as
$$ F_1(\de_1,\de_2) =\frac{|\r'(t_0+\de_1),\r'(t_0+\de_1+\de_2),\r''(t_0+\de_1+\de_2)|}
{\|\r'(t_0+\de_1)\|\|\r'(t_0+\de_1+\de_2)\times\r''(t_0+\de_1+\de_2)\|}.$$
Expanding the vectors of the numerator at $t=t_0+\de_1$ as Taylor
series $\r'_S(t_0+\de_1),\r'_S(t_0+\de_1+\de_2)$ and
$\r''_S(t_0+\de_1+\de_2)$ respectively, and combining them, we have
\begin{equation}\label{taylor_exp}
 F_1(\de_1,\de_2)=\frac{\de_2^2|\r'_S(t_0+\de_1),\tilde\r''_S(t_0+\de_1+\de_2),
\tilde\r'''_S(t_0+\de_1+\de_2)|}
{\|\r'(t_0+\de_1)\|\|\r'(t_0+\de_1+\de_2)\times\r''(t_0+\de_1+\de_2)\|},
\end{equation}
where
$\tilde\r''_S(t_0+\de_1+\de_2)=(\r'_S(t_0+\de_1+\de_2)-\r'_S(t_0+\de_1))/\de_2$
and
$\tilde\r'''_S(t_0+\de_1+\de_2)=(\r''(t_0+\de_1+\de_2)-\tilde\r''_S(t_0+\de_1+\de_2))/\de_2$.
Furthermore, when $\de_2=0$,
$\tilde\r''_S(t_0+\de_1)=\r''_S(t_0+\de_1)$ and
$\tilde\r'''_S(t_0+\de_1)=\r'''_S(t_0+\de_1)$.

Let  $f_1(\de_1,\de_2)=F_1(\de_1,\de_2)/\de_2^2$. Then
$f_1(\de_1,0)=\tau(t_0+\de_1)/\|\r'(t_0+\de_1)\|$.
 $ F_1(\de_1,\de_2)=0$ is a planar curve in the plane of $(\de_1,\de_2)$ which has two components:
a double line $\de_2^2=0$ and another planar curve
$f_1(\de_1,\de_2)=0$. That means $f_1=0 $ intersects $\de_2=0$ with
the points which are exactly the torsion vanishing points
$\tau(t_0+\de_1)=0$ of $\r(t)$. And we need not compute these
points since they are already included in the separating points
needed in the topology computation which is discussed in Section
\ref{sec-sa3}.
Consider the intersection points of $f_1(\de_1,\de_2)$ and
$\de_1=0$. We can find that the real roots of $f_1(0,\de_2)=0$ are
associated to the
  vector $\al(s_1) = \r'(t_0)$
 just parallelling to the osculating plane $O(s_2) = O(t_0+\de_2)$.

Thus, condition I) can be reduced to solve the following
optimization problem
\begin{equation}\label{bound-opt}
  \begin{array}{cl}
  \min & \de_1+\de_2\\
  \mbox{s.t.} &  F_1(\de_1,\de_2)=0, \de_1\ge0, \de_2>0
\end{array}
\end{equation}
and then $t_1$ can be selected from $(t_0,t_0+\de_1+\de_2)$. There
are numerical methods to solve the optimization problem. However, we
prefer to solve it based on the above discussion since it is enough
to get a boundary parametric value less than the exact solution
of~\eqref{bound-opt}. We can find the positive real roots of
$f_1(\de_1,0)$ and $f_1(0,\de_2)$ for $\de_1$ and $\de_2$
respectively. Let $\de_1^{\star}$ be the minimal one among all the
real roots. Then $\de_1+\de_2=\de_1^{\star}$ defines a line. If the
line does not intersect $f_1$ in the first quadrant, then $t_1$ can
be in $(t_0,t_0+\de_1^{\star})$. This can be checked by finding the
real roots of $f_1(\de_1^{\star}-\de_2,\de_2)=0$. Otherwise, set
$\de_1^{\star}\leftarrow \de_1^{\star}/2$ and check the process
repeatedly until the proper $\de_1^{\star}$ is found. If
$f_1(\de_1,0)$ and $f_1(0,\de_2)$ have no positive real roots,
$\de_1^{\star}$ can be initialed as $\de_1^{\star}=t_1^\star-t_0$.

Similarly, we can find such a $\de_2^{\star}$ for $F_2$. Finally,
let $t_2^{\star}=\min(t_0+\de_1^{\star}, t_0+\de_2^{\star})$ be the
boundary parametric value of~$t_1$.

\begin{remark}
The function $F_1(\de_1,\de_2)$ in~\eqref{taylor_exp} actually has a
finite number of terms if the approximated curve $\r$ is a rational
curve. If $\r$ is a parametric curve in elementary functions,
$F_1(\de_1,\de_2)$ will be in the series form. However,
 the problem~\eqref{bound-opt} can still be solved
using a numerical method. Starting with an initial value
$\de_1^{\star}$, we can find a boundary number by checking whether
$\de_1+\de_2=\de_1^{\star}$ and $F_1(\de_1,\de_2)$ have common
points in the first quadrant with one of the directions
$\{\de^0\leftarrow \de_1^{\star}/2,\de^0\leftarrow
2\de_1^{\star}\}$.
\end{remark}

Further restrictions will be  proposed afterward. We will omit the
similar discussions and solving processes and give the conditions
directly.

\textbf{Condition II}). Let $t_2^*$ be the parametric value $t_1$
computed in the above procedure. Find $t_1\in (t_0,t_2^{\star})$
such that
$$F(s_1,s_2):=\al^{+}(s_1)\times(\r(s_2)-\r(s_1))\cdot\al^{-}(s_2)\neq 0$$
for any $t_0 \leq s_1< s_2\leq t_1$, which means that the right
tangent line $\T^{+}(s_1)$ and the left tangent line $\T^{-}(s_2)$
are not coplanar.

\textbf{Condition III}). Let $t_3^*$ be the parametric value $t_1$
computed in the above procedure. We should find $t_1\in
(t_0,t_3^{\star})$ such that $F_1(s_1,s_2):=O^{-}(s_2)(\r(s_1))\neq
0$ and $F_2(s_1,s_2):=O^{+}(s_1)(\r(s_2))\neq 0$, which imply that
$\r(s_1)$ is not on the left osculating plane $O^{-}(s_2)$ and
$\r(s_2)$ is not on the right osculating plane $O^{+}(s_1)$.

Conditions I), II), and III) are used to guarantee that the
tetrahedron $\lozenge\r_0\r_1\r_2\r_3$ is not degenerated to a plane
polygon. However, these conditions are still not sufficient for the
curve segment lying inside $\lozenge\r_0\r_1\r_2\r_3$. We will give
one more condition such that the curve segment lies inside the
tetrahedron and has only one parallel points associated to planes
$P_1$ and $P_2$ respectively.

Let $\tilde t_1 <t^{\star}_4$ where  $t^{\star}_4$ is the parameter
value obtained from III). Then the curve segment $\r(t),
t\in[t_0,\tilde t_1]$ satisfies the conditions of I) to III) and
$\r(t)$ has no character points. We will try to find
$t^{\star}\in(t_0,\tilde t_1]$ such that for any
$s_1<s_2<s_3\in[t_0,t^{\star}]$, the tangent vectors $\al(s_1)$,
$\al(s_2)$, and $\al(s_3)$ are not coplanar, i.e.,
\begin{equation}\label{neq}
|\al(s_1), \al(s_2),\al(s_3)|\neq 0.
\end{equation}

The following lemma is needed for further discussion.
\begin{lemma}\label{nosolution}
For a fixed $t_0$ and $\forall \epsilon>0$,
$F(s_1,s_2):=|\al(t_0),\al(s_1), \al(s_2)|=0$ has solutions
$(s_1,s_2)$ in $(0,\epsilon)^2$ if and only if $\r(t)$ is a planar
curve.

\end{lemma}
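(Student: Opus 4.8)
The plan is to read the equation $F(s_1,s_2)=|\al(t_0),\al(s_1),\al(s_2)|=0$ geometrically: it states precisely that the three unit tangent directions $\al(t_0),\al(s_1),\al(s_2)$ are linearly dependent. The \emph{if} direction is immediate: if $\r(t)$ is planar, every tangent vector is parallel to the plane of the curve, so any three of them are coplanar and $F\equiv 0$; hence $F=0$ holds at every $(s_1,s_2)$, in particular at points of every $(0,\eps)^2$. For the \emph{only if} direction I would argue by contraposition: after normalizing $t_0=0$, I assume $\r(t)$ is non-planar and produce an $\eps>0$ for which $F$ has no solution $(s_1,s_2)\in(0,\eps)^2$ other than the unavoidable trivial ones, namely those on $s_1=0$, $s_2=0$, or $s_1=s_2$, where two rows of the determinant coincide.

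The first step is to isolate these trivial zeros. Since $\al$ is real-analytic where $\r'\neq 0$ (there are no cusps on the segment), $F$ is real-analytic near the origin and vanishes on each of the three smooth curves $s_1=0$, $s_2=0$, $s_1=s_2$; factoring them out gives $F(s_1,s_2)=s_1 s_2 (s_2-s_1)\,G(s_1,s_2)$ with $G$ analytic, and the task reduces to showing $G$ is nonvanishing on $(0,\eps)^2$ for small $\eps$. To evaluate $G$ near the origin I would Taylor-expand $\al(s_1),\al(s_2)$ about $0$ and use multilinearity, subtracting the first row so that $\al(s_i)$ is replaced by $\al(s_i)-\al(0)$; the lowest-order surviving term is $\tfrac12\,s_1 s_2(s_2-s_1)\,|\al(0),\al'(0),\al''(0)|$. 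The key computation is the classical identity $|\al,\al',\al''|=\norm{\r'}^3\ka^2\tau$, obtained by passing to arc length and using the Frenet relations $\al'=\norm{\r'}\ka\be$ together with the expansion of $\al''$ in the frame $\set{\al,\be,\ga}$; this yields $G(0,0)=\tfrac12\norm{\r'(t_0)}^3\ka(t_0)^2\tau(t_0)$.

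If $\tau(t_0)\neq 0$, then $G(0,0)\neq 0$ (recall $\ka(t_0)\neq 0$, since the segment carries no inflection), so $G$ is nonzero on a whole neighborhood of the origin and the only zeros of $F$ there are the trivial ones; choosing $\eps$ inside this neighborhood finishes the non-planar case. The main obstacle is the degenerate situation in which $t_0$ is a torsion-vanishing point, so $G(0,0)=0$ while the curve is still non-planar; here the leading term vanishes and a higher-order analysis is needed. The plan is to use that, $\r$ being rational, $\tau$ is analytic and $\tau\not\equiv 0$, hence vanishes at $t_0$ to some finite order $m\geq 1$. Tracking the $\ga(t_0)$-component of the coefficients $\al^{(k)}(t_0)$ through the Frenet equations shows that this component first becomes nonzero at $k=m+2$, so the lowest-degree homogeneous part of $F$ is a nonzero multiple of $s_1 s_2\,(s_2^{\,m+1}-s_1^{\,m+1})=s_1 s_2(s_2-s_1)(s_2^{\,m}+\cdots+s_1^{\,m})$; equivalently the lowest-degree part of $G$ is a nonzero multiple of $s_2^{\,m}+s_2^{\,m-1}s_1+\cdots+s_1^{\,m}$. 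On the closed positive quadrant this form is strictly of one sign away from the origin, so after passing to polar coordinates it dominates the higher-order terms and forces $G\neq 0$ on $(0,\eps)^2$ for small $\eps$, again leaving only trivial zeros. Confirming that the binormal component of $\al^{(k)}(t_0)$ switches on exactly at $k=m+2$, and that this single critical monomial cannot be cancelled, is the most delicate piece of bookkeeping in the argument.
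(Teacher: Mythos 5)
Your proposal is correct and is essentially the paper's own approach carried to completion: the paper's entire proof is the one-line remark that the claim ``can be checked by expanding vectors to Taylor series,'' and your argument is exactly that expansion made rigorous --- factoring out the unavoidable zeros $s_1s_2(s_2-s_1)$ (including the diagonal, which the lemma's literal wording overlooks), computing the leading coefficient via $|\al,\al',\al''|=\norm{\r'}^3\ka^2\tau$, and settling the degenerate case $\tau(t_0)=0$ by the order-$m$ homogeneous-form analysis. The paper never addresses that degenerate case (nor the inflection case $\ka(t_0)=0$, which both you and the paper exclude by the context that character points are segment-dividing points), so your write-up is strictly more complete, but the underlying method --- Taylor expansion, factorization of the trivial components, torsion as the leading obstruction --- is the same.
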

\begin{proof}
It can be checked by expanding vectors to Taylor series which are
partly illustrated above. \hfill\qed\end{proof} And the lemma also
holds for $F$ mentioned in I) to III). It means that $F(s_1,s_2)$
has no branch segment on the first quadrant of the $(s_1,s_2)$ plane
connecting the origin point.% and any point  $(s_1,s_2)>(0,0)$.

\textbf{Condition IV}). Find $t^{\star}\in (t_0,\tilde t_1)$ such
that $F:=|\al(s_1), \al(s_2),\al(s_3)|\neq 0$ for any
$s_1<s_2<s_3\in[t_0,t^{\star}]\subset[t_0,\tilde t_1]$. That means
$\r(t)$ does not have a triple of linear dependent tangents in
$[t_0,t^{\star}]$. Suppose $s_1=t_0+\de_1 $, $s_2=s_1+\de_2 $ and
$s_3=s_2+\de_3$ where $\de_1\ge 0, \de_2>0$ and $\de_3>0$.

If $\de_1>0$, then we need to find the least $t_0+\de_1+\de_2+\de_3$
with $F(\de_1,\de_2,\de_3)=0$, that is,
$$\begin{array}{rl}
  \min &  \de_1+\de_2+\de_3\\
 \mbox{ s.t.}&F(\de_1,\de_2,\de_3)=0, \de_1,\de_2,\de_3>0.
\end{array}$$
By Taylor expansion, we find that $F(\de_1,\de_2,\de_3)$ has no
branch passing through the $(\de_i,\de_j)$ plane from the first
octant in the space of $(\de_1,\de_2,\de_3)$. Then we initialize
$\de_i,i=1,2,3$ in the plane $\de_1+\de_2+\de_3=\de_1^{\star}=\tilde
t_1$ and check the intersection of the plane with $F$. Set the
boundary parametric value $t_{51}^{\star}=\de_1^{\star}$ if there is
no intersection; otherwise set $\de_1^{\star}\leftarrow
\de_1^{\star}/2$ and repeat the checking process.

If $\de_1=0$, then $F(\de_2,\de_3)$ degenerates to the special case
mentioned in Lemma~\ref{nosolution} and we can find a boundary
parametric value as $t_{52}^{\star}$. Finally, let
$t^{\star}=\min(t_{51}^{\star},t_{52}^{\star})$.

We  have the following key theorem.
\begin{theorem}
\label{divide-thm}Let $t^\star$ be found by the above process. For
any $\epsilon>0$, $t_1=t^{\star}-\epsilon>t_0$, the associated
tetrahedron $\lozenge\r_0\r_1\r_2\r_3$ of $\r(t),t\in [t_0,t_1]$ is
not degenerated. Furthermore,

\begin{description}
  \item[1)]  $\r(t)$ passes through the endpoints $\r_0,\r_3$ with the
  corresponding tangent directions $\r'(t_0)$ and $\r'(t_1)$ parallel to
  $\r_0\r_1$ and $\r_2\r_3$ respectively.
  \item[2)]  $\r_0\r_1\r_2$ and $\r_1\r_2\r_3$ are
  the osculating planes of $\r(t)$ at the endpoints $\r_0$ and $\r_3$, respectively.
  \item[3)]  $\r(t)$ lies inside its control tetrahedron
  $\lozenge\r_0\r_1\r_2\r_3$.
  \item[4)]  $\r(t)$ has no singular points and $\ka(t)\neq 0,
  \tau(t)\neq0$ in $[t_0,t_1]$.
  \item[5)] There exists only one parallel point between $\r_1$ and $\r_0\r_2\r_3$,
  same to $\r_2$ and $\r_0\r_1\r_3$.
  \end{description}
\end{theorem}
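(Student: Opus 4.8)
The plan is to verify the five properties largely by transcribing the construction of the associated tetrahedron into the language of the four Conditions, treating the non-degeneracy together with properties 1), 2) and 4) as essentially bookkeeping and concentrating the real work on 3) and 5).

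First I would record that the tetrahedron is non-degenerate. Condition I), taken at $s_1=t_0,s_2=t_1$, gives $\al^{+}(t_0)\cdot\ga^{-}(t_1)\neq0$ and $\al^{-}(t_1)\cdot\ga^{+}(t_0)\neq0$; since a tangent line lies in its osculating plane, this forces $O^{+}(t_0)$ and $O^{-}(t_1)$ to be non-parallel, so $L=O^{+}(t_0)\cap O^{-}(t_1)$ is a genuine line meeting each tangent line transversally and $\r_1,\r_2$ are well defined. Condition II) at $s_1=t_0,s_2=t_1$ says the tangent lines $\T^{+}(t_0)\ni\r_0,\r_1$ and $\T^{-}(t_1)\ni\r_2,\r_3$ are skew, so $\r_0,\r_1,\r_2,\r_3$ are not coplanar and $\lozenge\r_0\r_1\r_2\r_3$ is non-degenerate. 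Property 1) is immediate since $\r_0=\r(t_0),\r_3=\r(t_1)$ and $\r_1\in\T^{+}(t_0),\r_2\in\T^{-}(t_1)$. Property 2) follows because $\r_0,\r_1\in\T^{+}(t_0)\subset O^{+}(t_0)$ and $\r_2\in L\subset O^{+}(t_0)$, so the face $\r_0\r_1\r_2$ is exactly $O^{+}(t_0)$; symmetrically $\r_1\r_2\r_3=O^{-}(t_1)$. Property 4) holds because $t_1<t_1^{\star}$, the nearest parameter of $\PS$, so $[t_0,t_1]$ carries no character point and hence $\ka,\tau$ do not vanish and there are no singular points.

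Next I would prove 5). A parallel point associated to $P_1=\r_0\r_2\r_3$ is a $t$ with $\r'(t)$ parallel to the plane $P_1$; since $\r_2\r_3\parallel\al(t_1)$ lies in $P_1$, two interior parallel points $t_a<t_b$ would make $\al(t_a),\al(t_b),\al(t_1)$ lie in the two-dimensional direction space of $P_1$, i.e. $|\al(t_a),\al(t_b),\al(t_1)|=0$ with $t_a<t_b<t_1$, contradicting Condition IV). Hence there is at most one; existence follows from Rolle's theorem applied to the signed distance to $P_1$, which vanishes at $\r_0,\r_3\in P_1$, and this critical point is the maximum because both endpoints lie on $P_1$. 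The argument for $P_2=\r_0\r_1\r_3$ is identical, using $\r_0\r_1\parallel\al(t_0)\subset P_2$ and triples $\al(t_0),\al(t_c),\al(t_d)$.

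The substance is 3), which I would treat face by face. For the osculating faces, Condition III) at $s_1=t_0$ gives $O^{+}(t_0)(\r(s_2))\neq0$ for all $s_2\in(t_0,t_1]$; being continuous and nonvanishing, the signed distance to $O^{+}(t_0)$ has constant sign, so the whole arc lies strictly on one side of the face $\r_0\r_1\r_2$, necessarily the side of the opposite vertex $\r_3$; the symmetric half of Condition III) handles $\r_1\r_2\r_3=O^{-}(t_1)$ and $\r_0$. For $P_1$ I would use 5): the signed distance $d_1$ vanishes at $t_0,t_1$ and has a single interior critical point, so any further interior zero would force a second critical point by Rolle, a contradiction; hence $d_1$ has constant sign on $(t_0,t_1)$, and its sign is read near $t_0$ from $d_1'(t_0)=\r'(t_0)\cdot\hat n_1$ with $\r'(t_0)\parallel\r_0\r_1$ pointing to the $\r_1$-side, placing the arc on the $\r_1$-side of $P_1$. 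Symmetrically, since $\al(t_0)$ lies in $P_2$ one instead reads the sign near $t_1$ via $\r'(t_1)\parallel\r_2\r_3$, placing the arc on the $\r_2$-side of $P_2$. Intersecting the four half-spaces yields 3).

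I expect the main obstacle to be pinning down the correct sign in 3) for $P_1$ and $P_2$: the no-crossing (constant-sign) part is clean once 5) is in hand, but to conclude that the arc lies on the vertex side rather than the opposite side I must know that $\r_1$ and $\r_2$ sit on the forward, respectively backward, rays of the tangent lines, i.e. that $\r_1-\r_0$ is a positive multiple of $\r'(t_0)$ and $\r_2-\r_3$ a negative multiple of $\r'(t_1)$. This orientation is exactly what relies on $t_1$ being taken below the threshold $t^{\star}$ produced by Conditions I)--IV), and establishing it carefully, presumably via the leading-order Taylor behavior of the intersections $\r_1,\r_2$ as $t_1\to t_0$, is the delicate step; everything else is either construction or a one-line sign or Rolle argument.
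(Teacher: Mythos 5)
Your proposal is correct in substance and follows essentially the same route as the paper's proof: non-degeneracy and properties 1), 2), 4) are read off from Conditions I)--III) and the construction, and the heart of 3) and 5) is the same parallel-point mechanism, namely that three tangent vectors parallel to one face of the tetrahedron give a vanishing determinant, contradicting Condition IV). Your small deviations are harmless or improvements: using Condition III) instead of I) for the faces $\r_0\r_1\r_2$ and $\r_1\r_2\r_3$ is actually cleaner, since there the correct side is pinned down by the curve's own endpoint; and proving 5) first, then getting the no-crossing half of 3) from Rolle plus uniqueness of the parallel point, is logically equivalent to the paper's order. (One quibble on 4): the absence of self-intersections does not follow from $[t_0,t_1]$ avoiding $\PS$, which contains only inflection and torsion-vanishing parameters; it follows from Condition II), since two tangent lines through a common point are coplanar.)

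The ``delicate step'' you flag --- that $\r_1-\r_0$ is a positive multiple of $\r'(t_0)$ and $\r_2-\r_3$ a negative multiple of $\r'(t_1)$ --- is a genuine requirement, and you are right to isolate it: without it the arc could a priori lie entirely on the side of $P_1$ opposite to $\r_1$, in which case only one interior parallel point is forced and no contradiction with Condition IV) arises. It is worth noting that the paper's own proof silently skips exactly this case: its assertion that failure of 3) produces \emph{two} parallel points on \emph{different} sides of $P_1$ covers only the crossing case, not the wholly-on-the-wrong-side case. The orientation fact is true, and it surfaces in the paper only later as Lemma~\ref{lemma-pre1}, proved by the Taylor expansion you anticipate: the osculating plane at $t_0+h$ meets $\T(t_0)$ at $\r_0+\tfrac{h}{3}\r'(t_0)+O(h^2)$, i.e.\ on the forward ray; then, since by Condition I) this intersection point varies continuously in the second parameter and by Condition III) it can never return through $\r_0$ (that would put $\r(t_0)$ on a later osculating plane), it stays on the forward ray for all of $(t_0,t_1]$. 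Writing out this computation would close the one hole in your argument, and with it your proof is sound --- and more candid than the printed one about where the real work lies.
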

\begin{proof}
According to conditions I) to III), the tetrahedron
$\lozenge\r_0\r_1\r_2\r_3$ does not degenerate. 1), 2), and 4) are
also followed by the discussions.

The curve segment is inside the tetrahedron. We claim that the curve
segment and $\r_3$ are on the same side of plane $P_3=\r_0\r_1\r_2$.
Otherwise, there exists a parallel point $\p$ associated to $P_3$
but on the different side with $\r_3$, since $\r(t)$ is a smooth
segment. Then $\al(\p)$ is parallel to $P_3$ which contradicts to
I). Similarly, the curve and $\r_0$ are on the same side of
$P_0=\r_1\r_2\r_3$.
Furthermore, the curve and $\r_1$ are on the same side of
$P_1=\r_0\r_2\r_3$. Otherwise, there exist at least two parallel
points $\p_1,\p_2$ on different sides of $P_1$. Then
$|\al(\p_1),\al(\p_2),\al(\r_3)|=0$ which contradicts to condition
IV). Similarly, the curve and $\r_2$ are on the same side of
$P_2=\r_0\r_1\r_3$. Therefore, 3) is followed.

Finally, 5) is correct. Otherwise, there exist at least two parallel
points associated to $P_1$ or $P_2$ which will lead a contradiction
to condition IV). \hfill\qed\end{proof}

\begin{proposition}\label{sub-prop} For any $t^{\star}_1<t^{\star}_2\in[t_0,t_1]$,
the sub-tetrahedron
$\lozenge\r_0^{\star}\r_1^{\star}\r_2^{\star}\r_3^{\star}$ of the
sub-segment $\r^{\star}(t),t\in[t^{\star}_1,t^{\star}_2]$ also has
the properties listed in Theorem~\ref{divide-thm}.
  \end{proposition}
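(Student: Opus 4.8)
The plan is to show that the sub-segment $\r^{\star}(t)$ inherits Conditions I) through IV) from the full segment, after which the argument of Theorem~\ref{divide-thm} transfers essentially verbatim. The key observation is that each of these conditions is a statement universally quantified over all pairs $t_0\le s_1<s_2\le t_1$ (for I, II, III) or all triples $s_1<s_2<s_3$ in $[t_0,t^{\star}]$ (for IV) of parameters in the interval. Since $[t^{\star}_1,t^{\star}_2]\subseteq[t_0,t_1]\subseteq[t_0,t^{\star}]$, every pair or triple of parameters drawn from the sub-interval is also one drawn from the full interval. Hence the non-vanishing of $F_1$, $F_2$, $F$ and of the triple product $|\al(s_1),\al(s_2),\al(s_3)|$ is automatically inherited on $[t^{\star}_1,t^{\star}_2]$; in this sense the conditions are monotone under shrinking the interval.

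First I would verify that the associated tetrahedron $\lozenge\r_0^{\star}\r_1^{\star}\r_2^{\star}\r_3^{\star}$ is well-defined. Its construction uses the one-sided osculating planes $O^{+}(t^{\star}_1), O^{-}(t^{\star}_2)$ and tangent lines $\T^{+}(t^{\star}_1),\T^{-}(t^{\star}_2)$ of $\r$, which exist for any parameters; moreover, at any parameter interior to $[t_0,t_1]$ the left and right frames coincide by 4) of Theorem~\ref{divide-thm}. The inherited Conditions I), II) and III) then guarantee, exactly as in Theorem~\ref{divide-thm}, that the two tangent lines meet the intersection line of the two osculating planes, that $\r_1^{\star}\neq\r_2^{\star}$, and that the four vertices are not coplanar, so the sub-tetrahedron does not degenerate.

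With non-degeneracy established, properties 1), 2) and 4) follow immediately from the construction and the absence of character points on $[t^{\star}_1,t^{\star}_2]$. For 3) and 5) I would repeat the contradiction arguments of Theorem~\ref{divide-thm}. If the curve and $\r_1^{\star}$ lay on opposite sides of $P_1^{\star}=\r_0^{\star}\r_2^{\star}\r_3^{\star}$, or if there were two parallel points $\p_1,\p_2\in(t^{\star}_1,t^{\star}_2)$ associated to $P_1^{\star}$ (and symmetrically for $P_2^{\star}$), then, since the edge $\r_2^{\star}\r_3^{\star}$ is parallel to $\al(t^{\star}_2)$, the three tangents would satisfy $|\al(\p_1),\al(\p_2),\al(\r_3^{\star})|=0$; as $\p_1,\p_2,t^{\star}_2$ all lie in $[t^{\star}_1,t^{\star}_2]\subseteq[t_0,t^{\star}]$, this contradicts the inherited Condition IV). The analogous statements for the faces $P_0^{\star}$ and $P_3^{\star}$, which are the osculating planes at the endpoints, similarly contradict the inherited Condition I).

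I do not expect a genuine obstacle here: the single point requiring care is recognizing that Conditions I)--IV) are hereditary under passing to sub-intervals, and then checking that every tangent, parallel point, and endpoint invoked in the contradiction arguments has a parameter lying inside $[t^{\star}_1,t^{\star}_2]$, so that the inherited conditions indeed apply. Once this bookkeeping is in place, the proof of Theorem~\ref{divide-thm} carries over directly to the sub-segment.
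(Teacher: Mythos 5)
Your proposal is correct and follows essentially the same route as the paper: the paper's proof consists precisely of the observation that Conditions I)--IV) hold for \emph{all} parameters throughout $[t_0,t_1]$, not merely at the endpoints, and are therefore inherited by any sub-interval $[t^{\star}_1,t^{\star}_2]$, after which Theorem~\ref{divide-thm} applies verbatim. Your write-up simply makes explicit the non-degeneracy check and the contradiction arguments that the paper leaves implicit.
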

\begin{proof}
In the dividing process, the conditions in I) to IV) are satisfied
for the parameters through the interval not just only for the
endpoints. Then the properties are all satisfied within
 $[t_1^{\star},t_2^{\star}]\subset[t_0,t_1]$.
\hfill\qed\end{proof}

\subsection{Further properties of the divided segment}\label{subsec-3.2}
In this subsection, we prove that the curve segment obtained in the
preceding section also has properties 6) and 7) in Lemma
\ref{Bezier-lm}. Before that, we need some preparations.

Suppose that the curve segment $\r(t),t\in[t_0,t_1]$ satisfies
conditions I) - IV) in the preceding section.

\begin{lemma}\label{lemma-pre1}
Let $\lozenge\r_0\r_1\r_2\r_3$ be the control tetrahedron of a given
curve segment $\r(t),t\in[t_0,t_1]$. Then for any $t^\star
\in(t_0,t_1)$, the control tetrahedron
$\lozenge\r_0\r_1^\star\r_2^\star\r_3^\star$ of the curve segment
$\r(t),t\in[t_0,t^\star]$ has the following properties:
\begin{enumerate}
  \item $\r_1^\star$ and $\r_1$ are  on the same side of $\r_0$ in
  the tangent line $\T(t_0)$;
    \item $\r_2^\star$ and $\r_2$ are on the same side of $\T(t_0)$ in
    the osculating plane $O(t_0)$.
  \end{enumerate}
\end{lemma}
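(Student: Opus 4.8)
The plan is to first replace the definitions of $\r_1^\star$ and $\r_2^\star$ by clean intersection formulas. The sub-segment $\r(t),t\in[t_0,t^\star]$ has the same left endpoint as the full segment, so its right tangent line and right osculating plane at $t_0$ are still $\T(t_0)$ and $O(t_0)$; only the far-end data changes from $t_1$ to $t^\star$. Writing $L^\star=O(t_0)\cap O(t^\star)$ for the new intersection line and using that a tangent line always lies inside its own osculating plane, I would observe
\[
\r_1^\star=\T(t_0)\cap L^\star=\T(t_0)\cap O(t^\star),\qquad
\r_2^\star=\T(t^\star)\cap L^\star=\T(t^\star)\cap O(t_0).
\]
Because setting $t^\star=t_1$ recovers exactly $\r_1$ and $\r_2$, both assertions reduce to sign-constancy statements for these points as $t^\star$ ranges over $(t_0,t_1]$. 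Throughout, Theorem~\ref{divide-thm} guarantees that there are no character points in $[t_0,t_1]$, so $\al,\be,\ga$ are well defined and continuous and the intersections above are single, continuously varying points.

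For statement 1, I would parametrize $\T(t_0)=\r_0+\lambda\,\al(t_0)$ and solve for the intersection with $O(t^\star)$, obtaining
\[
\lambda(t^\star)=\frac{(\r(t^\star)-\r_0)\cdot\ga(t^\star)}{\al(t_0)\cdot\ga(t^\star)}.
\]
Here the denominator is precisely $F_1$ of Condition I evaluated at $(t_0,t^\star)$, and the numerator is, up to sign, $F_1$ of Condition III at $(t_0,t^\star)$; hence both are continuous and nonzero on $(t_0,t_1]$. Therefore $\lambda$ is continuous and never vanishes there, so by the intermediate value theorem it keeps a constant sign. Since $\r_1^\star=\r_0+\lambda(t^\star)\,\al(t_0)$ and $\r_1=\r_0+\lambda(t_1)\,\al(t_0)$ share that sign, they lie on the same ray of $\T(t_0)$ out of $\r_0$.

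For statement 2, both $\r_2$ and $\r_2^\star$ lie in the plane $O(t_0)$, and within that plane the line $\T(t_0)$ (the $\al(t_0)$-axis through $\r_0$) separates the two half-planes according to the sign of the $\be(t_0)$-component. I would set $g(t^\star)=(\r_2^\star-\r_0)\cdot\be(t_0)$, where $\r_2^\star=\T(t^\star)\cap O(t_0)$ is well defined because $\al(t^\star)\cdot\ga(t_0)\neq0$ by Condition I. Then $g(t^\star)=0$ is equivalent to $\r_2^\star\in\T(t_0)$, i.e.\ to $\T(t_0)$ and $\T(t^\star)$ meeting and hence being coplanar, which is exactly what Condition II forbids for the pair $(t_0,t^\star)$. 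Thus $g$ is continuous and nonzero on $(t_0,t_1]$, so it has constant sign, placing $\r_2^\star$ and $\r_2=\r_2^\star|_{t^\star=t_1}$ on the same side of $\T(t_0)$.

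The main obstacle is the opening reduction: the whole argument hinges on $\r_1^\star=\T(t_0)\cap O(t^\star)$ and $\r_2^\star=\T(t^\star)\cap O(t_0)$, so the care must go into justifying that each of these intersections is a single, non-degenerate point depending continuously on $t^\star$ (which is where Conditions I--III and the nondegeneracy in Theorem~\ref{divide-thm} enter). Once this is secured, every sign-constancy claim is a one-line appeal to the intermediate value theorem together with the relevant inequality from Conditions I, II, and III.
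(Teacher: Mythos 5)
Your proof is correct, but it follows a genuinely different route from the paper's. The paper disposes of Lemma~\ref{lemma-pre1} in one line, appealing to first and second order Taylor expansion of $\r(t)$ at $t_0$: that argument is local in nature, and (if carried out) it identifies concretely \emph{which} side the points lie on --- $\r_1^\star$ on the ray in the direction $\al(t_0)$, and $\r_2^\star$ on the curvature side, i.e.\ the positive $\be(t_0)$ half-plane --- information that is implicitly reused later (e.g.\ in the proof of Theorem~\ref{sub-thm}). Your argument instead is global and computation-free: you reduce $\r_1^\star$, $\r_2^\star$ to the intersections $\T(t_0)\cap O(t^\star)$ and $\T(t^\star)\cap O(t_0)$ (a reduction that is valid, since $\T(t_0)\subset O(t_0)$ and $\T(t^\star)\subset O(t^\star)$ force the intersections with $L^\star$ and with the opposite osculating plane to coincide), and then observe that the relevant signed quantities --- the coefficient $\lambda(t^\star)$ along $\T(t_0)$ and the $\be(t_0)$-component $g(t^\star)$ --- are continuous and nonvanishing on the connected set $(t_0,t_1]$, with nonvanishing supplied exactly by Conditions I) and III) for $\lambda$ and by Condition II) (non-coplanarity of tangent lines) for $g$. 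Sign constancy on $(t_0,t_1]$ then gives ``same side'' without ever computing what the side is, which is all the lemma asserts. This buys rigor where the paper is only a sketch, and it makes explicit which of Conditions I)--III) each half of the lemma actually depends on; what it does not deliver is the orientation information that the Taylor-expansion route provides. One small caution: rather than saying Theorem~\ref{divide-thm} excludes character points from $[t_0,t_1]$, note that $t_0$ itself may be a dividing point (even a cusp), in which case $\T(t_0)$, $O(t_0)$ mean the right tangent and right osculating plane, $\al^{+}(t_0)$, $\ga^{+}(t_0)$; your formulas and the continuity argument on the open-ended interval $(t_0,t_1]$ go through verbatim with this reading, since Conditions I)--III) are stated precisely in terms of these one-sided frames.
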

\begin{proof} Using the first and second order Taylor expansion of $\r(t)$, one
can prove the lemma. \hfill\qed\end{proof}

\begin{lemma}\label{lemma-pre2}
Let $O(t^\star)$ be the osculating plane of curve $\r(t)$ at
$t^\star\in[t_0,t_1]$. If $\r(t)$ does not pass through
$O(t^\star)$, then $\tau(t^\star)=0$.
\end{lemma}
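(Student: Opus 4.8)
The plan is to study the signed distance from the curve to the osculating plane $O(t^\star)$ and to show that its leading nontrivial Taylor coefficient is, up to a positive factor, the torsion. Define the scalar function
\[
d(t):=(\r(t)-\r(t^\star))\cdot\ga(t^\star),
\]
whose sign records on which side of $O(t^\star)$ the point $\r(t)$ lies; since $\ga(t^\star)$ is the unit bi-normal normal to the osculating plane, the locus $d(t)=0$ is exactly $O(t^\star)$, and $d(t^\star)=0$ reflects that $\r(t^\star)$ lies on the plane. In this language, saying that $\r(t)$ does not pass through (does not cross) $O(t^\star)$ near $t^\star$ is precisely the statement that $d(t)$ does not change sign as $t$ passes $t^\star$. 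Note that $\kappa(t^\star)\neq 0$ on the segment considered, so $\r'(t^\star)\times\r''(t^\star)\neq\mathbf{0}$ and $\ga(t^\star)$ is well defined.

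First I would Taylor expand $\r$ at $t^\star$, writing $h=t-t^\star$, to obtain
\[
d(t)=h\,\r'(t^\star)\cdot\ga(t^\star)+\tfrac{h^2}{2}\,\r''(t^\star)\cdot\ga(t^\star)+\tfrac{h^3}{6}\,\r'''(t^\star)\cdot\ga(t^\star)+O(h^4).
\]
Because $\ga(t^\star)$ is orthogonal to the osculating plane, which is spanned by $\r'(t^\star)$ and $\r''(t^\star)$, the linear and quadratic coefficients vanish identically. The cubic coefficient simplifies, using the cyclic invariance of the scalar triple product and the torsion formula of the Preliminaries, to
\[
\r'''(t^\star)\cdot\ga(t^\star)=\frac{\r'''(t^\star)\cdot(\r'(t^\star)\times\r''(t^\star))}{\|\r'(t^\star)\times\r''(t^\star)\|}=\frac{(\r',\r'',\r''')}{\|\r'\times\r''\|}\Big|_{t^\star}=\tau(t^\star).
\]
In particular this coefficient vanishes if and only if $\tau(t^\star)=0$, independent of the precise normalization chosen for $\tau$.

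Hence $d(t)=\tfrac{1}{6}\tau(t^\star)\,h^3+O(h^4)$. If $\tau(t^\star)\neq 0$, then for $|h|$ small the cubic term dominates the remainder, so $d(t)$ carries the sign of $\tau(t^\star)$ for $h>0$ and the opposite sign for $h<0$; thus $\r(t)$ genuinely crosses $O(t^\star)$. Taking the contrapositive, if $\r(t)$ does not pass through $O(t^\star)$ then $\tau(t^\star)=0$, which is the claim. I expect the only delicate point to be the bookkeeping that the $h^3$ term controls the sign of $d$ despite the $O(h^4)$ remainder; this is immediate for small $h$, and one only needs $\r$ to be three times differentiable at $t^\star$, which holds for the rational curves under consideration (using the one-sided left/right derivatives should $t^\star$ be an endpoint). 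The algebra identifying the cubic coefficient with the torsion is routine, so I anticipate no substantive obstacle beyond making this sign argument precise.
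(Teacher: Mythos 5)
Your proposal is correct and follows essentially the same route as the paper: the paper's proof is a one-line appeal to the third-order Taylor expansion (as in the discussion of Condition I) to conclude $|\r'(t^\star),\r''(t^\star),\r'''(t^\star)|=0$, i.e.\ $\tau(t^\star)=0$, which is exactly your signed-distance computation $d(t)=\tfrac{1}{6}\tau(t^\star)h^3+o(h^3)$ written out in full. Your version merely makes explicit the sign-change argument and the identification of the cubic coefficient with the torsion, details the paper leaves to the reader.
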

\begin{proof} Similar to the discussions of condition I), using the third order Taylor expansion,
one can see that $|\r'(t^\star),\r''(t^\star),\r'''(t^\star)|=0$,
that is $\tau(t^\star)=0$. \hfill\qed\end{proof}

We now prove another key property for the  curve segments.

\begin{theorem}\label{sub-thm} Let
$\lozenge\r_0\r_1\r_2\r_3$ be the associated tetrahedron of a curve
segment $\r(t),t\in[t_0,t_1]$. Then $\|\r_0\r_{01}\|,
\|\r_1\r_{12}\|$, and $\|\r_2\r_{23}\|$ are strictly monotone in
$(t_0,t_1)$ where $\r_{01},\r_{12}$, and $\r_{23}$ are the
intersection points of the osculating plane $O(t^{\star})$ and
$\r_0\r_1,\r_1\r_2$, and $\r_2\r_3$ respectively.
\end{theorem}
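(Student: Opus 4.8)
Let me think about what Theorem \ref{sub-thm} is asking. We have a quasi-cubic curve segment $\r(t), t \in [t_0, t_1]$ with associated tetrahedron $\lozenge\r_0\r_1\r_2\r_3$. For each parameter $t^\star \in (t_0, t_1)$, we take the osculating plane $O(t^\star)$ and intersect it with the three edges $\r_0\r_1$, $\r_1\r_2$, $\r_2\r_3$ to get points $\r_{01}, \r_{12}, \r_{23}$. We want to show the distances $\|\r_0\r_{01}\|$, $\|\r_1\r_{12}\|$, $\|\r_2\r_{23}\|$ are strictly monotone functions of $t^\star$.

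This is the analog of property 6) in Lemma \ref{Bezier-lm}, which was stated for genuine cubic Bézier curves. And indeed the proof of 6) in the lemma just points back to this very theorem. So what I'm proving is the "real" version, from which the Bézier special case follows.

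**Key ideas for the proof.**

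The crucial structural fact I have available is Lemma \ref{lemma-pre1}: if I shrink the segment from $[t_0, t_1]$ to $[t_0, t^\star]$, the new control tetrahedron $\lozenge\r_0\r_1^\star\r_2^\star\r_3^\star$ has $\r_1^\star$ on the same side of $\r_0$ along the tangent line $\T(t_0)$, and $\r_2^\star$ on the same side of $\T(t_0)$ within the osculating plane $O(t_0)$. And Lemma \ref{lemma-pre2} tells me that since there are no torsion-vanishing points in $(t_0, t_1)$ (by property 4) of Theorem \ref{divide-thm}), the curve genuinely crosses each osculating plane $O(t^\star)$ — so the osculating plane really does cut through the interior of the segment and its tetrahedron.

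**How I would think about it.**

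I would think about what the point $\r_{01}(t^\star)$ *is*. As $t^\star$ ranges over $(t_0, t_1)$, the osculating plane $O(t^\star)$ sweeps through the tetrahedron. I would like to show that the intersection point $\r_{01}(t^\star)$ on edge $\r_0\r_1$ moves monotonically away from $\r_0$. The natural way to capture this is to reinterpret $\r_{01}(t^\star)$ via the shrunk tetrahedron: the osculating plane $O(t^\star)$ at the *right endpoint* of the sub-segment $[t_0, t^\star]$ is, by property 2) of Theorem \ref{divide-thm} applied to that sub-segment, exactly the plane $\r_1^\star\r_2^\star\r_3^\star$. So $\r_{01}(t^\star) = O(t^\star) \cap \r_0\r_1 = \r_1^\star$ (the first control point of the shrunk tetrahedron lies on $\r_0\r_1$ by Lemma \ref{lemma-pre1}, item 1). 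This is the key identification: the intersection point is precisely the control vertex $\r_1^\star$ of the sub-tetrahedron. Then monotonicity of $\|\r_0\r_{01}\|$ becomes monotonicity of $\|\r_0\r_1^\star\|$ as $t^\star$ grows — and this follows because $\r_1^\star$ always lies on the same ray from $\r_0$ (Lemma \ref{lemma-pre1}), so I only need to show the scalar parameter is strictly increasing. A strictness/monotonicity argument via the first-order Taylor behavior of $\r(t)$ near $t_0$, combined with the no-degeneracy conditions I-IV, should close it. I would treat $\|\r_1\r_{12}\|$ and $\|\r_2\r_{23}\|$ symmetrically, using item 2 of Lemma \ref{lemma-pre1} and the analogous identification from the left endpoint.

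**Main obstacle.**

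The hard part will be establishing strict monotonicity rather than mere monotonicity, and ruling out the possibility that $\r_{01}(t^\star)$ stalls or reverses direction. The identification $\r_{01}(t^\star) = \r_1^\star$ reduces the geometric sweep to tracking one scalar coordinate along a fixed ray, but I still must prove that coordinate is strictly increasing for all $t^\star \in (t_0, t_1)$, not just locally near $t_0$. If the derivative of the position parameter vanished at some interior $t^\star$, that would signal a degeneracy — a coincidence of tangent directions or a torsion-vanishing event — which conditions I-IV and Lemma \ref{lemma-pre2} are designed to exclude. So the crux is to show that a stationary point of the distance function forces one of the forbidden coplanarity relations (most likely a triple $|\al(s_1), \al(s_2), \al(s_3)| = 0$ violating Condition IV, or $\tau = 0$ violating property 4)), thereby deriving a contradiction. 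Pinning down exactly which degeneracy the stationarity produces, and verifying it is genuinely excluded, is where the real work lies.
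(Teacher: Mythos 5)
Your setup is sound and matches the paper's implicit starting point: the identification $\r_{01}(t^\star)=\T(t_0)\cap O(t^\star)$, i.e.\ the vertex $\r_1^\star$ of the sub-tetrahedron of $\r(t),t\in[t_0,t^\star]$, together with Lemma~\ref{lemma-pre1}, correctly reduces monotonicity of $\|\r_0\r_{01}\|$ to tracking a scalar along a fixed ray. But the proposal stops exactly where the proof has to happen. You write that a stationarity argument ``should close it'' and that identifying the forbidden degeneracy ``is where the real work lies'' --- that work is the theorem. The paper does not argue via derivatives or stationary points at all (note that strict monotonicity of a continuous function is equivalent to injectivity, and a vanishing derivative alone, as for $t^3$ at $0$, contradicts nothing). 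Instead it proves injectivity directly: if $O(t_1^\star)$ and $O(t_2^\star)$ met $\r_0\r_1$ in the same point, one reduces (by passing to the sub-segment $[t_0,t_2^\star]$) to the statement that no interior osculating plane $O(t^\star)$ passes through the vertex $\r_1$; assuming it does, a two-case analysis of how $O(t^\star)$ cuts the tetrahedron shows that either the two intersection points of $\T(t^\star)$ with $O(t_0)$ and $O(t_1)$ lie on the same side of $\r(t^\star)$, so one of the sub-segments at $t^\star$ cannot lie inside its own sub-tetrahedron (contradicting Proposition~\ref{sub-prop}), or else $\r_0$ and $\r_3$ lie on the same side of $O(t^\star)$, so the curve does not cross $O(t^\star)$ and Lemma~\ref{lemma-pre2} forces $\tau(t^\star)=0$, contradicting Theorem~\ref{divide-thm}~4). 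Your guess that Condition~IV supplies the contradiction points at the wrong mechanism; the contradiction comes from tetrahedron containment and torsion, not from a coplanar triple of tangents.

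There is a second concrete gap: the middle distance $\|\r_1\r_{12}\|$ cannot be ``treated symmetrically'' as you claim. The point $\r_{12}$ lies on the edge $\r_1\r_2$, which is the line $O(t_0)\cap O(t_1)$, so $\r_{12}$ is the triple intersection $O(t_0)\cap O(t_1)\cap O(t^\star)$; it is not a vertex of any sub-tetrahedron, and neither item of Lemma~\ref{lemma-pre1} applies to it. (Only $\r_{23}=\T(t_1)\cap O(t^\star)$ is genuinely symmetric to $\r_{01}$.) The paper needs a separate argument here: it first establishes, inside this very proof, that the sub-tetrahedron of any $[t_1^\star,t_2^\star]$ is nested inside $\lozenge\r_0\r_1\r_2\r_3$ (Proposition~\ref{sub-inside}), and then shows that if $O(t_1^\star)$ and $O(t_2^\star)$ shared a point of $\r_1\r_2$, their intersection line --- which is the middle edge of the sub-tetrahedron of $[t_1^\star,t_2^\star]$ --- would run outside the big tetrahedron, violating that nesting. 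Without this step your proposal proves at most two of the three claimed monotonicity statements, and without the injectivity case analysis it proves none of them.
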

\begin{proof}

Firstly, the intersection point $\r_{01}$ of $\r_0\r_1$ and the
osculating plane $O(t^{\star})$ must be on the same side with $\r_1$
with respect to $\r_0$ on the curve segment. Otherwise, subdividing
$\r(t)$ at $t^{\star}$,  the sub-segment $\r_1^{\star}(t),
t\in[t_0,t^{\star}]$ will not be inside its tetrahedron for
$\r_{01}\neq \r_0$ by Lemma \ref{lemma-pre1}. We denote by $\r_{02}$
the intersection point of line $\r_0\r_2$ and $O(t^\star)$.
Similarly, $\r_{23}$ is on the same side with $\r_2$ with respect to
$\r_3$ and $\r_{02}$ is on the same side with $\r_2$ w.r.t. $\r_0$
(See Figure~\ref{figproof}).
  \begin{figure}[!h]
 \centering
 \includegraphics[width=0.50\textwidth]{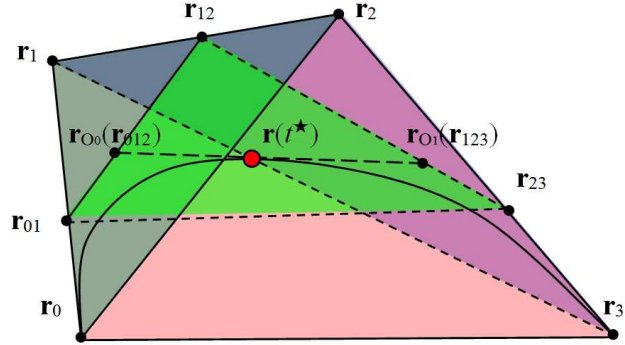}
 \caption{The osculating plane}
 \label{figproof}
\end{figure}

Secondly, we claim that there exist no $t_1^{\star}<t_2^{\star}$ in
$[t_0,t_1]$ such that the osculating planes $O(t_1^{\star})$ and
$O(t_2^{\star})$ have the same intersection point $\r_{01}$ with
$\r_0\r_1$. It is sufficient to prove that there has no
$t^{\star}\in(t_0,t_1)$ such that the osculating plane
$O(t^{\star})$ passes through $\r_1$ by assuming $t_2^{\star}=t_1$
and denote $t_1^{\star}$ by $t^{\star}$. Otherwise, if the
osculating plane $O(t^{\star})$ passes through $\r_1$, then
$O(t^{\star})$ passes through the line $\r_1\r(t^{\star})$ but cannot pass through $\r_0$ and $\r_3$ by the restrictions in condition
I).
Hence $O(t^{\star})$ has only two possible cases: it either
intersects $\r_0\r_3$ and the polygonal line $\r_0\r_2\r_3$, or
intersects $\r_0\r_2$ and $\r_2\r_3$. In the first case, let the
intersection points of $\T(t^{\star})$ and $O(t_0)$, $O(t_1)$ be
$\r_{O_0}, \r_{O_1}$ respectively. Then $\r_{O_0}$ and $\r_{O_1}$
are on the same side with respect to $\r(t^{\star})$ in line
$\T(t^{\star})$. Which means that one of the sub-segments
$\r^{\star}_1(t),t\in[t_0,t^{\star}]$ and
$\r^{\star}_2(t),t\in[t^{\star},t_1]$ cannot be inside its
tetrahedron by the first paragraph of the proof, a contradiction to
Proposition~\ref{sub-prop}. In the second case, the points $\r_0$
and $\r_3$ are on the same side of $O(t^{\star})$. By
Proposition~\ref{sub-prop}, the sub-segment curves at $t=t^{\star}$
are also on the same side of $O(t^{\star})$. Then the curve $\r(t)$
does not pass through $O(t^{\star})$ at $t^{\star}$, which means
that $\tau(t^{\star})=0$ by Lemma~\ref{lemma-pre2}. Hence,
$\|\r_0\r_{01}\|$ and $\|\r_2\r_{23}\|$ are monotone.

It is known that $\r_{01}$ lies on $\r_0\r_1$ and  $\r_{23}$ lies on
$\r_2\r_3$. We claim that  $\r_{12}$ must be on $\r_1\r_2$.
Otherwise, assuming $O(t^{\star})$ has no common points with
$\r_1\r_2$, then $O(t^{\star})$ must intersect with
$\r_0\r_1,\r_0\r_2, \r_1\r_3,$ and $\r_2\r_3$. That means $\r_0$ and
$\r_3$ are on the same side of $O(t^{\star})$, and then
$\tau(t^{\star})=0$, a contradiction.

Since the curve is inside its tetrahedron, $\r(t^{\star})$ is inside
the quadrangle $\r_{01}\r_{12}\r_{23}\r_{30}$. Actually,
$\r(t^{\star})$ is inside the triangle $\r_{01}\r_{12}\r_{23}$.
$\r(t^{\star})$ cannot be on $\r_{01}$ and $\r_{23}$ according to
condition III).
 So, if $\r(t^{\star})$ is not inside the triangle
$\r_{01}\r_{12}\r_{23}$, then $\r(t^{\star})$ is on the opposite
side with $\r_{12}$ with respect to $\r_{01}\r_{23}$ or on
$\r_{01}\r_{23}$. Then $\T(t^{\star})\bigcap O(t_0)$ is not inside
$\r_{01}\r_{12}$, or, $\T(t^{\star})\bigcap O(t_1)$ is not inside
$\r_{12}\r_{23}$, since $\r_{01}\r_{12}\r_{23}\r_{30}$ is convex.
Without loss of generality, we suppose $\T(t^{\star})\bigcap O(t_0)$
is not in $\r_{01}\r_{12}$. Then $\T(t^{\star})\bigcap O(t_0)$ are
on the same side with $\r_2$ w.r.t. $\r_0\r_1$ in $O(t_0)$ by
Lemma~\ref{lemma-pre1}. Hence, $\T(t^\star)\bigcap O(t_0)$ and
$\T(t^{\star})\bigcap O(t_1)$ is on the same side of $\r(t^{\star})$
in $\T(t^{\star})$, which means that one of the sub-segments
$\r^{\star}_1(t),t\in[t_0,t^{\star}]$ and
$\r^{\star}_2(t),t\in[t^{\star},t_1]$ cannot be inside its
tetrahedron, a contradiction to Proposition~\ref{sub-prop}.
 % (Or, Which means $\al^-(t^{\star})=-\al^+(t^{\star})$,
 %and so $t^{\star}$ is a cusp, a contradiction).

Therefore, $\r(t^{\star})$ can only be inside the triangle
$\r_{01}\r_{12}\r_{23}$, and $\T(t^{\star})$ can only intersect
$\r_{01}\r_{12}$ with $\r_{012}$ and intersect $\r_{12}\r_{23}$ with
$\r_{123}$. Subdivide $\r(t)$ at $t=t^{\star}$ to get curve segments
$\r_1^{\star}(t),t\in[t_0,t^{\star}]$, and
$\r_2^{\star}(t),t\in[t^{\star},t_1]$, and their tetrahedrons as
$\lozenge\r_0\r_{01}\r_{012}\r(t^{\star})$ and
$\lozenge\r(t^{\star})\r_{123}\r_{23}\r_3$. It has been shown that
these two sub-tetrahedrons are inside the tetrahedron
$\lozenge\r_0\r_1\r_2\r_3$. As a consequence, for any
$t^{\star}_1<t^{\star}_2$ in $[t_0,t_1]$, the sub-tetrahedron of the
sub-segment $\r^{\star}(t),t\in[t^{\star}_1,t^{\star}_2]$ is inside
the tetrahedron $\lozenge\r_0\r_1\r_2\r_3$.

Finally, we prove that $\|\r_1\r_{12}\|$ is monotone. It suffices to
show that there exist no $t^{\star}_1<t^{\star}_2\in[t_0,t_1]$ such
that $O(t^{\star}_1)$ and $O(t^{\star}_2)$ have a common point in
$\r_1\r_2$. Otherwise, we assume $O(t^{\star}_1)$ and
$O(t^{\star}_2)$ have a common point $\r_{12}^{\star}$ in
$\r_1\r_2$. Since $\r_0\r_{01}$ and $\r_2\r_{23}$ are monotonously
increasing, $\r_{01}(t_1^{\star})$ and $\r_{23}(t_1^\star)$ are on
the same side of $O(t_2^\star)$. Hence the intersection line of
$O(t^{\star}_1)$ and $O(t^{\star}_2)$ can only be outside of the
tetrahedron $\lozenge\r_0\r_1\r_2\r_3$ passing through
$\r_{12}^{\star}$. Then the sub-tetrahedron of the sub-segment
$\r^{\star}_{12}(t),t\in[t^{\star}_1,t^{\star}_2]$, cannot be inside
the tetrahedron $\lozenge\r_0\r_1\r_2\r_3$, which contradicts to the
consequence in the preceding paragraph. \hfill\qed\end{proof}

For clarity, we  summarize the properties mentioned in the proof of
the above theorem as follows.
\begin{proposition}\label{sub-inside}
For any $t^{\star}_1<t^{\star}_2\in[t_0,t_1]$, the sub-tetrahedron
$\lozenge\r_0^{\star}\r_1^{\star}\r_2^{\star}\r_3^{\star}$ of the
sub-segment $\r^{\star}(t),t\in[t^{\star}_1,t^{\star}_2]$ is inside
the tetrahedron $\lozenge\r_0\r_1\r_2\r_3$.
\end{proposition}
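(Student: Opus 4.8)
The plan is to read this proposition off the proof of Theorem~\ref{sub-thm}, where the required inclusion was already established in the single-cut case, and then to upgrade it to an arbitrary subinterval by iterating the subdivision once. First I would recall the concluding construction in the proof of Theorem~\ref{sub-thm}: cutting $\r(t)$ at an interior value $t^{\star}$ produces the two sub-tetrahedra $\lozenge\r_0\r_{01}\r_{012}\r(t^{\star})$ and $\lozenge\r(t^{\star})\r_{123}\r_{23}\r_3$, both of which were shown to lie inside $\lozenge\r_0\r_1\r_2\r_3$. This already settles the two boundary cases $t^{\star}_1=t_0$ and $t^{\star}_2=t_1$, so only the genuinely interior situation $t_0<t^{\star}_1<t^{\star}_2<t_1$ needs a further word.

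For such a general pair, the idea is to subdivide twice and chain the inclusions. I would first cut at $t^{\star}_1$: the right piece $\r(t),\,t\in[t^{\star}_1,t_1]$ has its associated tetrahedron contained in $\lozenge\r_0\r_1\r_2\r_3$ by the single-cut case above. By Proposition~\ref{sub-prop} this piece again satisfies all the conditions of Theorem~\ref{divide-thm}, hence Theorem~\ref{sub-thm} applies to it verbatim. Cutting it a second time at $t^{\star}_2$ then places the sub-tetrahedron of $\r^{\star}(t),\,t\in[t^{\star}_1,t^{\star}_2]$ inside the tetrahedron of the segment $[t^{\star}_1,t_1]$, which in turn sits inside $\lozenge\r_0\r_1\r_2\r_3$. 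Since containment of tetrahedra is transitive, the claim follows.

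The monotonicity proved in Theorem~\ref{sub-thm}, together with Lemma~\ref{lemma-pre1}, is exactly what legitimizes each cut: it guarantees that the osculating plane at the cutting parameter meets $\r_0\r_1,\r_1\r_2,\r_2\r_3$ in the correct order, so that the new vertices $\r_{01},\r_{012},\r(t^{\star}),\r_{123},\r_{23}$ all land on genuine edges of the parent tetrahedron rather than on their extensions. The main point to be careful about is therefore not any new geometry but the bookkeeping of the iteration, namely that Proposition~\ref{sub-prop} really does let me re-invoke Theorem~\ref{sub-thm} on the intermediate segment $[t^{\star}_1,t_1]$; once that is granted, no further computation is needed and the proposition is immediate.
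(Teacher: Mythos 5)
Your proposal is correct and takes essentially the same route as the paper: Proposition~\ref{sub-inside} is stated there merely as a summary of the proof of Theorem~\ref{sub-thm}, whose concluding paragraphs establish exactly the single-cut inclusions of $\lozenge\r_0\r_{01}\r_{012}\r(t^{\star})$ and $\lozenge\r(t^{\star})\r_{123}\r_{23}\r_3$ in $\lozenge\r_0\r_1\r_2\r_3$ and then assert the general case. Your only addition is to spell out the chaining step (cut at $t^{\star}_1$, re-invoke the theorem on $[t^{\star}_1,t_1]$ via Proposition~\ref{sub-prop}, cut at $t^{\star}_2$, use transitivity of containment) that the paper compresses into the phrase ``as a consequence,'' which is a faithful filling-in rather than a different argument.
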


Similar to 7) of Lemma~\ref{Bezier-lm}, we have the following
proposition. The proof is also similar to that of 7) of
Lemma~\ref{Bezier-lm}.
\begin{proposition}\label{mono-sub}
$\|\r_0\r_{03}\|$ and $\|\r_1\r_{12}\|$ are strictly monotone with
$t^{\star}\in (t_0,t_1)$ where $\r_{03}$ and $\r_{12}$ are the
intersection points $\r_1\r_2\r(t^{\star})\bigcap \r_0\r_3$ and
$\r_0\r_3\r(t^{\star})\bigcap \r_1\r_2$ respectively.
\end{proposition}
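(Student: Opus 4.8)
The plan is to follow the proof of property 7) of Lemma~\ref{Bezier-lm}, replacing each fact that was special to a genuine cubic by its quasi-cubic counterpart from Section~\ref{subsec-3.2}. As in the cubic case, it suffices to prove that neither the plane $\r_1\r_2\r(t^{\star})$ nor the plane $\r_0\r_3\r(t^{\star})$ is tangent to $\r(t)$ at $\r(t^{\star})$ for any $t^{\star}\in(t_0,t_1)$. Indeed, the planes $\r_1\r_2\r(t^{\star})$ form a pencil rotating about the fixed line $\r_1\r_2$, and $\r_{03}$ is where this pencil meets the fixed line $\r_0\r_3$; writing $\r_{03}=\r_0+u(t^{\star})(\r_3-\r_0)$, the abscissa $u(t^{\star})$ is a ratio of two determinants affine in $\r(t^{\star})$, and a short computation shows $u'(t^{\star})=0$ precisely when the moving point leaves the pencil to first order, i.e. when $\al(t^{\star})$ is parallel to $\r_1\r_2\r(t^{\star})$. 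Since $\r(t^{\star})$ lies inside the tetrahedron by property 3) of Theorem~\ref{divide-thm}, the point $\r_{03}$ stays in the relative interior of the edge $\r_0\r_3$, so $u$ is finite and differentiable; hence non-tangency forces $u'\neq0$ on the connected interval $(t_0,t_1)$, whence $\|\r_0\r_{03}\|$ is strictly monotone. The same reduction applies to $\|\r_1\r_{12}\|$ with the roles of $\r_1\r_2$ and $\r_0\r_3$ exchanged.

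For the plane $\r_1\r_2\r(t^{\star})$ I would argue exactly as in 7). By property 2) of Theorem~\ref{divide-thm} the osculating planes at the endpoints are $O(t_0)=\r_0\r_1\r_2$ and $O(t_1)=\r_1\r_2\r_3$, so their common line is the edge $\r_1\r_2$. Suppose $\r_1\r_2\r(t^{\star})$ were tangent at $\r(t^{\star})$; then $\T(t^{\star})\subset\r_1\r_2\r(t^{\star})$, and since always $\T(t^{\star})\subset O(t^{\star})$, the two planes $O(t^{\star})$ and $\r_1\r_2\r(t^{\star})$ meet exactly along $\T(t^{\star})$. By Theorem~\ref{sub-thm} the osculating plane $O(t^{\star})$ meets the edge $\r_1\r_2$ in a single interior point $\mathbf{m}$; as $\mathbf{m}\in O(t^{\star})$ and $\mathbf{m}\in\r_1\r_2\subset\r_1\r_2\r(t^{\star})$, it lies on $\T(t^{\star})$. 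Since $\r_1\r_2\subset O(t_0)\cap O(t_1)$, the point $\mathbf{m}$ then equals both $\T(t^{\star})\cap O(t_0)$ and $\T(t^{\star})\cap O(t_1)$. This contradicts the fact, established inside the proof of Theorem~\ref{sub-thm} and guaranteed by Proposition~\ref{sub-prop}, that $\T(t^{\star})\cap O(t_0)$ and $\T(t^{\star})\cap O(t_1)$ lie on opposite sides of $\r(t^{\star})$ and are hence distinct. Thus $\|\r_0\r_{03}\|$ is strictly monotone, and here Theorem~\ref{sub-thm} plays the role that property 6) played in the cubic argument.

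The genuinely new point is the plane $\r_0\r_3\r(t^{\star})$, where the cubic proof simply invoked the degree bound (a cubic meets a plane through $\r_0,\r_3$ in at most one further point, so cannot be tangent at an interior point). Lacking a degree here, I would instead count parallel points. Let $d(t)=(\r(t)-\r_0)\cdot n$ be the signed distance to the fixed plane $R^{\star}=\r_0\r_3\r(t^{\star})$, with $n$ a unit normal. Since $\r_0,\r_3,\r(t^{\star})\in R^{\star}$ we have $d(t_0)=d(t^{\star})=d(t_1)=0$, and by property 4) of Theorem~\ref{divide-thm} these three parameters are distinct (the segment has no singular points). If $R^{\star}$ were tangent at $\r(t^{\star})$ then $d'(t^{\star})=0$ as well, and Rolle's theorem on $[t_0,t^{\star}]$ and $[t^{\star},t_1]$ would produce $\xi_1\in(t_0,t^{\star})$ and $\xi_2\in(t^{\star},t_1)$ with $d'(\xi_i)=0$. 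Because $d'(t)=\|\r'(t)\|\,\al(t)\cdot n$ with $\|\r'(t)\|\neq0$, each of $\xi_1,t^{\star},\xi_2$ is a parallel point associated to $R^{\star}$, so $\al(\xi_1),\al(t^{\star}),\al(\xi_2)$ are all parallel to $R^{\star}$, hence coplanar, giving $|\al(\xi_1),\al(t^{\star}),\al(\xi_2)|=0$. This contradicts Condition IV), which holds throughout $[t_0,t_1]$ by Proposition~\ref{sub-prop}. Hence $R^{\star}$ is not tangent at $\r(t^{\star})$, and $\|\r_1\r_{12}\|$ is strictly monotone.

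I expect the main obstacle to be this last step: with no degree available, monotonicity for the chord plane must be wrung out of Condition IV) through the parallel-point count, and some care is needed to confirm that the three critical parameters are genuinely distinct and that the reduction from strict monotonicity to non-tangency (the computation showing $u'=0$ iff the pencil stalls) is valid at the configurations that actually occur. The middle-edge plane, by contrast, transfers almost verbatim from 7) once Theorem~\ref{sub-thm} and Proposition~\ref{sub-prop} are substituted for property 6) and Decasteljau subdivision.
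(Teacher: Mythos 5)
Your proof is correct. For the middle-edge plane it is essentially the paper's own argument: the paper likewise reduces the proposition to non-tangency of the two planes, assumes $\r_1\r_2\r(t^{\star})$ tangent, uses Theorem~\ref{sub-thm} to force $\T(t^{\star})$ to pass through the common line $\r_1\r_2$ of $O(t_0)$ and $O(t_1)$, and then derives the contradiction by splitting at $t^{\star}$ and invoking Lemma~\ref{lemma-pre1} together with Proposition~\ref{sub-inside}; your version, in which $\T(t^{\star})\cap O(t_0)$ and $\T(t^{\star})\cap O(t_1)$ would coincide at $\mathbf{m}$, contradicting the opposite-sidedness established inside the proof of Theorem~\ref{sub-thm}, is the same contradiction in different packaging. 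Where you genuinely diverge is the chord plane $\r_0\r_3\r(t^{\star})$: the paper disposes of it with ``one can similarly discuss the case,'' which conceals a real issue, since in the cubic model (property 7) of Lemma~\ref{Bezier-lm}) this half rested on a degree bound that has no quasi-cubic analogue --- exactly the obstacle you identified. Your replacement --- Rolle's theorem producing parameters $\xi_1<t^{\star}<\xi_2$ whose tangents are all parallel to $R^{\star}$, hence linearly dependent, contradicting Condition IV), which holds on all of $[t_0,t_1]$ by Proposition~\ref{sub-prop} --- is correct and self-contained, and it is in the same spirit as the paper's own proof of property 5) of Theorem~\ref{divide-thm}, which also converts multiple parallel points into a vanishing determinant of tangent vectors. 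As a bonus, you make explicit the reduction from strict monotonicity to non-tangency (the pencil computation $u'=0$ iff tangency), which both the paper and the cubic case leave implicit; the side facts this needs --- $\r(t^{\star})$ off the line $\r_1\r_2$, the planes $O(t^{\star})$ and $\r_1\r_2\r(t^{\star})$ distinct, and $\r_0\r_3$ skew to $\r_1\r_2$ --- all follow from condition III), the proof of Theorem~\ref{sub-thm}, and non-degeneracy of the tetrahedron, so nothing breaks.
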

\begin{proof}
It is sufficient to prove that the planes $\r_1\r_2\r(t^{\star})$
and $\r_0\r_3\r(t^{\star})$ are not tangent to $\r(t)$ at
$t^{\star}\in (t_0,t_1)$. If the plane $\r_1\r_2\r(t^{\star})$ is
tangent to $\r(t)$ at  $t^{\star}\in (t_0,t_1)$, then the
osculating plane $O(t^{\star})$ must intersect
$\r_1\r_2\r(t^{\star})$ with the tangent line $\T(t^{\star})$. By
Theorem~\ref{sub-thm}, $\T(t^{\star})$ must intersect $\r_1\r_2$
which is the common line of $O(t_0)$ and $O(t_1)$. Dividing the
curve segment into two sub-segments $\r_1^{\star}(t)$ and
$\r_2^{\star}(t)$, then one of them cannot be inside its
sub-tetrahedron according to Lemma~\ref{lemma-pre1} which
contradicts to~Proposition~\ref{sub-inside}. And one can similarly
discuss the case for the plane $\r_0\r_3\r(t^{\star})$.
\hfill\qed\end{proof}

According to Proposition~\ref{mono-sub}, $\r(t)$ and the plane
$\r_1\r_2\r_M$ have a unique intersection point $\s_\r$ where
$\r_M=(\r_0+\r_3)/2$. We call $\s_\r$ the \emph{shoulder point} of
the segment $\r(t),t\in[t_0,t_1]$.
Similar to Proposition~\ref{sub-prop}, we can see that
Theorem~\ref{sub-thm} and Proposition~\ref{mono-sub} also hold for
any subsegment $\r^{\star}(t),t\in[t^{\star}_1,t^{\star}_2]$.

When we subdivide
the approximated curve segment at a point $t=t^{\star}$, by
Theorem~\ref{sub-thm}, we assume that the osculating plane
$O(t^{\star})$ intersects $\r_0\r_1, \r_1\r_2$ and $\r_2\r_3$ at
$\r_{01},\r_{12}$ and $\r_{23}$ respectively.
%
%The tangent line $\T(t^{\star})$ intersects $\r_{01}\r_{12}$ and
%$\r_{12}\r_{23}$ at $\r_{012}$ and $\r_{123}$ respectively. Line
%$\r_{12}\r(t^{\star})$ intersects $\r_{01}\r_{23}$ at $\r_{0123}$.
Then, one can have the following corollary.

\begin{corollary}\label{prop-in}
Let $k_1(t^\star)=\frac{|\r_1\r_{01}|}{|\r_1\r_0|},
k_2(t^\star)=\frac{|\r_2\r_{12}|}{|\r_2\r_1|}$ and
$k_3(t^\star)=\frac{|\r_3\r_{23}|}{|\r_3\r_2|}$,
 then $k_i(t^\star)$ is monotone and $k_i(t^\star)\in (0,1)
$ with $t^\star\in (t_0,t_1)$, $i=1,2,3$.
\end{corollary}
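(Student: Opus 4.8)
The plan is to deduce the corollary directly from Theorem~\ref{sub-thm} by recognizing each $k_i(t^\star)$ as an affine renormalization of one of the three distance functions already shown to be monotone there. First I would note that the edges $\r_0\r_1$, $\r_1\r_2$, and $\r_2\r_3$ are fixed edges of the associated tetrahedron, so the lengths $|\r_0\r_1|$, $|\r_1\r_2|$, and $|\r_2\r_3|$ are positive constants independent of $t^\star$. By (the proof of) Theorem~\ref{sub-thm} the intersection points $\r_{01}$, $\r_{12}$, $\r_{23}$ lie on these segments respectively, so collinearity gives $|\r_1\r_{01}|=|\r_0\r_1|-\|\r_0\r_{01}\|$, $|\r_2\r_{12}|=|\r_1\r_2|-\|\r_1\r_{12}\|$, and $|\r_3\r_{23}|=|\r_2\r_3|-\|\r_2\r_{23}\|$.

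Dividing each relation by its constant edge length yields
\[
k_1=1-\frac{\|\r_0\r_{01}\|}{|\r_0\r_1|},\quad
k_2=1-\frac{\|\r_1\r_{12}\|}{|\r_1\r_2|},\quad
k_3=1-\frac{\|\r_2\r_{23}\|}{|\r_2\r_3|}.
\]
Because Theorem~\ref{sub-thm} guarantees that $\|\r_0\r_{01}\|$, $\|\r_1\r_{12}\|$, and $\|\r_2\r_{23}\|$ are strictly monotone on $(t_0,t_1)$ and each denominator is a positive constant, every $k_i$ is a strictly decreasing affine image of a strictly monotone function and is therefore itself strictly monotone.

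For the range I would show that each intersection point is strictly interior to its edge on $(t_0,t_1)$, whence each ratio lies in $(0,1)$. The point $\r_{01}$ coincides with $\r_0$ only if $O(t^\star)$ contains $\r_0$ and with $\r_1$ only if $O(t^\star)$ contains $\r_1$; the first is excluded by the restriction from Condition I (the osculating plane at an interior parameter cannot pass through $\r_0$ or $\r_3$), and the second is exactly the fact proved in the second paragraph of the proof of Theorem~\ref{sub-thm}. The symmetric exclusions for $\r_{12}$ and $\r_{23}$ follow from the same vertex-avoidance statements. The argument is thus pure bookkeeping on top of Theorem~\ref{sub-thm}; the only delicate point — and the one I would flag as the main obstacle — is precisely the guarantee that each intersection point stays strictly inside its own edge rather than sliding onto a neighbouring edge or a vertex, which the triangle-containment analysis in the proof of Theorem~\ref{sub-thm} supplies.
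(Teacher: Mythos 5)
Your proof is correct and matches the paper's intent: the paper states this corollary without proof, as an immediate consequence of Theorem~\ref{sub-thm}, and your argument is exactly that unpacking --- rewriting each $k_i$ as $1-\|\cdot\|/\mathrm{const}$ using the monotone distances of Theorem~\ref{sub-thm}, plus the vertex-avoidance facts (osculating planes at interior parameters miss $\r_0,\r_1,\r_2,\r_3$) established in that theorem's proof and the subdivision conditions. One cosmetic note: the fact that $O(t^\star)$ cannot contain $\r_0$ or $\r_3$ is the content of Condition~III (curve points off the osculating planes), not Condition~I, although the paper's own proof of Theorem~\ref{sub-thm} makes the same mislabeling.
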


We finally give the \textbf{Proof of Theorem~\ref{quasi-cubic}} by summarizing the above discussions.
\begin{proof} Set $t_1$ as Theorem~\ref{divide-thm}, then $\r(t),t\in[t_0,t_1]$ has the eight properties in Theorem~\ref{divide-thm}, \ref{sub-thm} and  Propositions~\ref{sub-inside}, \ref{mono-sub}.  It means that  the segment $\r(t),t\in[t_0,t_1]$ is a quasi-cubic segment.
\hfill\qed\end{proof}

\subsection{Subdivision algorithm}
\label{sec-sa3}

As we mentioned in the introduction, the topology graph
$\mathcal{G}$ of a parametric space curve can be computed by the
method in~\cite{Alcazar2009}.

A {\em topology graph} is a graph $\mathcal{G}= \{\PV, \PE\}$ where
$\PV$ is a set of points in the Euclidean space $\PV =
\{\vv_{i}=(\alpha_i,  \beta_i,\gamma_i)\}$ and $\PE$ is a set of
edges $\PE=\{(\vv_i, \vv_j) | \vv_i, \vv_j\in\PV\} $, any two edges
do not intersect except in the endpoints. A graph $\mathcal{G}$ is
a topology graph of a parametric space curve $\r(t)$ if
$\mathcal{G}$ and $\r(t)$ have the same topology.

The singular points of the space curve are included as vertices in
$\mathcal{G}$. In this paper, we need to add more information to the
vertices in our algorithm. For each vertex $\vv_i$ in the topology
graph, we now update it to
\begin{eqnarray}\label{vertex}
  V_i&=&\{\vv_i=\r(t_{i0}),\{t_{i0},t_{i1},\dots,t_{ik}\}, \nonumber \\
  & & \{\mathcal{F}_{i0}^-,\dots,\mathcal{F}_{ik}^-\},  \{\mathcal{F}_{i0}^+,\dots,\mathcal{F}_{ik}^+\}\},
\end{eqnarray}
where each $t_{ij}$ is a real parameter such that
$\r(t_{ij})=\vv_i$, $\mathcal{F}_{ij}^-$ and $\mathcal{F}_{i0}^+$
are the left and right Frenet frames of $\vv_i$ with respect to the
parameters $t_{ij},j=0,\ldots,k$.
The point set $\PV$ thus updated is called the {\em extended vertex
list}. Methods to compute the limitation of the tangent are also
introduced in~\cite{Daouda2008}.

The edges in~$\mathcal{G}$ are not used directly in our
approximation algorithm, but they give the connection relationship
of two updated vertices.
Since the space curve is parametric, the connection relationship is
given by the parameters corresponding to the points in $\PV$ in the
increasing order.
%
%It means that we  need only the updated vertex set which can be
%sorted by the parameter.
%
So in our paper, we use the extended vertex list $\mathcal{V}$
instead of topology graph.

\begin{example} Figure~\ref{fig01} (a) shows a space curve with a cusp, whose
topology graph is given in Figure~\ref{fig01} (b).
Figure~\ref{fig02}(a) shows a numerical approximate curve which does
not pass through the cusp. We may use the topology graph or a
refined topology graph to approximate the curve segment as shown in
Figure~\ref{fig02}(b). This method has two drawbacks. First, we
generally needs hundreds even thousands line segments to approximate
the curve segment for a small precision \cite{curve-ib}.  Second,
the approximate curve cannot keep the tangent directions of left
and right sides of the cusp point.
In this paper, we use a cubic B\'{e}zier curve instead of a line segment
as shown in Figure~\ref{fig02}(c), which is not only more precise
but keeps the geometric properties of the original curve.
  \begin{figure}[!h]
 \centering
 \includegraphics[width=0.3\textwidth]{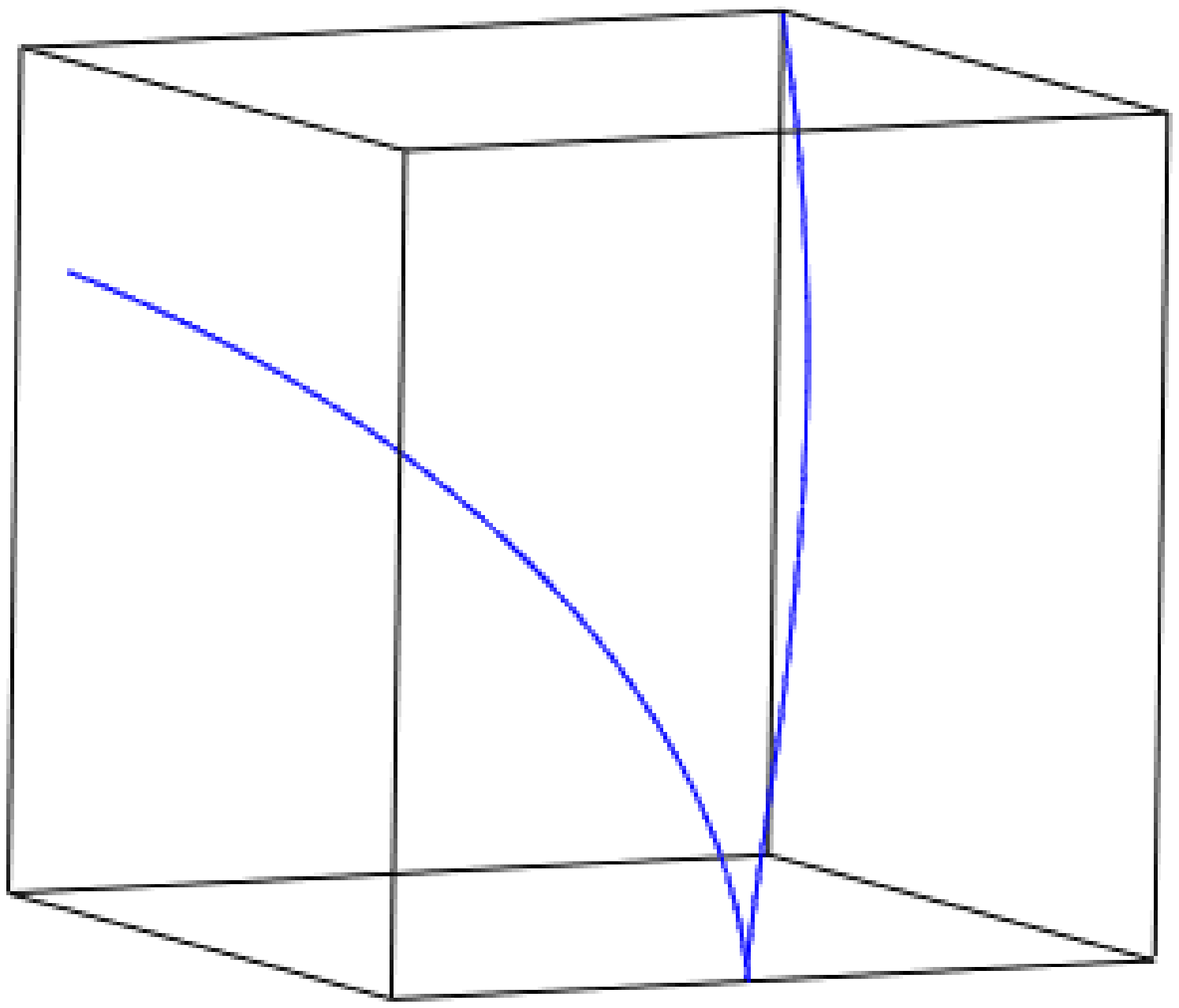}\qquad
 \includegraphics[width=0.3\textwidth]{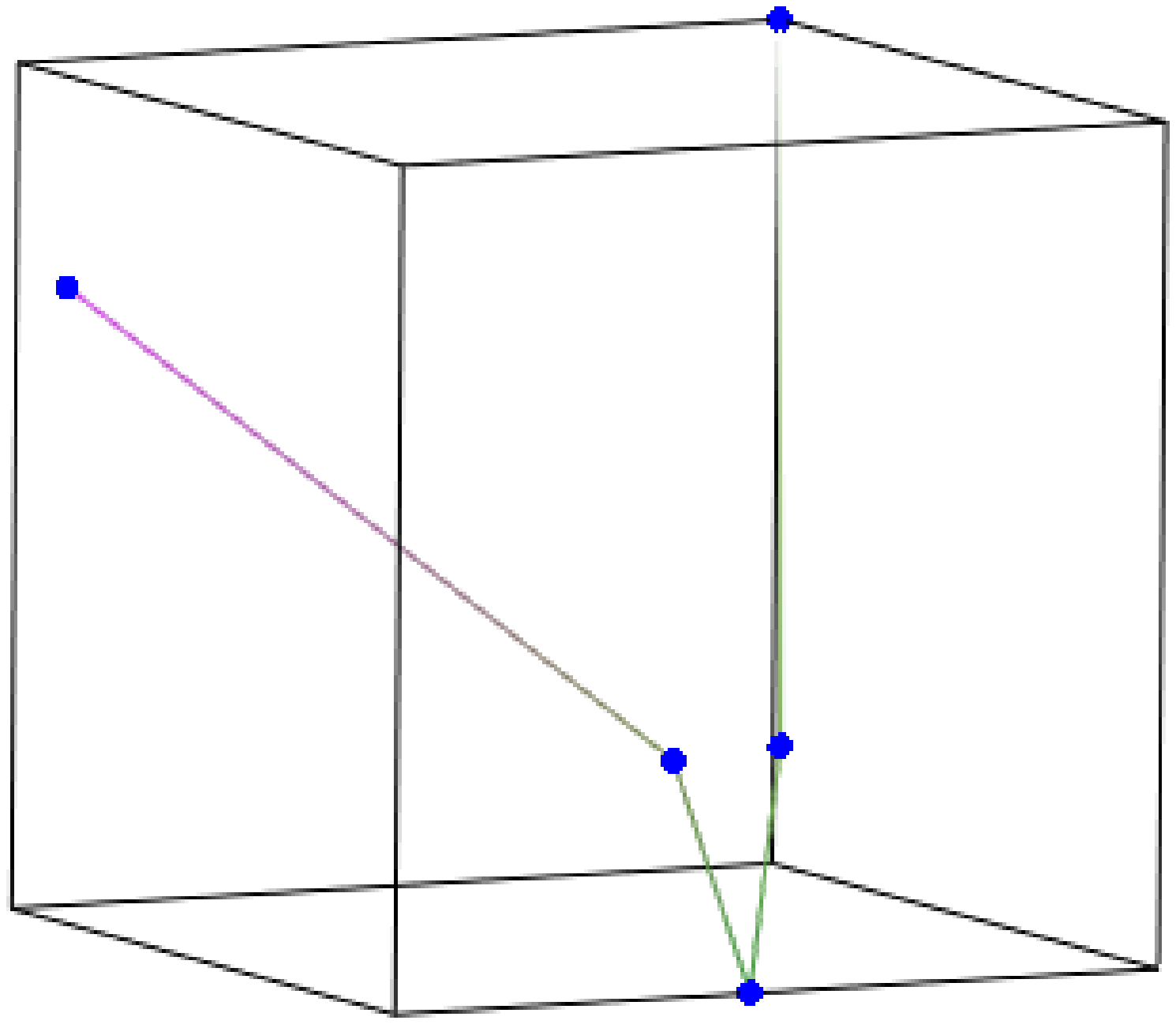}\\
 {  \scriptsize  (a) Origin curve \hspace{1.2in}    (b) Topology graph
}
  \caption{Topology graph of the curve}
  \label{fig01}
 \end{figure}

  \begin{figure}[!h]
 \centering
  \includegraphics[width=0.3\textwidth]{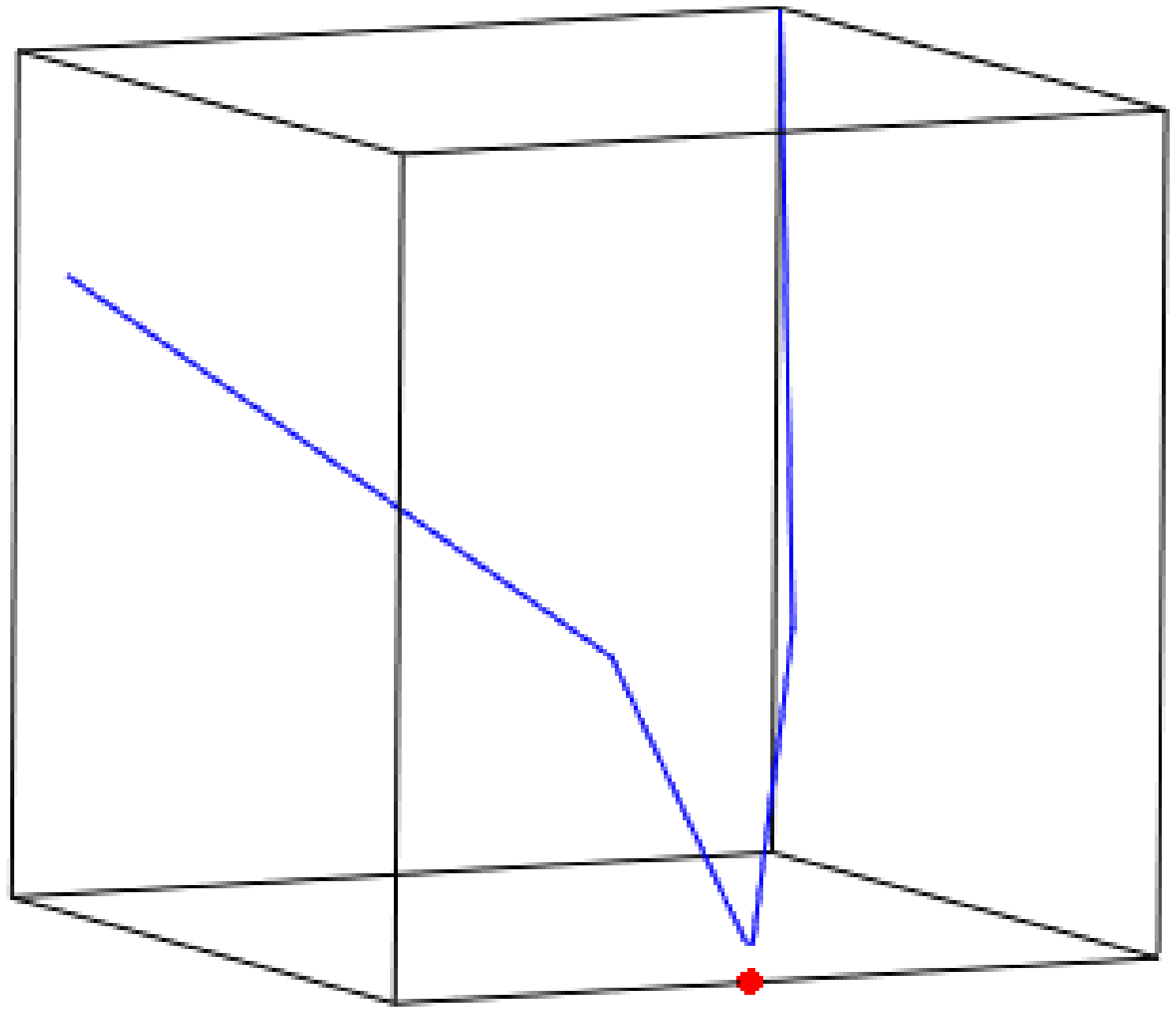}\quad
   \includegraphics[width=0.3\textwidth]{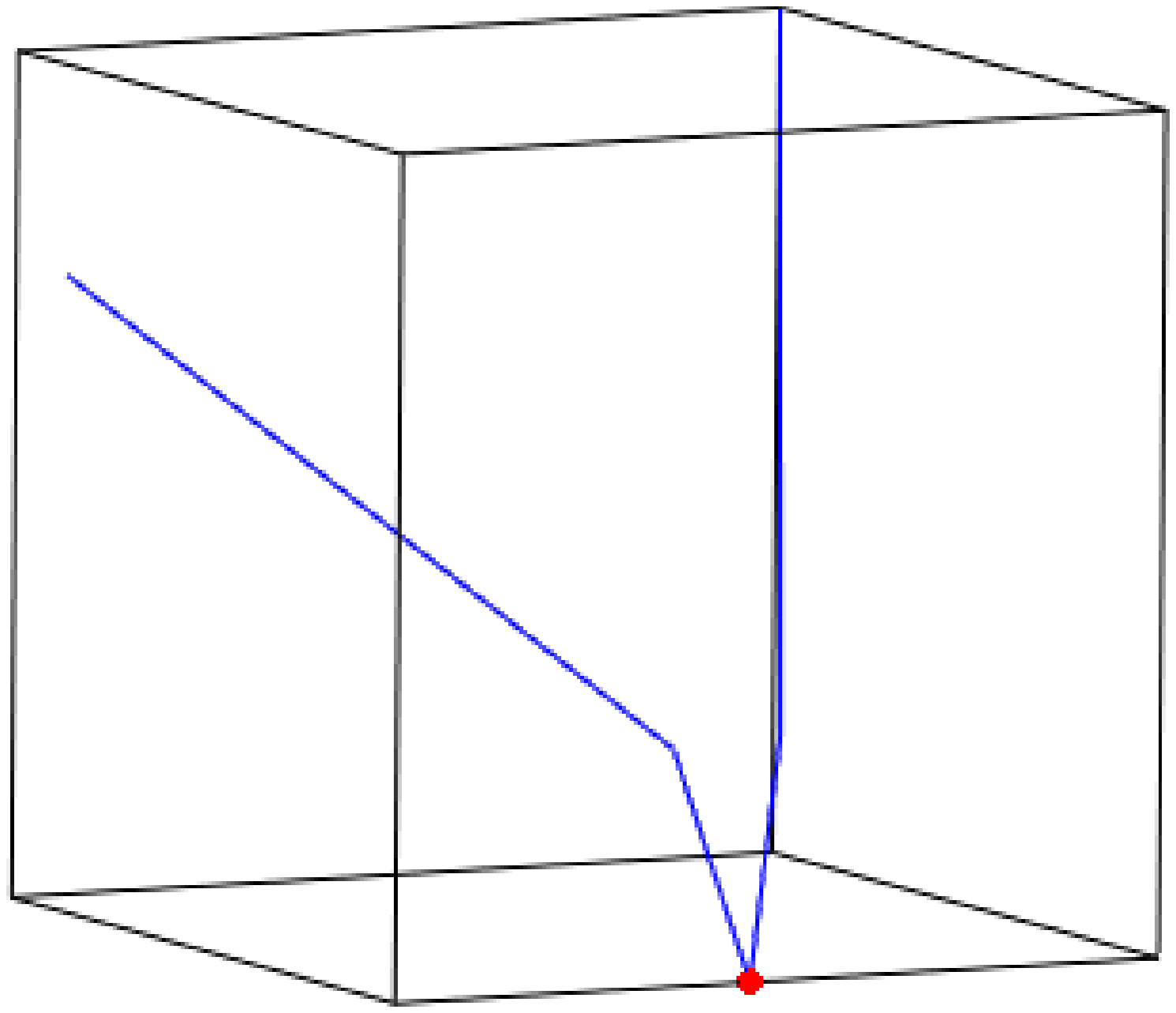}\quad
    \includegraphics[width=0.3\textwidth]{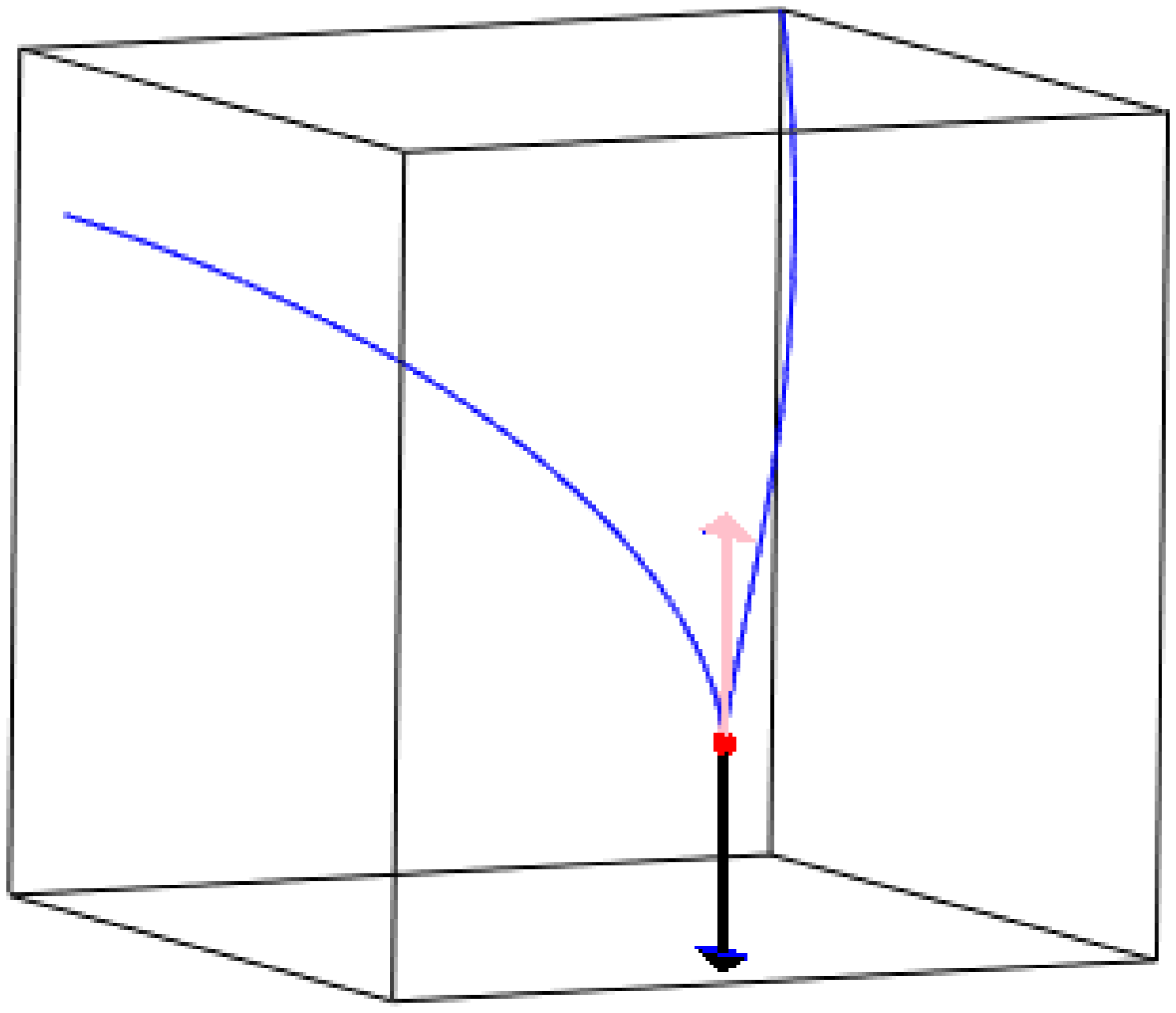}\\
{\scriptsize \quad   (a) General numerical method \hspace{0.4in}   (b) Based on
topology \hspace{0.5in}  (c) Proposed method}
 \caption{Numerical approximate curve}
 \label{fig02}
\end{figure}
\end{example}

Based on the above analysis, we now give the segment dividing
algorithm.
\begin{alg}\label{divide-alg} Curve Subdivision.\\
  \textbf{Input}: A normal curve segment $\r(t),t\in [0,1]$.\\
  \textbf{Output}: An extended vertex list with elements as~\eqref{vertex}.
\end{alg}
  \begin{enumerate}
  \item Compute the certified vertex list $\mathcal{V}$ with all
  character points as vertices with the method in \cite{Alcazar2009}.
  The parameters and  the left  and right Frenet frames are recorded.
  Suppose the real roots associated to the character points are $s_i,i=1,\ldots,l-1$ and
 $0=s_0<s_1<\cdots<s_l=1$.

 %   \item Find the critical points defined in I).
    \item Divide each interval $[s_i,s_{i+1}]$ as
    $s_i=s_{i0}<s_{i,1}<\cdots<s_{i,k_i}=s_{i+1}$ such that
    each segment satisfies the conditions given in I) to IV).
   \item Rearrange the $s_{ij}$ in an ascending order and rename them as
   $t_i,i=0,\ldots,n$. Find the left and right Frenet frames of each segment
    $\r(t),t\in[t_i,t_{i+1}]$.
    \item Add all these new points to the extended vertex list $\mathcal{V}$
    which is now ready for approximation.
  \end{enumerate}

Each curve segment is defined by two adjoint vertices of
$\mathcal{V}$. By Proposition~\ref{sub-prop}, the curve segment from
the algorithm is in the tetrahedron and has the properties in
Theorems~\ref{divide-thm}, \ref{sub-thm} and
Propositions~\ref{sub-inside}, \ref{mono-sub}. Hence each curve
segment obtained from Algorithm \ref{divide-alg} is a quasi-cubic
segment and so are its sub-segments.

% ----------------------------------------------------------------
\section{Shoulder point approximation}%and parallel point
In this section, we propose an efficient algorithm to construct a
set of cubic B\'{e}zier curve segments which approximate a quasi-cubic
segment obtained in Algorithm~\ref{divide-alg} to any approximate
bound.

Firstly, we focus on one quasi-cubic segment $\r(t),t\in[t_0,t_1]$.
Let $\r_0,\r_3$ be the endpoints of the segment, $\r_1$ the
intersection point of the tangent line at $\r_0$ and the osculating
plane of $\r_3$, and $\r_2$ the intersection point of the tangent
line at $\r_3$ and the osculating plane of $\r_0$. Then
$\{\r_0,\r_1,\r_2,\r_3\}$ defines a family of rational cubic curves
\begin{equation}\label{Bezier-construct}
\p(\omega_1,\omega_2,s)=\frac{\r_0B_0(s)+\omega_1\r_1B_1(s)+\omega_2\r_2B_2(s)
+\r_3B_3(s)}{B_0(s)+\omega_1B_1(s)+\omega_2B_2(s)+B_3(s)},\ s\in
[0,1].
\end{equation}
Then $\p(\omega_1,\omega_2,s)$ is called the \emph{associated cubic B\'{e}zier curve segment} of $\r(t)$.
It has been shown that $\p(\omega_1,\omega_2,s)$ meets $\r(t)$ at
its endpoints $\r(t_0)$ and $\r(t_1)$. Furthermore,
$\p(\omega_1,\omega_2,s)$ and $\r(t)$ have the same left and right
tangent directions and osculating planes at the endpoints, and the
same control tetrahedron $\lozenge\r_0\r_1\r_2\r_3$.

\begin{proposition}\label{order}
Let $\p(\omega_1,\omega_2,s),s\in[0,1]$ be the associated cubic B\'{e}zier curve segment of $\r(t),t\in[t_0,t_1]$. Then $\p(\omega_1,\omega_2,s)$ can approximate $\r(t)$ at their endpoints with order two by setting proper $\omega_1$ and $\omega_2$, i.e., $\{\p(0)=\r(t_0),\p(1)=\r(t_1)\}$ and $\{\p'(0)=\r'(t_0),\p'(1)=\r'(t_1)\}$.
\end{proposition}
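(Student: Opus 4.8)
The plan is to exploit the fact that endpoint interpolation is automatic for every choice of weights, so that the entire content of the proposition reduces to matching the two first-derivative vectors, which I expect to become a pair of decoupled scalar equations. First I would observe that, since $B_1(0)=B_2(0)=B_3(0)=0$ and $B_0(1)=B_1(1)=B_2(1)=0$ while $B_0(0)=B_3(1)=1$, both the numerator and the denominator of~\eqref{Bezier-construct} collapse at the ends, giving $\p(\omega_1,\omega_2,0)=\r_0=\r(t_0)$ and $\p(\omega_1,\omega_2,1)=\r_3=\r(t_1)$ regardless of $\omega_1,\omega_2$. Hence the two positional conditions impose nothing, and only the derivative conditions remain to be arranged.

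Next I would compute $\p'(0)$ and $\p'(1)$ by applying the quotient rule to $\p=N/D$, where $N$ and $D$ denote the numerator and denominator of~\eqref{Bezier-construct}. Using $B_0'(0)=-3,\ B_1'(0)=3,\ B_2'(0)=B_3'(0)=0$ together with the symmetric values $B_0'(1)=B_1'(1)=0,\ B_2'(1)=-3,\ B_3'(1)=3$, and noting that the terms involving $D'$ cancel because $D(0)=D(1)=1$, a short calculation yields
$$\p'(0)=3\omega_1(\r_1-\r_0),\qquad \p'(1)=3\omega_2(\r_3-\r_2).$$
This is consistent with property~1) of Lemma~\ref{Bezier-lm} and shows that each weight acts purely as a positive scaling of the corresponding edge vector, with $\omega_1$ and $\omega_2$ controlling the two endpoints independently.

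Then I would invoke the construction of the associated tetrahedron and property~1) of Theorem~\ref{divide-thm}: $\r_1$ lies on the tangent line $\T(t_0)$ and $\r_2$ lies on the tangent line $\T(t_1)$, so I may write $\r_1-\r_0=c_0\,\r'(t_0)$ and $\r_3-\r_2=c_1\,\r'(t_1)$ for suitable scalars $c_0,c_1$. Matching the first derivatives then amounts to the two decoupled equations $3\omega_1c_0=1$ and $3\omega_2c_1=1$, whose solutions $\omega_1=1/(3c_0)$ and $\omega_2=1/(3c_1)$ are the required proper weights.

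The hard part, and the only step that is not a routine calculation, will be to guarantee $c_0>0$ and $c_1>0$, for otherwise the selected weights would be negative and the resulting curve would fail to be a genuine nonnegatively weighted rational B\'{e}zier curve to which Lemma~\ref{Bezier-lm} applies. I would settle this from the geometry already established: since $\r(t)$ leaves $\r_0$ with velocity $\r'(t_0)$ yet stays inside $\lozenge\r_0\r_1\r_2\r_3$ by property~3) of Theorem~\ref{divide-thm}, the vertex $\r_1$ must lie on the same ray from $\r_0$ as $\r'(t_0)$ (otherwise the curve would immediately exit the tetrahedron), forcing $c_0>0$; the argument for $c_1>0$ at the other endpoint is symmetric. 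With $c_0,c_1>0$ the weights $\omega_1,\omega_2$ are positive, completing the proof.
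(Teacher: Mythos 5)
Your proposal is correct and follows essentially the same route as the paper: the paper's proof likewise notes that endpoint positions and tangent directions ($G^1$ data) are matched for arbitrary weights, writes $\p'(\omega_1,\omega_2,0)=k_1\omega_1\r'(t_0)$ and $\p'(\omega_1,\omega_2,1)=k_2\omega_2\r'(t_1)$ with positive constants $k_1,k_2$ (your $3c_0,3c_1$), and chooses $\omega_i$ so that these factors equal one. Your write-up simply makes explicit the endpoint-derivative computation and the positivity of $c_0,c_1$ that the paper asserts without detail.
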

\begin{proof} Following the construction of $\p(s)$ for $\r(t)$, they are $G^1$ interpolated at their endpoints with arbitrary weights $\omega_1$ and $\omega_2$. According to the properties of the cubic B\'{e}zier curve, one can set the proper $\omega_1$ and $\omega_2$ such that $\p(s)$ and $\r(t)$ are $C^1$ interpolated at their endpoints.
\hfill\qed\end{proof}

In Proposition~\ref{order}, the weights are selected to enhance the approximation order from $G^1$ to $C^1$ at the endpoints.
Actually, on can get $\{\p(\omega_1,\omega_2,0)=\r(t_0),\p(\omega_1,\omega_2,1)=\r(t_1)\}$ and $\{\p'(\omega_1,\omega_2,0)=k_1\omega_1\r'(t_0),\p'(\omega_1,\omega_2,1)=k_2\omega_2\r'(t_1)\}$,
 where $k_1$ and $k_2$ are positive constants. Hence we can set $\omega_1$ and $\omega_2$ such that $k_1\omega_1=1$ and $k_2\omega_2=1$.
However, in the following paragraphs, we would like to use the freedom of weights to minimize the position approximation error. Hence, we will
show how to compute the proper weights $\omega_1,\omega_2$ such that
$\p(s)$ is an optimal approximation to $\r(t)$.

The selection of the weights often leads to some optimization
problems such as $\min_{\omega_1,\omega_2}$ $(\max_{s,t}
d(\omega_1,\omega_2,s,t)^2)$ where $d(\omega_1,\omega_2,s,t)$ is the
distance function between $\p(\omega_1,\omega_2,s)$ and $\r(t)$ in
certain forms~\cite{Pottmann2002}. The computation is usually not
efficient and some global error analysis is introduced to simplify
the optimization problem~\cite{Dooken2001}.
Another possible method is to approximate the target curve segment
by checking the parallel points. We can push the parallel points of
the approximated curve and the approximate
curve~\eqref{Bezier-construct} as near as possible. It also leads to
an optimal problem for a function with degree three. In the
following, we introduce a novel method which avoids any
optimizations.

The shoulder point $\s_\p$ of $\p(s)$ is given in
Proposition~\ref{shoulder}. The shoulder point $\s_\r$ of $\r(t)$ can be
computed as the unique intersection point of $\r(t)$ and the
triangle $\r_1\r_2\r_M$. Supposing the plane $P(x,y,z)$ is defined by
$\r_1$, $\r_2$, and $\r_M$, then the shoulder point corresponds to a
real root $t^{\star}\in (t_0,t_1)$ of $P\circ r(t)$ with
$\r(t^{\star})$ lying in the triangle  $\r_1\r_2\r_M$. So
$D(\omega_1,\omega_2)=\|\s_\p-\s_\r\|^2$ is a rational function in
$\omega_1,\omega_2$ with total degree two. Finding the positive
solution from the equations
\begin{equation}\label{weight_equation}
\left\{
\begin{array}{l} \dfrac{\partial{D}}{\partial
\omega_1}=0,\\[0.3cm]
\dfrac{\partial{D}}{\partial \omega_2}=0,
\end{array}
\right.
\end{equation}
we obtain the weights for
the approximate cubic curve~\eqref{Bezier-construct}.

Before the approximation, we will estimate the error between the two
curves. Since there does not have any simple method to compute the
distance of two parametric curves with different parameters, we use
the distance between $\r$ and the implicit variety of a rational
cubic curve $\p$. It has been proved that the associated implicit
ideal $I_\p$ of $\p$ can be computed using the $\mu$-basis
method~\cite{Cox1998} efficiently:
%The $\mu$-basis of $\p$ is formed by three moving planes linear in $x, y, z, t$.
%
\begin{lemma}\label{mu-imp}The associated ideal of $\p$ has the form $I_\p=\langle f(x,y,z),g(x,y,z)$, $h(x,y,z)\rangle$, where $f,g$ and $h$ are quadratic
polynomials, i.e., the resultants of $\p's$ $\mu$-basis in pairs.
\end{lemma}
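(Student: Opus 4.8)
The plan is to pass to the homogeneous form of the parametrization and exploit the structure theory of $\mu$-bases for rational space curves. Write $\p(s)$ in homogeneous coordinates as $(p_0(s):p_1(s):p_2(s):p_3(s))$ with each $p_i$ of degree at most three and $\gcd(p_0,p_1,p_2,p_3)=1$; since the parametrization~\eqref{Bezier-construct} is proper, this is exactly the setting of the $\mu$-basis theorem. The module of moving planes following the curve, i.e. the syzygy module of $(p_0,p_1,p_2,p_3)$ over $\mathbb{Q}[s]$, is free of rank three, and any $\mu$-basis $\{q_1,q_2,q_3\}$ has degrees $\mu_1,\mu_2,\mu_3$ in $s$ summing to the degree of the curve, which is $3$.

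First I would pin down these degrees. A syzygy of degree $0$ in $s$ is a fixed plane containing the whole curve; because $\p$ is a non-planar cubic (it has the same non-degenerate control tetrahedron as the quasi-cubic segment and satisfies $\ka\neq0,\ \tau\neq0$ as in Lemma~\ref{Bezier-lm}), no such fixed plane exists, so every $\mu_i\ge 1$. Combined with $\mu_1+\mu_2+\mu_3=3$ this forces $\mu_1=\mu_2=\mu_3=1$. Hence each $q_i$ is linear in $s$ and linear in the space coordinates, say $q_i = s\,\ell_i(x,y,z,w)+m_i(x,y,z,w)$ with $\ell_i,m_i$ linear forms. Next I would compute the pairwise resultants: for two polynomials of degree one in $s$ the Sylvester resultant collapses to the $2\times2$ determinant $\Res_s(q_i,q_j)=\ell_i m_j-\ell_j m_i$, which is a quadratic form in $(x,y,z,w)$. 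Dehomogenizing by $w=1$ yields the three quadratic polynomials $f,g,h$ named in the statement, and each vanishes on the curve by the elimination property, so $\langle f,g,h\rangle\subseteq I_\p$.

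The substantive point is the reverse inclusion $I_\p\subseteq\langle f,g,h\rangle$, and here I would invoke the structure of the balanced case. The resultants $f,g,h$ are exactly the three $2\times2$ minors of the $2\times3$ matrix whose rows are $(\ell_1,\ell_2,\ell_3)$ and $(m_1,m_2,m_3)$, so $\langle f,g,h\rangle$ is a height-two determinantal ideal. A non-planar rational cubic is projectively equivalent to the twisted cubic, whose defining ideal is precisely such a determinantal ideal generated by three independent quadrics; by the Hilbert--Burch resolution this determinantal ideal is prime of the correct dimension. Matching it against the prime ideal $I_\p$ of the irreducible curve (same zero set, same height, both prime) gives the desired equality.

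I expect this last step---upgrading the set-theoretic containment to the full saturated ideal identity---to be the main obstacle, since a priori the resultants could generate only a proper subideal with the same variety. It can be settled either through the projective-normality/Hilbert function count for the twisted cubic sketched above, or, more directly, by appealing to the general $\mu$-basis implicitization results used via the $\mu$-basis method of~\cite{Cox1998}, which guarantee that for a proper parametrization the pairwise resultants of a $\mu$-basis generate the saturated defining ideal. Either route completes the proof that $I_\p=\langle f,g,h\rangle$ with $f,g,h$ quadratic.
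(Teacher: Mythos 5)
The first thing to note is that the paper does not prove this lemma at all: it is quoted as a known result, justified only by the preceding sentence (``It has been proved that the associated implicit ideal $I_\p$ of $\p$ can be computed using the $\mu$-basis method~\cite{Cox1998}''), with the algorithmic side deferred to~\cite{deng2005}. Your proposal is therefore a genuinely different route --- a self-contained derivation rather than a citation --- and its skeleton is the right one for the cubic case: non-planarity excludes a degree-zero moving plane, which together with $\mu_1+\mu_2+\mu_3=3$ forces $\mu_1=\mu_2=\mu_3=1$; pairwise resultants of polynomials linear in $s$ collapse to the $2\times2$ determinants $\ell_im_j-\ell_jm_i$, hence quadrics; and each lies in $I_\p$ by elimination. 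This buys something the paper's citation does not: \cite{Cox1998} is a paper on \emph{planar} rational curves and moving lines, so it does not literally contain the space-curve statement; your argument supplies what that citation leaves implicit.

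The weak point is the final step, and as written it does not close. The claim ``same zero set, same height, both prime'' is asserted, not proved: projective equivalence with the twisted cubic shows that $I_\p$ equals the minor ideal of \emph{some} $2\times3$ matrix of linear forms, but it does not show that \emph{your} matrix, assembled from the $\mu$-basis, has a prime minor ideal of codimension two (for a degenerate matrix of linear forms, e.g.\ with rows $(x,y,0)$ and $(y,z,0)$, the minors cut out a quadric surface, not a space cubic); and you never establish $V(f,g,h)\subseteq C$. Both can be repaired. The outer-product property of a $\mu$-basis --- the generalized cross product of $q_1,q_2,q_3$ recovers the homogeneous parametrization up to a nonzero constant, valid since the homogenized coordinates have no common factor --- shows that at every parameter value the three moving planes meet in exactly the point of the curve, and linear independence of the leading forms $\ell_1,\ell_2,\ell_3$ (a dependence would produce a degree-zero moving plane, contradicting non-planarity) handles the parameter at infinity; together these give the set-theoretic equality $V(f,g,h)=C$. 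Then, since the minor ideal attains the expected codimension two, Eagon--Northcott/Hilbert--Burch makes $\langle f,g,h\rangle$ Cohen--Macaulay, hence unmixed, and its resolution gives it degree $3=\deg C$, forcing $\langle f,g,h\rangle=I_\p$. Finally, treat your fallback with caution: the blanket statement that for any proper parametrization the pairwise resultants of a $\mu$-basis generate the saturated defining ideal is not a theorem of~\cite{Cox1998} and is delicate for space curves of degree greater than three; it is safe here only because the curve is a non-planar cubic.
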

The algorithm of $\mu$-basis is given in~\cite{deng2005}.
Generalizing the approximation error function in~\cite{Chuang1989},
we have
$$e(f,\r)=\left(\frac{f(\r)^2}{f_x(\r)^2+f_y(\r)^2+f_z(\r)^2}\right)^{1/2}.$$
Let $e(\p,\r):=e(f,\r)+e(g,\r)+e(h,\r)=e(t)$ be the univariate error
function in $t$. Then the approximation error can be set as the
following optimization problem:
$$e=\max_{t_0\le t\le t_1}(e(t)).$$
There are many methods to solve this problem. However, for the
efficiency in practice, we often sample $t$ as
$t_i=\frac{(t_1-t_0)i}{m},i=0,\ldots,m,$ for a proper $m$, say
$m=300$, and set the approximate error as $\max(e(t_i))$.

The following algorithm is proposed to approximate a quasi-cubic
curve segment via shoulder point approximation.
\begin{alg}\label{approx-alg} Shoulder point approximation\\
\textbf{Input}: A quasi-cubic curve segment $\r(t),t\in[t_0,t_1]$
 and a positive error bound
  $\delta$.\\
\textbf{Output}: A set of cubic B\'{e}zier curves which is a
$\delta$-approximation for $\r(t)$.
\end{alg}
\begin{enumerate}
\item[1.] Construct the associated tetrahedron of $\r(t)$ and the rational B\'{e}zier
cubic curve $\p(\omega_1,\omega_2,s),s\in[0,1]$ as shown in
\eqref{Bezier-construct}.
   \item[2.] Compute the weights $(\omega_1,\omega_2)$ such that $\|\s_\p-\s_\r\|$
        is as small as possible.
  \begin{enumerate} \item Compute shoulder points $\s_\r$ and $\s_\p(\omega_1,\omega_2)$
       of $\r(t)$ and $\p(s)$ respectively.
    \item  Find a pair of real roots $(\omega_1,\omega_2)$ by solving the equation system~\eqref{weight_equation}.
      \end{enumerate}
    \item[3.] Compute the approximate error $\bar\delta=e(t)$. If $\bar\delta<\delta$
    then output $\p(s)$.
     Otherwise, divide $\r(t)$ to two parts on its middle point of arc length and repeat the approximation process for each subsegment.
     %    \item[$3'$.] Compute the approximate error $\bar\delta=e(t)$. If $\bar\delta<\delta$
%    then output $\p(s)$.
%     Otherwise, divide $\r(t)$ to two parts on its shoulder point $\s_\r$ repeat the approximation process for each subsegment.
  \end{enumerate}

%\begin{remark}
%  To keep the topology homeomorphism, a subdivided curve segment should
%  not have intersection points with each other besides the end
%  points.
% The following convergency are proved for the process of the algorithm.
%  We then need no more details since the subtitle discussion is given in~\cite{Alcazar2009}.
% gao
%\end{remark}
%

%
\begin{example}
A curve segment $\r(t), t\in[0,21/32]$ represented by the black
curve with degree six is given by
%=[{\frac {3{s}^{2}
%\left( 2\,{s}^{4}-4\,{s}^{3}-{s}^{2}+4 \right) }{
%17\,{s}^{6}-87\,{s}^{5}+150\,{s}^{4}-110\,{s}^{3}+45\,{s}^{2}-9\,s+3}}
%,-\,{\frac {3s \left( {s}^{5}-4\,{s}^{4}+4\,{s}^{3}-4\,{s}^{2}+2
% \right) }{17\,{s}^{6}-87\,{s}^{5}+150\,{s}^{4}-110\,{s}^{3}+45\,{s}^{
%2}-9\,s+3}},-3\,{\frac {s \left(
%5\,{s}^{6}+2\,{s}^{5}+2\,{s}^{4}+5\,{ s}^{3}+5 \right)
%}{17\,{s}^{6}-87\,{s}^{5}+150\,{s}^{4}-110\,{s}^{3}+
%45\,{s}^{2}-9\,s+3}}]
Algorithm~\ref{divide-alg} and the approximate cubic B\'{e}zier curve is
the red dash curve in Figure~\ref{fig2}. The weights are
$\omega_1=\omega_2=1$ in the left figure. After executing step 2 of
Algorithm~\ref{approx-alg}, we have $\omega_1=5/11,\omega_2=16/31$
in the right figure. The numerical errors are $0.29$ and $0.04$
respectively computed from error function $e(t)$ by setting $m=300$.
  \begin{figure}[!h]
 \centering
 \includegraphics[width=0.40\textwidth]{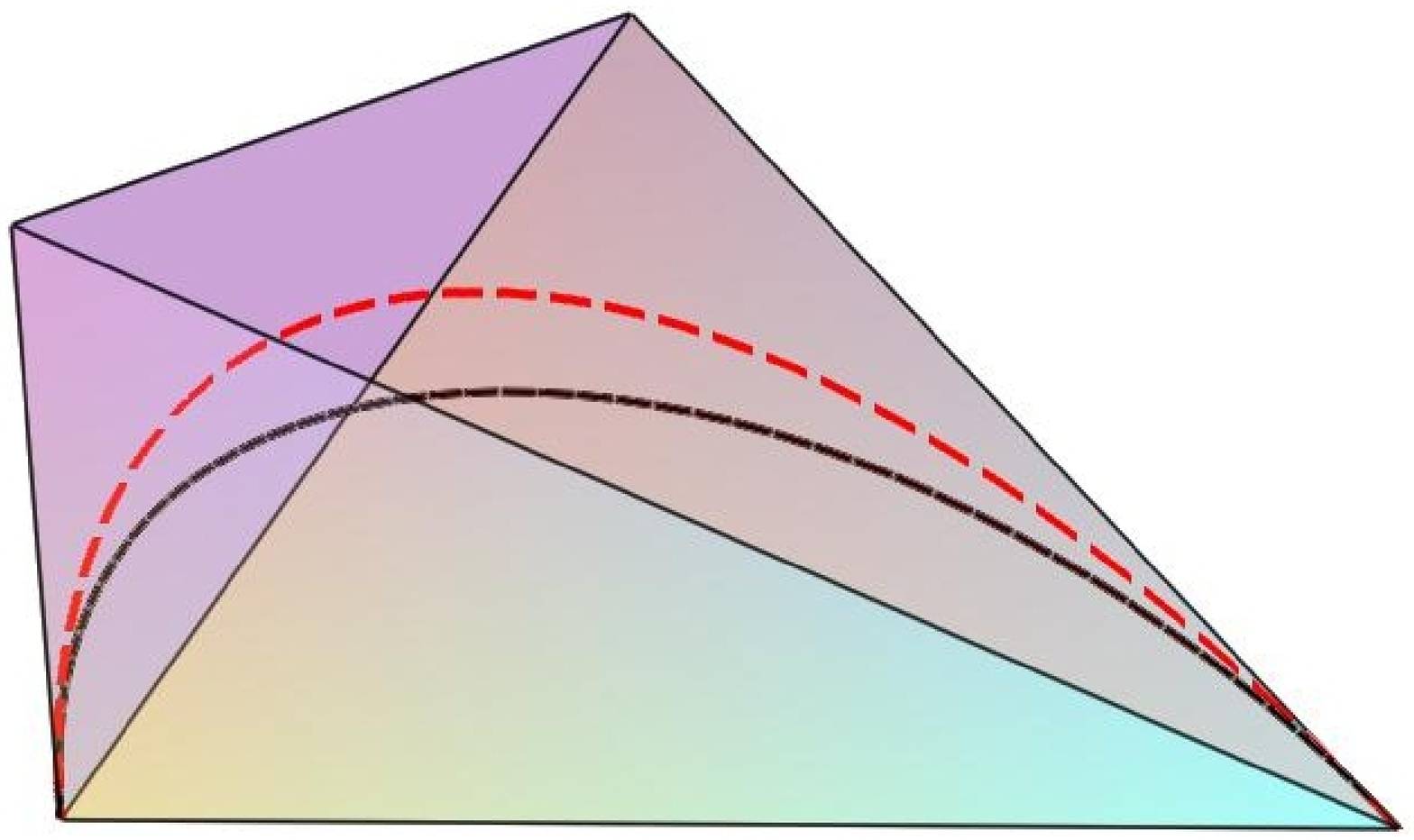}\qquad
 \includegraphics[width=0.41\textwidth]{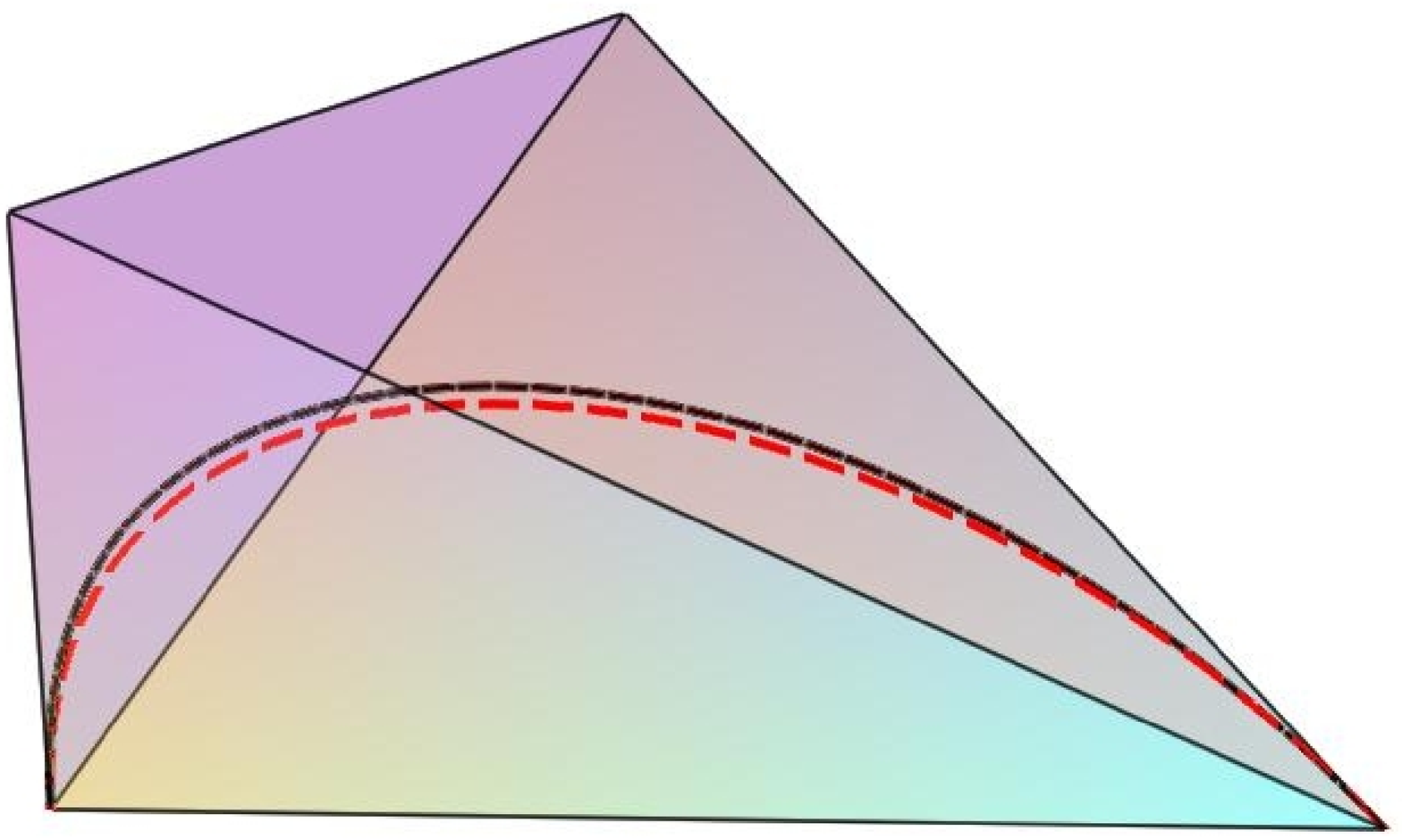}
 \caption{Selecting the weights for B\'{e}zier cubic curve}
 \label{fig2}
\end{figure}
\end{example}

To show the termination of the above algorithm, we need the following lemma.
\begin{lemma}\label{lm-term}
The edge of the sub-tetrahedron in Algorithm~\ref{approx-alg}
converges to zero when the arc length of its subdivided curve
segment converges to zero.
\end{lemma}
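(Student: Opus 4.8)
The plan is to show that, as the sub-segment shrinks, all four vertices of its associated tetrahedron collapse to a single point; since each of the six edges is a distance between two of these vertices, this immediately forces every edge length to zero. I would first reduce arc length to parameter length: on the quasi-cubic segment $[t_0,t_1]$ there are no cusps, so by property 4) of Theorem~\ref{divide-thm} the speed $\|\r'(t)\|$ is bounded above and below by positive constants on this compact interval. Hence for a sub-segment with parameter interval $[a,b]$ and $h=b-a$, the arc length is comparable to $h$, and it suffices to prove that every edge tends to $0$ as $h\to 0$. Writing the sub-tetrahedron as $\lozenge\r_0\r_1\r_2\r_3$ with $\r_0=\r(a)$ and $\r_3=\r(b)$, the edge $\|\r_0\r_3\|$ is at most the chord length and hence at most the arc length, so it vanishes trivially. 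The whole problem therefore reduces to controlling the auxiliary vertices $\r_1$ and $\r_2$.

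Next I would obtain an explicit formula for $\r_1$. Since $\r_1$ lies on the tangent line $\T(a)\subset O(a)$ and on the osculating plane $O(b)$, it is exactly the intersection $\T(a)\cap O(b)$. Writing $\r_1=\r(a)+\lambda_1\,\al(a)$ and imposing $(\r_1-\r(b))\cdot\ga(b)=0$ gives
\[
\lambda_1=\frac{(\r(b)-\r(a))\cdot\ga(b)}{\al(a)\cdot\ga(b)},\qquad \|\r_0\r_1\|=|\lambda_1|.
\]
The core of the proof is a Taylor expansion of this ratio in $h$. Using the orthogonality relations $\r'\cdot\ga=\r''\cdot\ga=0$ and their $t$-derivatives (which also yield $\al\cdot\ga'=0$), the numerator is seen to vanish through order $h^2$, its leading term being of order $h^3$ with coefficient proportional to $\tau(a)$; likewise the denominator vanishes through order $h$, its leading term being of order $h^2$ with coefficient proportional to $\ka(a)\tau(a)$. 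Dividing, $\lambda_1=O(h)\to 0$, so $\r_1\to\r_0$. The symmetric computation, with the roles of the two endpoints exchanged, gives $\r_2\to\r_3$.

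Combining $\r_1\to\r_0$, $\r_2\to\r_3$ and $\r_3\to\r_0$, all four vertices converge to the common point $\r(a)$, whence every pairwise distance — in particular $\|\r_1\r_2\|$, $\|\r_0\r_2\|$ and $\|\r_1\r_3\|$ — tends to zero, proving the lemma. The main obstacle is the denominator estimate: a priori $\r_1$ could run off to infinity if the tangent line $\T(a)$ became asymptotically parallel to the neighboring osculating plane $O(b)$ faster than the numerator decays. The nonvanishing of $\ka$ and $\tau$ along the quasi-cubic segment (property 4) of Theorem~\ref{divide-thm}) is precisely what guarantees that $\al(a)\cdot\ga(b)$ is genuinely of order $h^2$ and no smaller, keeping the ratio bounded by $O(h)$; I would make this the technical heart of the argument and handle the remaining edges as formal consequences.
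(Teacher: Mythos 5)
Your proposal is correct, but it takes a genuinely different route from the paper's own proof. The paper argues synthetically: using the monotonicity of the ratios $k_i(t^\star)$ from Corollary~\ref{prop-in}, it builds a nested sequence of subdivision points anchored at $t_0$ at which each $k_i$ equals $1/2$, derives the recursive inequalities $\|\r_0^{(i)}\r_1^{(i)}\| < \|\r_0^{(i-1)}\r_1^{(i-1)}\|/2$ (and analogous bounds for the other two edges), concludes that the edge lengths of the nested sub-tetrahedra decay to zero, and then transfers this to arbitrary sub-segments using the absence of singularities on $[t_0,t_1]$. You instead compute directly: identifying $\r_1=\T(a)\cap O(b)$ (which is legitimate, since $\T(a)\subset O(a)$ makes the intersection with $L=O(a)\cap O(b)$ coincide with the intersection with $O(b)$), writing $\r_1=\r(a)+\lambda_1\al(a)$, and Taylor-expanding; expanding at $b$ one gets numerator $\frac{h^3}{6}\,\tau\ka\|\r'\|^3+O(h^4)$ and denominator $\frac{h^2}{2}\,\tau\ka\|\r'\|^2+O(h^3)$, so the curvature--torsion factors cancel and $\lambda_1=\frac{h}{3}\|\r'(a)\|+O(h^2)\to 0$, with the symmetric computation giving $\r_2\to\r_3$; all four vertices then collapse and every edge vanishes. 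Both arguments are sound; the one refinement you should make explicit is uniformity: in Algorithm~\ref{approx-alg} the left endpoint $a$ of a sub-segment moves during subdivision, so you need the leading coefficients (i.e.\ $\ka$, $\tau$, $\|\r'\|$) bounded away from zero and the remainders bounded uniformly over $[t_0,t_1]$, which follows from property 4) of Theorem~\ref{divide-thm} together with compactness and smoothness of the rational parametrization. As for what each approach buys: yours is quantitative — it shows the edges shrink linearly in the parameter (hence in arc length), which would even upgrade Theorem~\ref{convergent} to a convergence rate — and it is self-contained, not relying on the monotonicity machinery; the paper's proof avoids all Taylor calculus, stays entirely within the tetrahedron-monotonicity framework it has already established, and tracks literally the nested subdivisions the algorithm performs, but it is less explicit and yields no rate.
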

\begin{proof}
There exists a $t=t^\star_1\in (t_0,t_1)$ such that $k_1=1/2$ since
$k_1(t)$ is monotone with $t$ in $(t_0,t_1)$ by
Corollary~\ref{prop-in}. Consider the subsegment $\r(t),t\in
[t_0,t^\star_1]$ and subdivide it at $t=t^\star_2$ such that
$k_2=1/2$ for the sub-tetrahedron $\lozenge(t_0,t^\star_1)$. Then,
subdivide $\r(t),t\in [t_0,t^\star_2]$ at $t=t^\star_3$ such that
$k_3=1/2$ for $\lozenge(t_0,t^\star_3)$. Let $t^{(1)} = t^\star_3$.
We obtain a subsegment $\r(t),t\in [t_0,t^{(1)}]$ whose
sub-tetrahedron $\lozenge(t_0,t^{(1)})$ has vertices
$\r_0^{(1)}=\r_0,\r_1^{(1)},\r_2^{(1)},\r_3^{(1)}$.
Similarly, we can construct $\r_j^{(i)}, j=0,1,2,3$ and $t^{(i)}$.
According to the subdividing process, let $\r_j^{(0)} = \r_j,
j=0,1,2,3$.
Then, we have
$\|\r_0^{(i)}\r_1^{(i)}\| < \|\r_0^{(i-1)}\r_1^{(i-1)}\|/2$,
$\|\r_1^{(i)}\r_2^{(i)}\| <
\|\r_0^{(i-1)}\r_1^{(i-1)}\|/2+\|\r_1^{(i-1)}\r_2^{(i-1)}\|/2$ and
$\|\r_2^{(i)}\r_3^{(i)}\|<
\|\r_0^{(i-1)}\r_1^{(i-1)}\|/2+\|\r_1^{(i-1)}\r_2^{(i-1)}\|+\|\r_2^{(i-1)}\r_3^{(i-1)}\|/2$
for $i>0$. Hence, the lengthes of the three edges
$\|\r_0^{(i)}\r_1^{(i)}\|$, $\|\r_1^{(i)}\r_2^{(i)}\|$ and
$\|\r_2^{(i)}\r_3^{(i)}\|$ of a sub-tetrahedron
$\lozenge(t_0,t^{(i)})$ converge to zero when $i\rightarrow
\infty$. Since $\r(t),t\in [t_0,t_1]$ is a rational curve and has no
singular point, $t^{(i)}-t_0$ converges to zero when
$i\rightarrow \infty$.
%
%We denote $s_i = \|\r_0^{(i)}\r_1^{(i)}\| + \|\r_1^{(i)}\r_2^{(i)}\|
%+ \|\r_2^{(i)}\r_3^{(i)}\|$. Then, $s_i$ is convergent to zero when
%$i\rightarrow \infty$.

Let $t\in[t_0,t_1]$ and $\lozenge \r_0\r_1'(t)\r_2'(t)\r_3'(t)$ its
tetrahedron. Then $s(t) = \|\r_0\r_1'(t)\| + \|\r_1'(t)\r_2'(t)\| +
\|\r_2'(t)\r_3'(t)\|$ converges to zero when $t\rightarrow t_0$,
since $\r(t)$ has no singularities in $[t_0,t_1]$.
%
%For any given small number $\epsilon > 0$, there exists an $n$, such that $t^{(n+1)}-t_0 < \epsilon \leq t^{(n)}-t_0$.
%Then, for any $t_0 < t < t^{(n)}$, we have $s(t) < s_n$ since $s(t)$ is monotone.
Hence when the
arc length of its subdivided curve segment converges
to zero, which means $t\rightarrow t_0$, the edge of sub-tetrahedron converges to zero.
\hfill\qed\end{proof}

The termination of Algorithm~\ref{approx-alg} can be guaranteed by
the following theorem.

\begin{theorem}\label{convergent}
  In Algorithm~\ref{approx-alg}, the approximation error converges to zero for the
  subdivision procedure.
\end{theorem}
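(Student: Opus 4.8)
The plan is to bound the approximation error by a quantity that depends only on the size of the control tetrahedron and then invoke Lemma~\ref{lm-term} to drive that size to zero. Concretely, both the approximated curve $\r(t)$ and its associated cubic B\'{e}zier approximant $\p(s)$ lie inside the common control tetrahedron $\lozenge\r_0\r_1\r_2\r_3$ by property 3) of Theorem~\ref{divide-thm} and property 3) of Lemma~\ref{Bezier-lm}. Therefore the Hausdorff distance between the two curve segments is at most the diameter of this tetrahedron, which is controlled by $s(t)=\|\r_0\r_1\|+\|\r_1\r_2\|+\|\r_2\r_3\|$. First I would make the error function $e(t)$ commensurate with this geometric distance: since $e(\p,\r)$ measures the (algebraic) distance from points of $\r(t)$ to the implicit variety of $\p$, and $\p$ passes through the same tetrahedron, I would argue that $e(t)$ is bounded above by a constant multiple of the tetrahedron diameter on the segment in question.

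The key steps, in order, are as follows. First, fix a subsegment produced by the subdivision in step 3 of Algorithm~\ref{approx-alg}, and observe that by Proposition~\ref{sub-inside} its sub-tetrahedron is contained in the parent tetrahedron, so the approximant and the curve on this subsegment both live in the small sub-tetrahedron. Second, since subdivision is performed at the midpoint of arc length, repeated bisection forces the arc length of the subsegments to zero; by Lemma~\ref{lm-term} the edge lengths of the corresponding sub-tetrahedra then converge to zero. Third, combine the containment with the diameter bound: as the sub-tetrahedron shrinks, every point of $\r(t)$ on that subsegment is within a vanishing distance of $\p$, so the maximal algebraic error $e=\max_{t}e(t)$ over that subsegment tends to zero. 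Since the error function is a continuous (indeed rational) function of $t$ away from singularities, and the segment has no singular points by property 4) of Theorem~\ref{divide-thm}, this supremum is attained and is finite on each piece, so it genuinely decreases to zero under subdivision.

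The main obstacle I anticipate is relating the \emph{algebraic} error function $e(f,\r)$, built from the $\mu$-basis quadrics $f,g,h$ of Lemma~\ref{mu-imp}, to the \emph{geometric} distance controlled by the tetrahedron diameter. The normalization $f/\|\nabla f\|$ is a first-order (linearized) distance to the variety $\{f=0\}$, and one must ensure the gradients $\nabla f,\nabla g,\nabla h$ do not degenerate as the tetrahedron shrinks, so that $e(t)$ really behaves like a true distance rather than blowing up. I would handle this by noting that on a non-degenerate tetrahedron the three quadrics cut out $\p$ transversally and that, after appropriate scaling of the implicit equations, the linearized distance and the true distance are comparable up to a bounded factor on a neighborhood of the segment; the containment of both curves in the vanishing tetrahedron then guarantees $e(t)\to 0$ uniformly. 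A cleaner alternative, if the gradient control proves delicate, is to bypass the implicit formulation entirely and bound $e(t)$ directly by the tetrahedron diameter using that $\r(t)$ and $\p$ share the tetrahedron, relying only on the geometric containment and Lemma~\ref{lm-term}.
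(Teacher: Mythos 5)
Your proposal is correct and follows essentially the same route as the paper: both the curve and its cubic approximant lie in a common control tetrahedron, so Lemma~\ref{lm-term} (arc-length bisection drives the sub-tetrahedron edges to zero) forces the approximation error to zero. The only difference is that you explicitly flag the gap between the algebraic error function $e(t)$ and the geometric distance---a point the paper compresses into the single phrase ``the approximation error is controlled by the edges''---and your fallback of bounding the error directly by the tetrahedron diameter via the containment property is precisely the paper's (implicit) argument.
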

\begin{proof}
By Lemma~\ref{lm-term}, when the arc length of its subdivided curve
segment converges to zero, the edge of the sub-tetrahedron
converges to zero. Since the approximation error is controlled by
the edges, it converges to zero for the subdivision procedure.
\hfill\qed\end{proof}

  \begin{remark}
  In Algorithm~\ref{approx-alg}, the Step 3 is given to simplify the
  proof of the convergence. In fact, for less computation, we always implement the algorithm
  with the following step instead of $3$.
  \begin{enumerate}
    \item[$3'$.] Compute the approximate error $\bar\delta=e(t)$. If $\bar\delta<\delta$
    then output $\p(s)$.
     Otherwise, divide $\r(t)$ to two parts on its shoulder point $\s_\r$ repeat the approximation process for each subsegment.
  \end{enumerate}
   According to the proof of
  Lemma~\ref{lm-term}, the algorithm fails if a subsequence of $s_i$ does not converge to zero under shoulder
  point subdivision process,  and it never happened in our experiments. It is an interesting problem to prove the termination of this version of the algorithm.

\end{remark}

% ----------------------------------------------------------------
\section{Algorithms and experimental results}
After dividing the curve to segments by Algorithm~\ref{divide-alg},
we can approximate each curve segment by the shoulder approximation
method in Algorithm~\ref{approx-alg}. In this section, we give the
main approximation algorithm and the experimental results.

The global approximation is based on the local approximation and topology determination in the above sections. Some relationships of the approximate curve segments are considerable in the global view. In our approximation, the line edges in the topology graph are replaced by the associated cubic B\'{e}zier curve segments. To ensure the topological isotopy before and after the replacement, we restrict the cubic curve segments to have the appropriate topology based on the topology graph.

It is shown that an associated cubic B\'{e}zier curve segment are decided by its tetrahedron. Let $\lozenge\p_0^1\p_1^1\p_2^1\p_3^1$ and $\lozenge\p_0^2\p_1^2\p_2^2\p_3^2$ be two control tetrahedrons of two cubic B\'{e}zier curve segments $\p^1(s)$ and $\p^2(s)$. Then $\p^1(s)$ and $\p^2(s)$ can have no common points except for their endpoints.
In the further consideration, we give two cases for the problem. The first case is that $\p^1(s)$ and $\p^2(s)$ have only one common point being the endpoint and the same Frenet frames at this endpoint. And the other positional situations of $\p^1(s)$ and $\p^2(s)$ are included in the second case.

If all the pairs of cubic B\'{e}zier curves satisfy the second case, then to ensure that cubic curve segment does not bring in the unexpected knots while it replaces the line edge, one can give a sufficient condition that each cubic curve segment has no common points with the control tetrahedron of another curve segment except for the endpoint. This condition can be strengthened if we do not want to check the collision between a cubic curve segment and a tetrahedron. The condition can be that the two tetrahedrons have no inner points.
By Lemma~\ref{lm-term}, the condition can be satisfied by subdividing the curve segments.
Then the approximate curve have same topology with the given curve, since the approximate curve is controlled by the sequence of the tetrahedrons. Each tetrahedron has no common inner points with other tetrahedrons.

We then only need to discuss the pairs of cubic B\'{e}zier curves belong to the first case.
Assuming $\p_0^1=\p_0^2$, then $\p_1^2$ is on the radial $(1-\lambda)\p_0^1+\lambda\p_1^1,\lambda\ge 0$, and $\p_2^2$ is on the same side with $\p_2^1$ on the plane $\p_0^1\p_1^1\p_2^1$. According to the monotonicity of the B\'{e}zier curve in Lemma~\ref{Bezier-lm}, $\p^1(s)$ and $\p^2(s)$ can replace the their associated line edges without topology modification.

\begin{alg}\label{final-alg} Certified B-spline approximation with error
bound.\\
\textbf{Input}: A normal curve segment $\r(t),t\in [t_0,t_1]$ and a
positive error bound $\de$.\\
\textbf{Output}: A cubic B-spline $\p(s)$ such that the approximate
error between $\p(s)$ and $\r(t)$ is less than $\de$ and the
 approximate implicit spline for $\r(t)$.
\end{alg}
\begin{enumerate}
\item Divide the curve $\r(t)$ into quasi-cubic segments by
Algorithm~\ref{divide-alg}.
\item Check the topology conditions.
    \begin{enumerate}
      \item Check the intersection of any pair of cubic B\'{e}zier curves which have the same Frenet frame at the endpoint, divide them to two parts on their shoulder points respectively, if they have common points more the endpoints.
      \item Check the collision of any pair of tetrahedrons, divide them to two parts on their shoulder points respectively, if they have inner points.
    \end{enumerate}
\item For each segment, find the cubic B\'{e}zier curves which
 approximate the given curve segment with precision $\de$ by Algorithm~\ref{approx-alg}.
 \item Find the implicit form for the cubic B\'{e}zier curves with the $\mu$-basis method~\cite{Cox1998}.
 \item Convert the resulting rational cubic B\'{e}zier  curves to a
     rational B-spline with a proper knot selection as the method presented
     in~\cite{Piegl1997}.
  \end{enumerate}
\begin{remark}
In the process of topology conditions checking, we only need to check the collision of the sub-tetrahedrons subdivided from which are the intersected before the subdivision, since the sub-tetrahedrons are included in its father tetrahedrons. It means that the less and less pairs of tetrahedrons need to be checked in the subdivision process.
\end{remark}

\begin{theorem} From Algorithm~\ref{final-alg}, we obtain a piecewise $C^1$ continuous
approximate cubic B-spline curve which keeps the singular points,
inflection points, and torsion vanishing points of the approximated
parametric curve. At cusps, the approximate curve is $C^0$
continuous.
\end{theorem}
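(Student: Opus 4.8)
The plan is to reduce the global statement to a local analysis at the junction points of the B-spline, since by construction every feature of interest sits at a junction. First I would recall that, in Step~1 of Algorithm~\ref{final-alg} (via Algorithm~\ref{divide-alg}), all character points---cusps, self-intersections, inflection points and torsion vanishing points---are inserted as vertices of the extended vertex list and hence as endpoints of quasi-cubic segments, while every additional breakpoint introduced by conditions I)--IV) or by the collision test in Step~2 is a \emph{regular} (non-character) point. On each quasi-cubic segment $\r(t),t\in[t_0,t_1]$ the approximate piece is the associated cubic B\'{e}zier curve $\p(s)$ built on the tetrahedron $\lozenge\r_0\r_1\r_2\r_3$, so by parts 1) and 2) of Lemma~\ref{Bezier-lm} it interpolates the endpoints exactly, its one-sided tangent directions at the endpoints are those of $\r_0\r_1\parallel\al^{+}(t_0)$ and $\r_2\r_3\parallel\al^{-}(t_1)$, and its endpoint osculating planes are $\r_0\r_1\r_2=O^{+}(t_0)$ and $\r_1\r_2\r_3=O^{-}(t_1)$. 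Thus at every junction the two abutting B\'{e}zier pieces share the exact point and the correct one-sided Frenet data.

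Second I would establish the continuity claims by splitting junctions into the non-cusp and cusp cases. At a non-cusp junction $t_i$ (a regular point, an inflection point, or a torsion vanishing point) one has $\al^{-}(t_i)=\al^{+}(t_i)$, so the terminal edge $\r_2\r_3$ of the left piece and the initial edge $\r_0\r_1$ of the right piece are collinear; this gives $G^1$ contact, and the proper knot selection of Step~5 (following~\cite{Piegl1997}), in the spirit of Proposition~\ref{order}, fixes the parametrization so that the join is parametrically $C^1$. Hence the spline is piecewise $C^1$ across all such junctions. At a cusp, instead, $\r'(t_i)=0$ while $\al^{-}(t_i)\neq\al^{+}(t_i)$, so the two incident edges are \emph{not} collinear and the tangent direction jumps: the spline is $C^0$ but not $C^1$ there, exactly reproducing the cusp.

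Third I would verify that the character points are genuinely kept and that no spurious ones are created. By part 4) of Lemma~\ref{Bezier-lm} each B\'{e}zier piece satisfies $\ka\neq0,\tau\neq0$ on its whole interval, so no character point can appear in the interior of a segment; consequently every character point of the approximate curve lies at a junction, and each original character point is interpolated there exactly with the matching one-sided tangent, osculating plane and Frenet frame. At an inflection point the left and right pieces carry the osculating planes $O^{-}(t_i)$ and $O^{+}(t_i)$ of $\r$, so the binormal discontinuity (and likewise the stationary osculating plane at a torsion vanishing point) is reproduced; cusps are reproduced as the $C^0$ corners above; and the self-intersections are retained because each branch passes through the exact node and, by the collision test of Step~2---whose termination is guaranteed by Lemma~\ref{lm-term}---the nested, pairwise non-overlapping control tetrahedra force the approximate curve to be isotopic to $\r(t)$. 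Combining the three steps yields the theorem.

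I expect the main obstacle to be the third step, specifically making precise in what sense the inflection and torsion vanishing points are \emph{kept}: each B\'{e}zier piece has nowhere-vanishing curvature and torsion, so the feature cannot be reproduced as a pointwise zero of $\ka$ or $\tau$ but only through the matched one-sided osculating data together with the exclusion of interior character points. The care needed is therefore to argue that this one-sided matching, combined with the global non-collision of tetrahedra, certifies that the approximate curve retains the same distinguished points with the same left/right Frenet behavior and the same topology, rather than to claim an exact vanishing at those junctions.
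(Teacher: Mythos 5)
Your proposal is correct and follows essentially the same route as the paper's own proof: continuity is handled by endpoint Hermite matching plus the knot selection of~\cite{Piegl1997} (with cusps giving only $C^0$ joins because the one-sided tangents differ), character points are preserved because they are segmenting points where the left/right Frenet frames are interpolated exactly, and topology is preserved via the tetrahedron collision conditions and the convergence result. Your third step is somewhat more explicit than the paper---invoking part 4) of Lemma~\ref{Bezier-lm} to rule out spurious character points in segment interiors, and spelling out that inflections and torsion vanishing points are ``kept'' only through matched one-sided Frenet data rather than as zeros of $\ka$ or $\tau$---but this sharpens rather than changes the paper's argument.
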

\begin{proof}
Algorithm~\ref{final-alg} gives the $G^1$ cubic B\'{e}zier spline since
it is constructed as the hermite interpolation of the original
curve, if the character points are not cusps. Then $C^1$ continuity
can be ensured from the conversion from the B\'{e}zier spline with a
proper knot selection~\cite{Piegl1997}. The singular points
of the curve are treated as segmenting points. Since at the
segmenting points, the left and right Frenet frames are preserved,
the origin curve and the approximate curve have the same singular
points. Since the cubic spline introduces no more singular points,
the algorithm keeps the singular points. At a cusp, its left (right)
tangent and osculating plane are kept according to
Algorithm~\ref{divide-alg}, and the approximate curve is then only
$C^0$ continuous.

The character points include the vertices of the topology graph. The topology conditions ensure that the topology is persevered while the topolgy line edges are replaced by the cubic B\'{e}zier curve segments.
According to Theorem~\ref{convergent}, the approximate curve from
Algorithm~\ref{final-alg} converges to the approximated curve
and they have the same topology.

The left and right Frenet frames of the approximate curves are the
same as that of the approximated curve at the character points,
which means that the principal normal vector and the osculating
plane are both kept. Then the principal normal vector changes its
direction at the inflection point. Similarly, the curve does not
pass through the osculating plane at the torsion vanishing point.
\hfill\qed\end{proof}

Finally, we give several examples to illustrate the algorithm.
\begin{example}
The space curve  $\r_1(t)$ from Example 6 in~\cite{Alcazar2009} has
a singular point $(0,0,0)$ at $t=\pm 1,\pm \infty$, where
  $$\r_1(t)=\left({\frac {1-t^{2}}{ \left( t^{2}+1 \right) ^{2}}},{\frac {t \left(
1-t^{2} \right) }{ \left( t^{2}+1 \right) ^{2}}},{\frac {t^{2}
 \left( 1-t^{2} \right) }{ \left( t^{2}+1 \right) ^{4}}}\right).$$
The curve segment $\r_1(t),t\in [-2,2]$ and its approximate spline
curve $\p(s)$ are shown in Figure~\ref{fig3}, they are shown in the same figure for comparison and the tetrahedron sequence is also given in Figure~\ref{fig31}, the numerical error
$e(t)$ is shown in Figure~\ref{fig30}.
 \begin{figure}[!htp]
 \centering
 \includegraphics[width=0.35\textwidth]{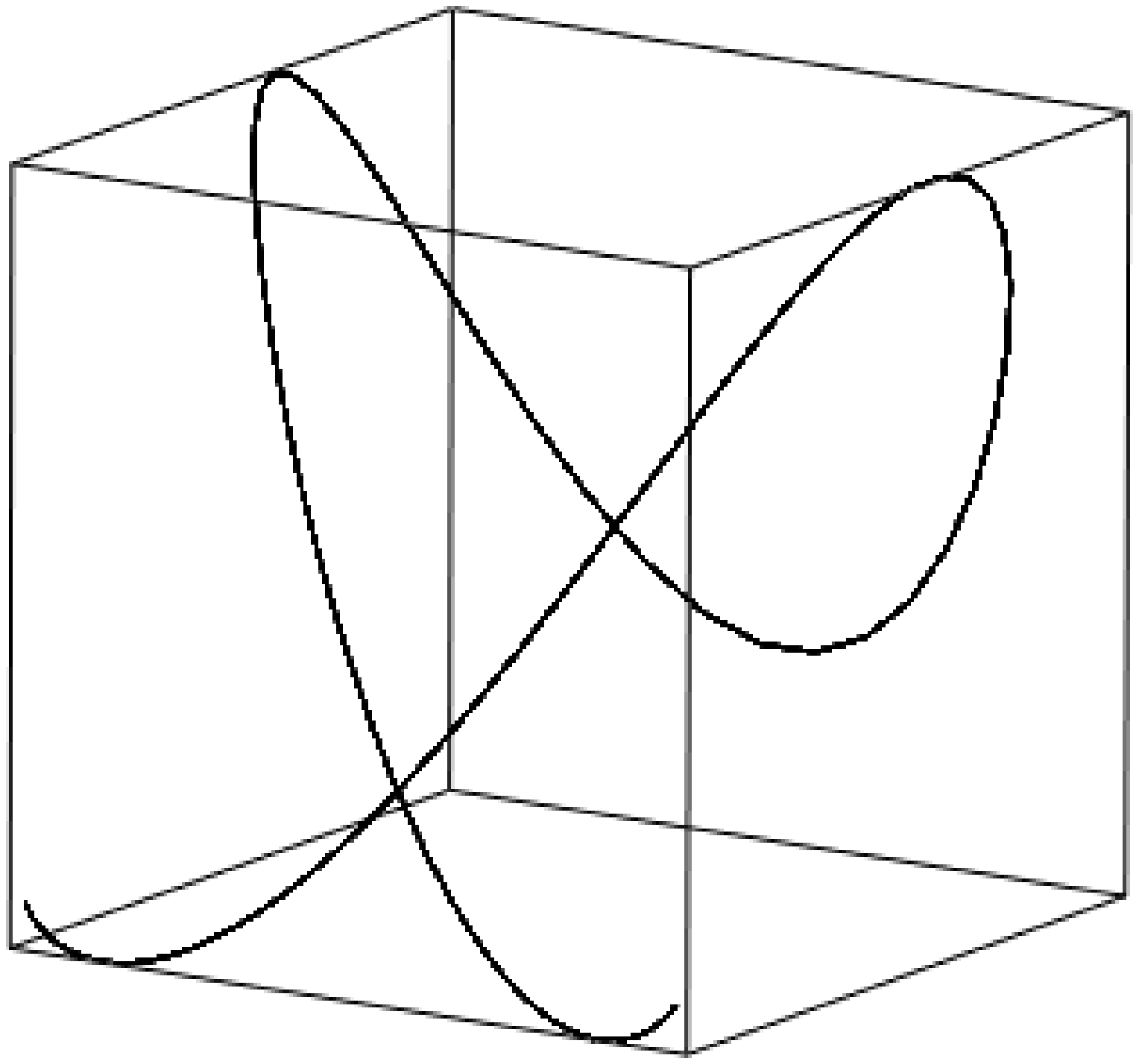}
 \includegraphics[width=0.35\textwidth]{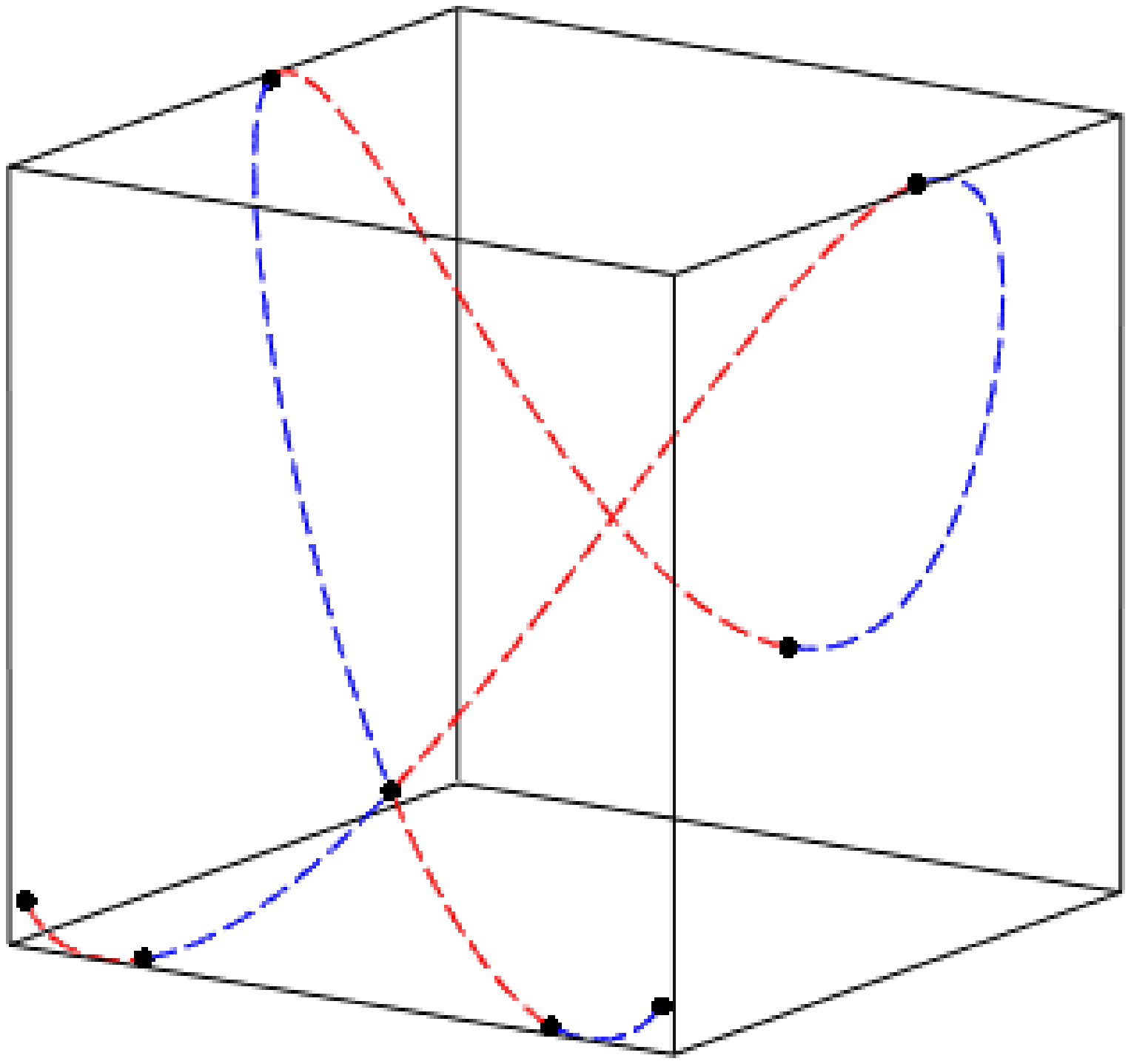}
 \caption{$\r_1(t)$ and $\p(s)$}
 \label{fig3}
\end{figure}

 \begin{figure}[!h]
 \centering
  \includegraphics[width=0.36\textwidth]{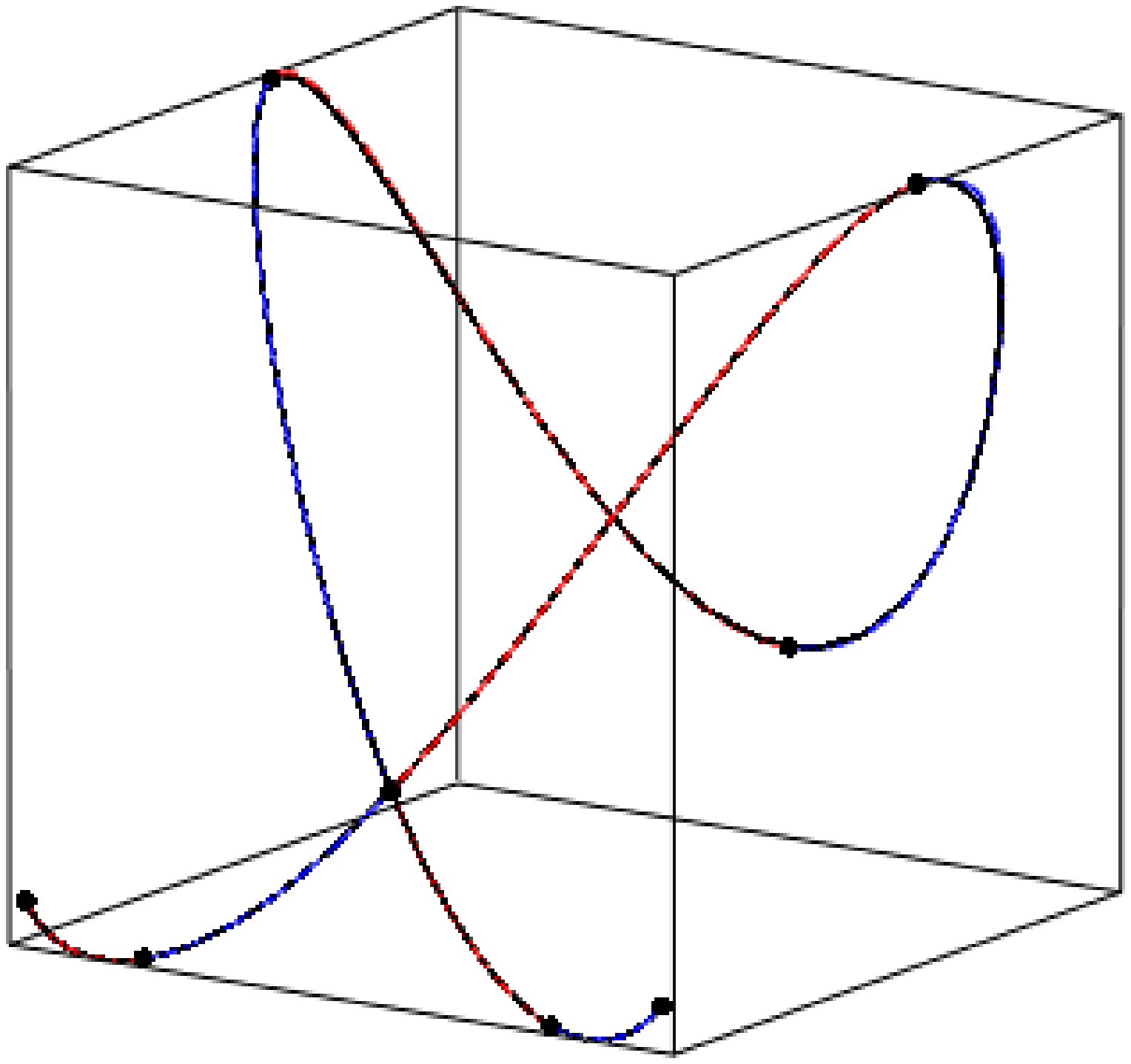}
  \includegraphics[width=0.36\textwidth]{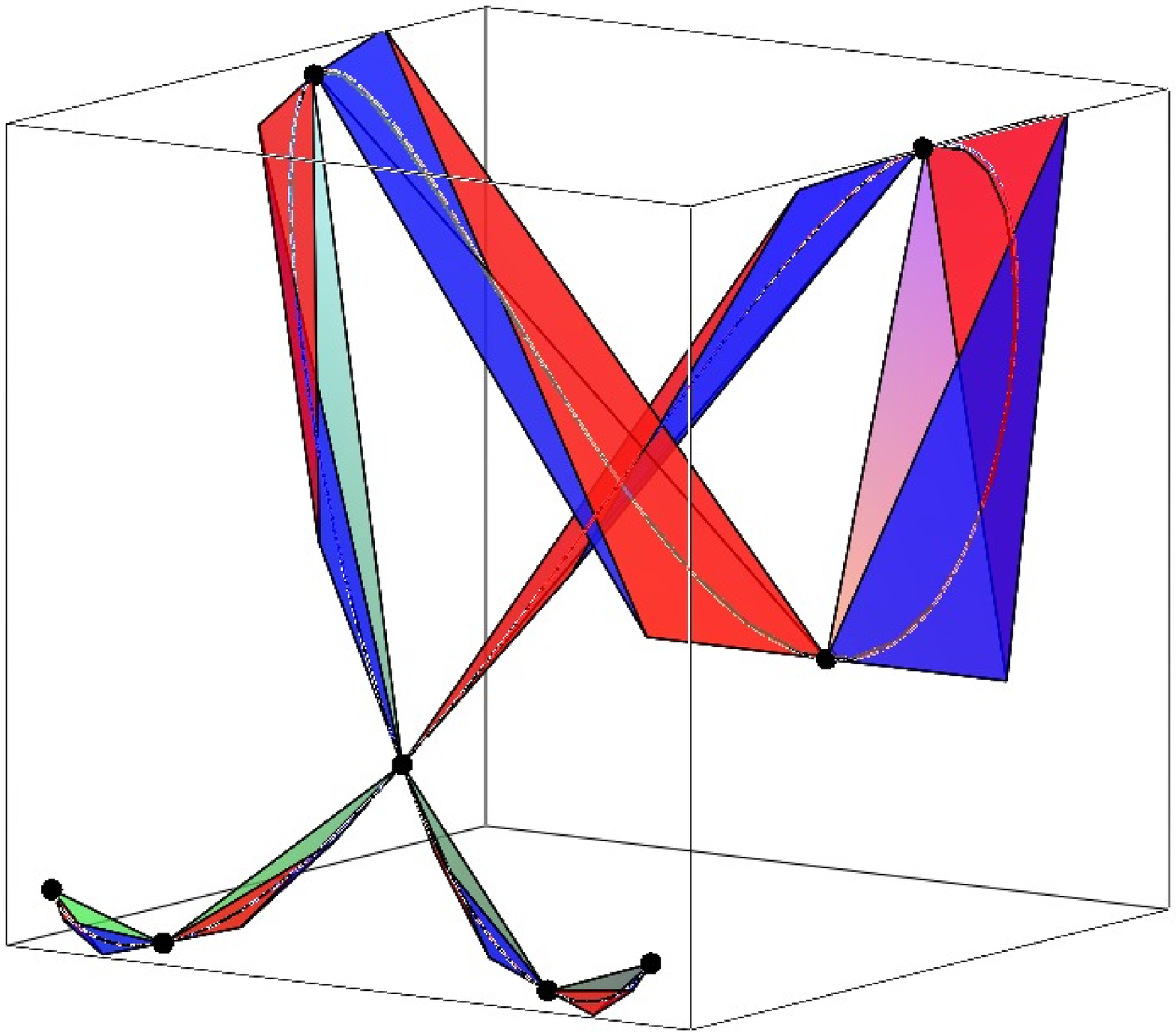}
 \caption{$\r_1(t)$ v.s. $\p(s)$ and control tetrahedron}
 \label{fig31}
\end{figure}

 \begin{figure}[!htp]
 \centering
 \includegraphics[width=0.7\textwidth]{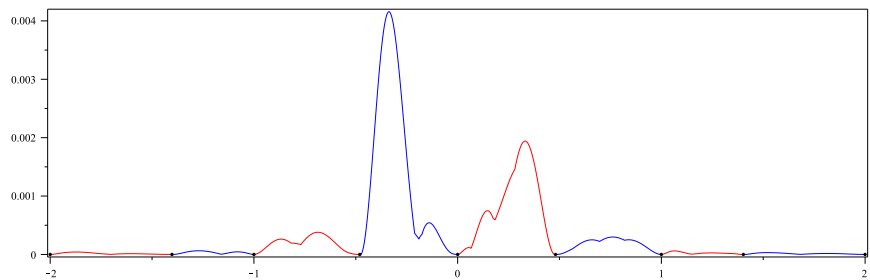}
 \caption{Numerical error for $\r_1$ with $m=300$}
 \label{fig30}
\end{figure}
As we know, the point $(0,0,0)$ is a characteristic point from the topology determining. It is preserved in $\p(s)$ and $\p(s)$ is $C^1$ at this point.
Each corresponding segment of $\p(s)$ and $\r_1(t)$ is interpolated with the Frenet Frames at the endpoints.
One can find that $\r_1(t),t\in [-\infty,+\infty]$ is an asymmetric
space trifolium curve. To approximate the other two parts of
$t\in[-\infty,-2]$ and $t\in[2,+\infty]$, we can transform $t=\pm
\infty$ to $t=0$ by a reparametrization as $t'= 1/t$. Then
approximating $\r_1(t'),t' \in[-1/2,1/2]$ and combining the former
spline segment, we can get the approximation of the whole trifolium
curve.
\end{example}

\begin{example} Two more space curves are given in this example.
$\r_2(t)$ has a complex singular point and $\r_3(t)$ is a random
curve with degree nine.
% where
$$\r_2(t)=\left({\frac {{t}^{2} \left( t-1 \right) ^{2}}{ \left( 1+{t}^{2} \right) ^{
2}}},{\frac {t \left( t-1 \right) ^{3}}{1+{t}^{2}}},{\frac {t \left( t
-1 \right) ^{4}}{1+{t}^{2}}}.
\right), t\in [-1/16,3/2]$$
$$\begin{array}{lcl}\r_3(t)&=&\left( {\frac {t
\left( 1181\,{t}^{8}-1878\,{t}^{7}-1236\,{t}^{6}+1960\,{t}^
{5}+2058\,{t}^{4}-2688\,{t}^{3}+532\,{t}^{2}-9+72\,t \right) }{-2+9\,t
-72\,{t}^{2}+308\,{t}^{3}-840\,{t}^{4}+1218\,{t}^{5}-952\,{t}^{6}+588
\,{t}^{7}-408\,{t}^{8}+149\,{t}^{9}}}, \right.\\
  &&
-{\frac {t \left( -1686\,{t}^{7}
+287\,{t}^{8}+3252\,{t}^{6}-2464\,{t}^{5}+462\,{t}^{4}+168\,{t}^{3}-28
\,{t}^{2}+9 \right) }{-2+9\,t-72\,{t}^{2}+308\,{t}^{3}-840\,{t}^{4}+
1218\,{t}^{5}-952\,{t}^{6}+588\,{t}^{7}-408\,{t}^{8}+149\,{t}^{9}}},\\
 & & \left.
-
\,{\frac {4{t}^{2} \left( 263\,{t}^{7}-924\,{t}^{6}+1338\,{t}^{5}-1190
\,{t}^{4}+861\,{t}^{3}-483\,{t}^{2}+154\,t-18 \right) }{-2+9\,t-72\,{t
}^{2}+308\,{t}^{3}-840\,{t}^{4}+1218\,{t}^{5}-952\,{t}^{6}+588\,{t}^{7
}-408\,{t}^{8}+149\,{t}^{9}}}
\right), t\in [0,1]\end{array}
$$

The approximated curves, approximate spline curves, and the
numerical errors are shown in the following figures
(Figures~\ref{fig4}, \ref{fig40}, \ref{fig41}). In $\r_2(t)$, $(0,0,0)$ is a self-intersected point with $t=0,1$, it is also a cusp point at $t=1$. This point is preserved in our approximate B-spline curve $\p(s)$. Furthermore, the limited tangent directions of the cusp are also preserved. $\p(s)$ is $C^1$ or $C^0$ at $(0,0,0)$ when $\p(s)$ passes through $(0,0,0)$ as a self-intersected or a cusp point respectively.
The approximation
information for curves $\r_1,\r_2$, and $\r_3$ is listed
in~Table~\ref{table}.
 \begin{figure}[!h]
 \centering
 \includegraphics[width=0.35\textwidth]{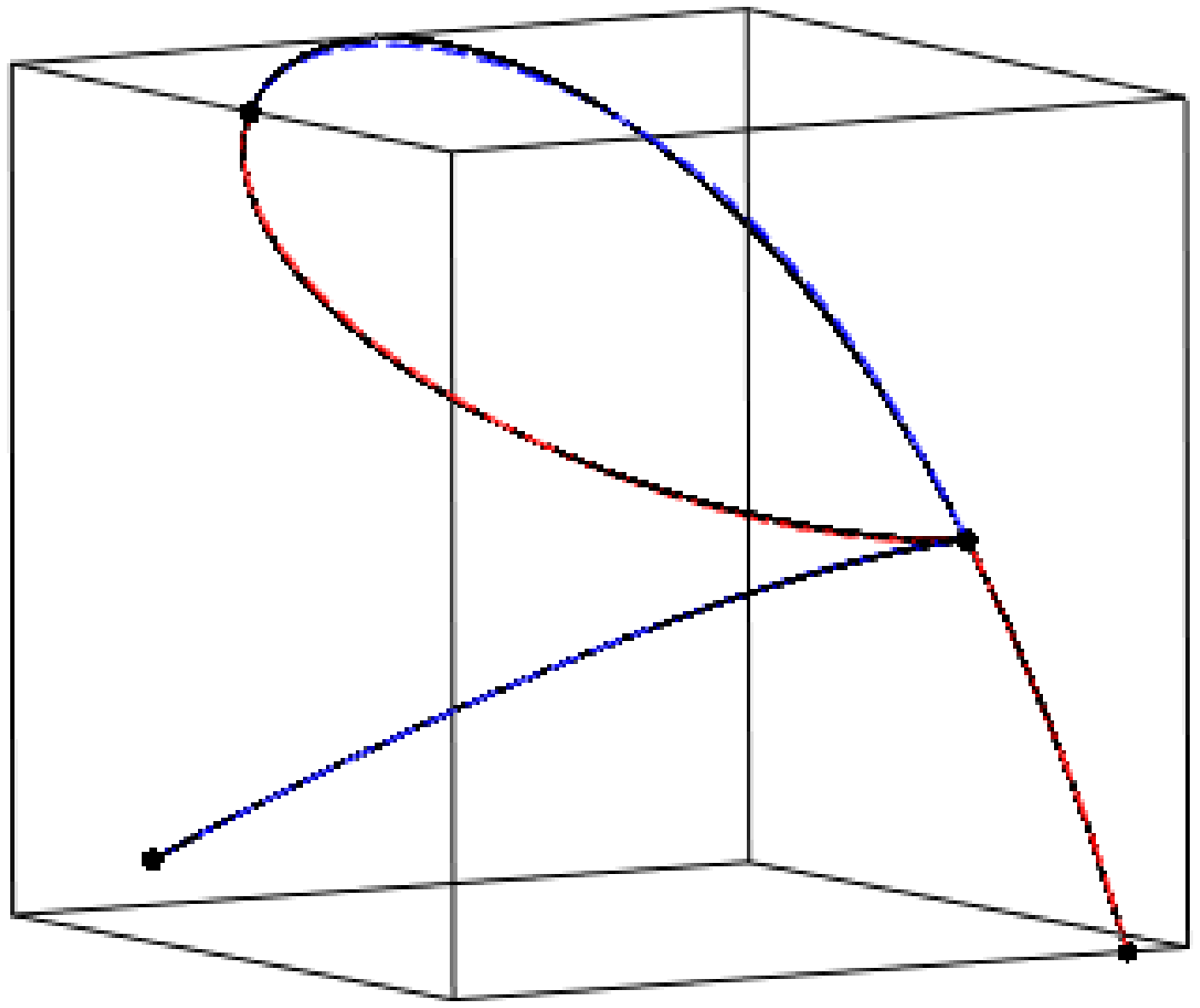}\qquad
 \includegraphics[width=0.34\textwidth]{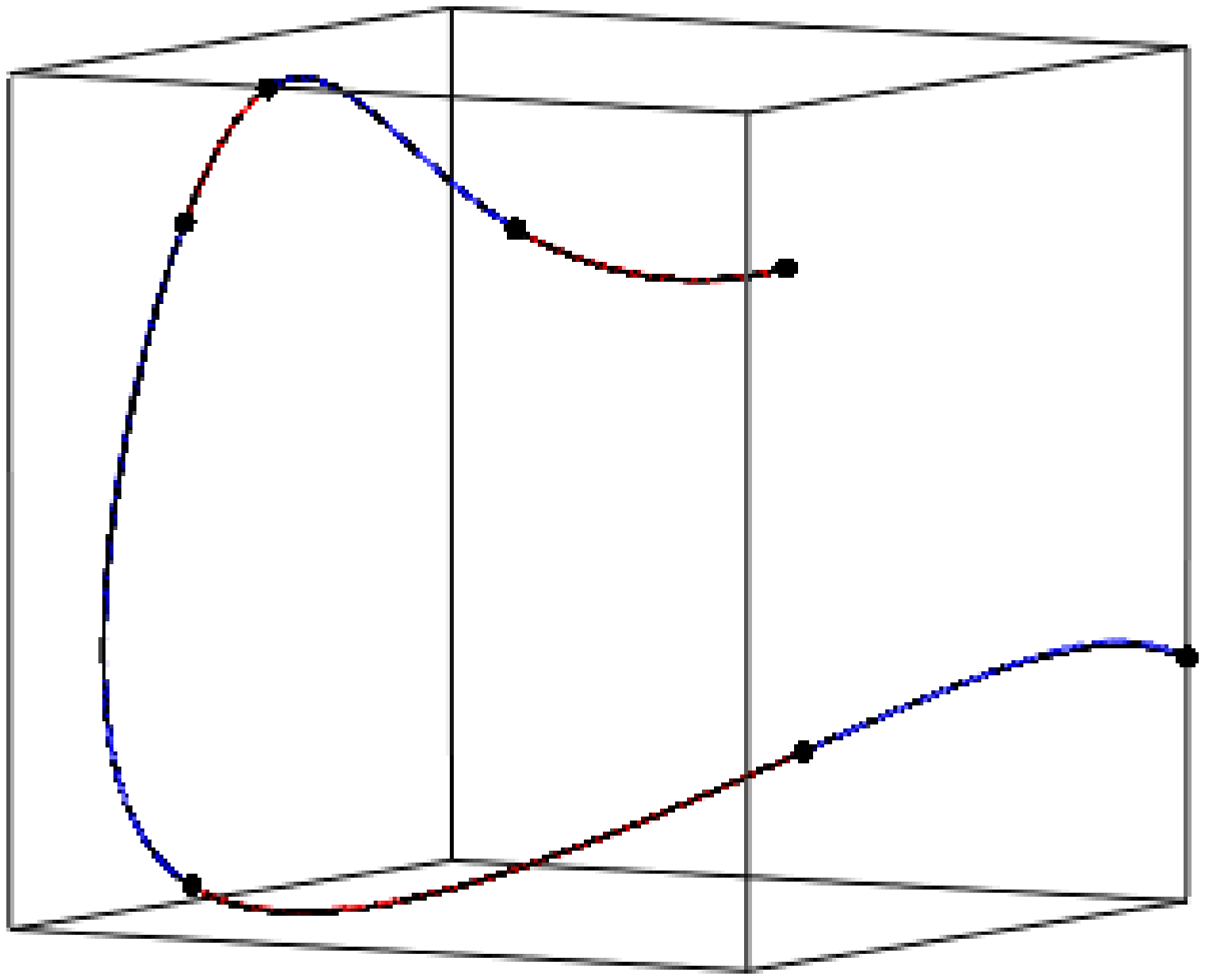}

 \caption{$\r_2(t)$ v.s. $\p_2(s)$ and $\r_3(t)$ v.s. $\p_3(s)$}
 \label{fig4}
\end{figure}

 \begin{figure}[!h]
 \centering
 \includegraphics[width=0.70\textwidth]{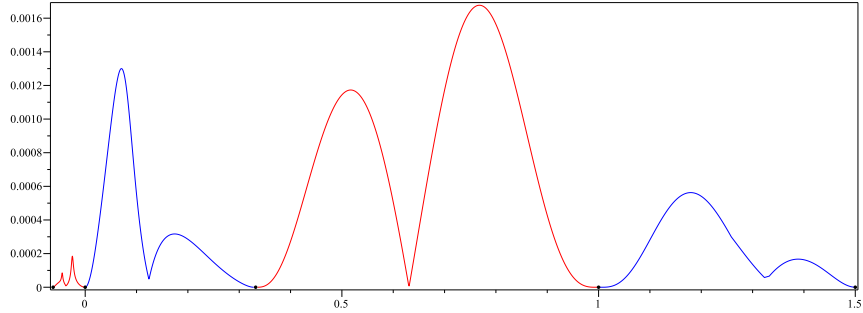}
 \caption{Numerical error for $\r_2$}
 \label{fig40}
\end{figure}

 \begin{figure}[!h]
 \centering
 \includegraphics[width=0.70\textwidth]{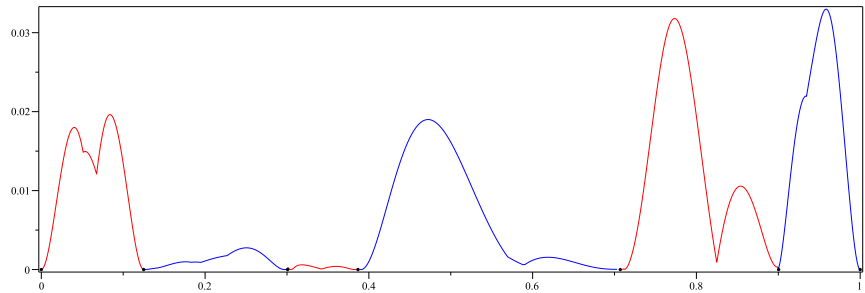}
 \caption{Numerical error for $\r_3$}
 \label{fig41}
\end{figure}

\begin{table}[ht!]
\centering
\begin{tabular}{|c|c|c|c|c|}
\hline curve & degree & error & segments & interval \\
\hline\hline $\r_1$ & $8$ & $0.004157$ & $8$ & $[-2,2]$ \\[0.1cm]
\hline $\r_2$ & $5$ & $0.0001677$ & $4$ &
$[-\frac{1}{16},\frac{3}{2}]$ \\[0.1cm]
\hline $\r_3$ & $9$ & $0.03298$ & $6$ & $[0,1]$ \\
\hline
\end{tabular}\caption{Numerical Approximation}
\label{table}
\end{table}

\end{example}
% ----------------------------------------------------------------
\section{Conclusion and further work}
We present an algorithm to construct a rational cubic B-spline
approximation for a space parametric curve.
The main purpose of the work is to present an isotopic approximation method
which preserves the geometric features of the original curve.
The approximated curve is divided into quasi-cubic segments which have
similar properties to those of a cubic B\'{e}zier curve. Sufficient
conditions are proposed for a divided segment having the expected
properties and then its approximate B\'{e}zier spline is naturally
constructed. Based on these properties, the shoulder point
approximate algorithm is presented and it is proved to be
convergent. An approximate implicitization can be found by the
$\mu$-basis method. The method is applicable for any parametric
space curve in theory, although the given conditions are more
difficult to compute when the parametric expression is not in
rational form.

The intersection curve of a parametric surface and an implicit
surface is another important type of space curves. The curve can be
regarded as parametric form with two parameters and a constraint
function for them. As a further work,  we will study the
approximation of this type of space curve.

\section*{Acknowledgements}
This work is partially supported by National Natural Science Foundation of
China under Grant 10901163, 11101411, 60821002,
 a National Key Basic Research Project of China (2011CB302400) and a China Postdoctoral Science Foundation.
The authors also wish to
thank the anonymous reviewers for their helpful comments and
suggestions.

% ----------------------------------------------------------------

\end{document}